\theoremstyle{plain}
  \newtheorem{theorem}[subsection]{Theorem}
  \newtheorem{proposition}[subsection]{Proposition}
  \newtheorem{lemma}[subsection]{Lemma}
  \newtheorem{corollary}[subsection]{Corollary}
\def\bphi{{\boldsymbol{\phi} }}
\def\bpsi{{\boldsymbol{\psi} }}
\def\F{{\mathcal F}}
\def\R{{\mathbb{R}}}
\def\C{{\mathbb{C}}}
\def\I{{\mathcal I}}
\def\A{{\mathcal A}}
\def\B{{\mathcal B}}
\def\P{{\mathcal P}}
\def\Q{{\mathcal Q}}
\def\H{{\mathcal H}}
\def\N{{\mathcal N}}
\def\D{{\mathcal D}}
\def\M{{\mathcal M}}
\def\E{{\mathcal E}}
\def\V{{\mathcal V}}
\def\ch{\mbox{ch} (k)}
\def\chbb{\mbox{chb} (k)}
\def\trigh{\mbox{trigh} (k)}
\def\pb{\overline p}
\def\sb{\overline s}
\def\sab{\overline{s_a}}
\def\seb{\overline{s_e}}
\def\sh{\mbox{sh} (k)}
\def\shh{\mbox{sh}}
\def\chh{\mbox{ch}}
\def\shhb{\overline {\mbox{sh}}}
\def\th{\mbox{th} (k)}
\def\chhb{\overline {\mbox{ch}}}
\def\shbb{\mbox{shb} (k)}
\def\chb{\overline{\mbox{ch}  (k)}}
\def\shb{\overline{\mbox{sh}(k)}}
\def\z{\zeta}
\def\zb{\overline{\zeta}}
\def\FF{{\mathbb{F}}}
\theoremstyle{remark}
  \newtheorem{remark}[subsection]{Remark}
\theoremstyle{definition}
\begin{document}

\def\vac{{\big\vert 0\big>}}
\def\fpsi{{\big\vert\psi\big>}}
\def\bphi{{\boldsymbol{\phi} }}
\def\bpsi{{\boldsymbol{\psi} }}
\def\F{{\mathcal F}}
\def\R{{\mathbb{R}}}
\def\C{{\mathbb{C}}}
\def\I{{\mathcal I}}
\def\Q{{\mathcal Q}}
\def\ch{\mbox{\rm ch} (k)}
\def\cht{\mbox{\rm ch} (2k)}
\def\chbb{\mbox{\rm chb} (k)}
\def\trigh{\mbox{\rm trigh} (k)}
\def\p{\mbox{p} (k)}
\def\sh{\mbox{\rm sh} (k)}
\def\sht{\mbox{\rm sh} (2k)}
\def\shh{\mbox{\rm sh}}
\def\chh{\mbox{ \rm ch}}
\def\th{\mbox{\rm th} (k)}
\def\thb{\overline{\mbox{\rm th} (k)}}
\def\shhb{\overline {\mbox{\rm sh}}}
\def\chhb{\overline {\mbox{\rm ch}}}
\def\shbb{\mbox{\rm shb} (k)}
\def\chbt{\overline{\mbox{\rm ch}  (2k)}}
\def\shb{\overline{\mbox{\rm sh}(k)}}
\def\shbt{\overline{\mbox{\rm sh}(2k)}}
\def\z{\zeta}
\def\zb{\overline{\zeta}}
\def\FF{{\mathbb{F}}}
\def\S{{\bf S}}
\def\W{{\bf W}}

 \title[Pair excitations,I]%
{ Pair excitations and the mean field approximation of interacting Bosons, I}

\author{M. Grillakis}
\address{University of Maryland, College Park}
\email{mng@math.umd.edu}

\author{M. Machedon}
\address{University of Maryland, College Park}
\email{mxm@math.umd.edu}

\subjclass{}
\keywords{}
\date{}
\dedicatory{}
\commby{}
\maketitle

\begin{abstract}
In our previous work \cite{GMM1},\cite{GMM2} we introduced a correction to the mean field approximation of interacting Bosons.
This correction describes the evolution of  pairs of particles that leave the condensate and subsequently evolve on
a background formed by the condensate.  In \cite{GMM2} we carried out the analysis assuming that the interactions are
 independent of the number of particles $N$.
Here we consider the
case of stronger interactions. We offer a new transparent derivation for the evolution of pair excitations.
Indeed, we obtain a pair of linear equations describing their evolution.
Furthermore, we obtain apriory estimates independent of the number of particles and use these to
compare the exact with the approximate dynamics.
\end{abstract}

\section{Introduction}
\label{sec:intro}

The purpose of our present work is to investigate certain aspects of the evolution of a large number of indistinguishable
Quantum particles (Bosons) under binary interactions.
If we call $\psi(t,x_{1},x_{2}\ldots x_{N})$ the wavefunction describing the $N$ particles with $x_{j}\in \mathbb{R}^{3}$ the
coordinates for $j=1,2\ldots N$, then $\psi$ satisfies an evolution equation of the form
\begin{equation}\label{classevol}
(1/i)\partial_{t}\psi =\left(H_{1} -N^{-1}V\right)\psi
\end{equation}
where $H_{1}$ is a sum of the from
$
\sum_{j=1}^{N}\Delta_{x_{j}}
$.
The term $V$ models two body interactions of the following general type
\begin{equation}\label{binary}
V:=(1/2)\sum_{x_{j}\not= x_{k}}N^{3\beta}v\big(N^{\beta}(x_{j}-x_{k})\big)
\quad ;\quad 0\leq \beta\leq 1
\end{equation}
where $v \in C^1_0$  is non-negative, spherically symmetric, and decreasing.

In the equation \eqref{classevol} above we consider non-relativistic particles and set $h=2m=1$ for simplicity.
Here and for the rest of this paper we denote $v_{N}:=N^{3\beta}v\big(N^{\beta}\cdot\big)$.
The fact that we consider Bosons means that the wavefunction is invariant under all permutations of the indices $j=1,2\ldots N$ and one would like to
solve the evolution equation under some initial condition at, say $t=0$, $\psi(0,x_{1},x_{2}\ldots x_{n}):=\psi_{0}(x_{1},x_{2}\ldots x_{N})$.
Presently we are interested in the evolution of factorized (or approximately factorized) initial data i.e. we would like to consider
special initial data of the form
\begin{equation}\label{tensorproduct}
\psi_{0}:=\prod_{j=1}^{N}\phi_{0}(x_{j})\ .
\end{equation}
The evolution of \eqref{classevol} with initial data \eqref{tensorproduct} is quite complicated for $N$ large
and one would like to have an effective approximate description of the evolution.
The motivation for this type of problem comes from Bose-Einstein condensation where one considers a large number of identical
(indistinguishable)
particles in a trap. Einstein following ideas of Bose, observed that nonineracting particles in a box undergo a phase transition
at a critical  temperature proportional to ${\rm density}^{2/3}$, so that below this temperature a macroscopic number of particles occupy
the ground state, furthermore at zero temperature all particles condense to the ground state.
It is natural and more realistic to consider interacting particles. Following ideas of Landau a heuristic theory based on the idea
of the mean-field approximation was developed by Gross and Pitaevski \cite{pitaevskii61}, \cite{gross61}.
On a more fundamental level, the problem of a weakly interacting Quantum gas
was taken up in the pioneering work of Lee, Huang and Yang as well as Dyson) \cite{Lee-H-Y}, \cite{D}.

More recent theoretical developments are due to Lieb, Solovej, Yngvanson, Seiringer et. al. see \cite{lieb05} and references therein. In particular, Theorems 6.1,  7.1 in \cite{lieb05}, as well as Appendix C in \cite{E-S-Y1}  strongly suggest that
 that the ground state is well approximated by a tensor product  \eqref{tensorproduct}
where $\phi_{0}$ describes a mean field approximation.

 Let us point out that we can in fact treat more general initial conditions corresponding to the $N$th component of
$e^{\sqrt N A(\phi)}e^{B(k)} \vac$, see section \eqref{sec:Fock-deriv}.

Experimental confirmation of Bose-Einstein condensates was finally achieved \cite{andersonetal95},
using alkali atoms. The reason for the use of alkali atoms is the fact that they contain a single valence electron in the outermost $s$-orbital
(for example 5-$s$ for Rubidium). The other contributions to the total spin comes from the nuclear spin.
If the nuclear isotope is one with odd number of protons and neutrons it will have a net half integer spin. For example
Rubidium 87 has $S=3/2$ from the nucleus. The total spin takes the values $S=1$ or $S=2$. If we prepare the sample so that only
one of these states is present then this will be a gas of identical Bosons. If two different states are present then we should consider it
as a mixture of two different gasses.
Since alkali atoms are complicated composite particles their interactions are not known explicitly, which means that the potential $v$
in
\eqref{binary} is not explicitly known,
moreover one can treat the atoms as Bosons only for a sufficiently dilute gas. At shorter distances the internal structure of the
atoms should be taken into account.   Here
we consider a sufficiently dilute Boson gas and we make the reasonable assumption that the interactions are repulsive i.e.
$v\geq 0$ and that they are short range in the sense that
$\int v(x)dx <\infty$. It is clear from the above comments that one should consider particles with spin. The present framework can be
generalized in this case in a straightforward manner, however in the name of simplicity we forgo this generalization.

Let us comment on the scaling present in the form of binary interactions. The parameter $\beta$ describes the strength of particle interactions.
It is a reasonable (but not obvious) idea to assume that the evolution of \eqref{classevol} is approximated by a tensor product i.e.
\begin{equation}\label{meanfield1}
\psi(t)\approx\prod_{j=1}^{N}\phi(t,x_{j})
\end{equation}
and the issues are, first to explain the nature of the approximation described in \eqref{meanfield1} and second to derive an evolution equation for $\phi(t,x)$ consistent
with the dynamics of \eqref{classevol}.
On the second question,
the general idea is that the evolution of the mean field $\phi(t,x)$  satisfies an equation of the form
\begin{equation}\label{hartreeGP}
i\partial_{t}\phi =\Delta\phi -g_{\beta}\big(\vert\phi\vert^{2}\big)\phi\ ,
\end{equation}
and the nonlinear term $g_{\beta}(\vert\phi\vert^{2})$ depends on $\beta$ in the following manner,
\begin{align*}
&g_{0}(\vert\phi\vert^{2})=\int dy\left\{v(x-y)\vert\phi(t,y)\vert^{2}\right\}
\\
&g_{\beta}(\vert\phi\vert^{2})=\left(\int v(y)dy\right)\vert\phi(t,x)\vert^{2}\quad ;\quad 0<\beta <1
\\
&g_{1}(\vert\phi\vert^{2})=8\pi a\vert\phi(t,x)\vert^{2}\ ,
\end{align*}
where $a$ appearing in $g_{1}$ is the scattering length corresponding to the potential $v$.
In the case $\beta =0$ one obtains a Hartree type evolution for the mean field and considerations similar to our present work where
taken up in \cite{GMM1},\cite{GMM2}.
The case $\beta =1$ is probably the most interesting. In this case the scaling is critical in the sense
that particles develop short scale correlations which in the limit $N\to\infty$ lead to the
appearance of the scattering length in the equation. A heuristic argument for this is well known in the Physics community,
however the explanation on how the scattering length emerges from the $N$ body dynamics was recently given in the work of Erd\"os, Schlein and Yau \cite{E-S-Y3},\cite{E-S-Y4}.

Our aim is to introduce pair excitations as a correction to the mean field approximation. This goal is achieved by introducing a kernel
$k(t,x,y)$ which describes pair excitations and one would like to derive an evolution equation for $k$ consistent with the
$N$ body dynamics,
which means that we should be able to obtain
estimates comparing the exact with the approximate dynamics.
The general idea of the approximation can be described in the following manner. Two particles leave the condensate and
form a pair $v_{N}(x_{1}-x_{2})\phi(x_{1})\phi(x_{2})$ which in turn drives the evolution of pair interactions.
It turns out that a natural way to introduce pair excitations as a correction to the mean field is via a Fock space formalism which we will outline in the next section.
Let us comment here on the nature of our approximation. The mean field approximation \eqref{meanfield1} is a simple description of the $N$-body wavefunction,
however the nature of the approximation is quite involved and uses the BBGKY hierarchy and its limit as $N\to\infty$
as shown by Elgart, Erd\"os, Schlein and Yau  \cite{E-E-S-Y1, E-Y1, E-S-Y1,  E-S-Y2, E-S-Y4, E-S-Y3}.
See  the approach of \cite{K-MMM}, \cite{KSS}, \cite{CP} based on space-time estimates. We also mention
the related case of 3 body interactions \cite{XC1}, and switchable quadratic traps \cite{XC2, XC3}.

Moreover the approximation does not track the
exact dynamics, rather its true usefullness lies in the fact that it can (approximately) track observables. In contrast our approximation is more complicated but it tracks
the exact dynamics in Fock space norm. As a matter of fact a heuristic explanation of our approximation runs as follows:
The $N$-body wave function consists of three parts,  particles that live in the condensate, bound pairs and particles that decayed after forming pairs.
Controlling the number of particles that formed pairs leads to another justification of the mean field approximation.
We will not pursue this line of inquiry here, however the approximation can be readily used to estimate observables.
There are two main points in our present work. First we have a new transparent derivation of the evolution equation of
pair excitations, indeed we derive a new system of linear equations. Second we obtain apriory estimates for the pair excitations kernel which
are independent of $N$ and this, in turn, allows us to estimate the difference between the exact and approximate solutions provided that
$\beta$ is sufficiently small ($\beta < \frac{1}{6}$).

Our work was inspired by \cite{Rod-S} as well as \cite{wuI}. Previous works directly related to the present are \cite{G-V} and \cite{hepp}. See also \cite{Bo} and
\cite{margetisII}.

See Theorem \eqref{mainthm} below for the precise statement of our main result.

The paper is organized as follows. In section 2 we develop the Fock space formalism which is necessary for our computations and derive
the evolution equations for the pair excitation kernel.
In section 2 and 3 we we derive the apriory estimates for the mean-field and for the pair excitation kernel.
In section 4 we show how this information can be implemented in order to compare the exact solution to our approximation.

\section{Fock space formalism and the new derivation}
\label{sec:Fock-deriv}

In this section we introduce the Fock space formalism and the Hamiltonian evolution in symmetric Fock space.
$\FF$ is a Hilbert space consisting of vectors of the form
\begin{equation*}
\big\vert\psi\big>=\big(\psi_{0}\ ,\ \psi_{1}(x_{1})\ ,\ \psi_{2}(x_{1},x_{2})\ ,\ \ldots\ \big)
\end{equation*}
where $\psi_{0}\in \C$ and $\psi_k$ are symmetric $L^2$ functions. The norm of such a vector is,
\begin{equation*}
\big\Vert\ \big\vert\psi\big>\big\Vert^{2}=\big<\psi\big\vert\psi\big> =\vert\psi_{0}\vert^{2}+
\sum_{n=1}^{\infty}\big\Vert\psi_{n}\big\Vert^{2}_{L^{2}}\ .
\end{equation*}
Thus $\FF$ is a direct sum of sectors $\FF_{n}$ of the form,
\begin{equation*}
\FF =\sum_{n=0}^{\infty}\FF_{n}\quad ;\quad
\FF_{n}:=L^{2}_{s}\big(\R^{3n}\big)
\end{equation*}
with $\FF_{0}=\C$ and $L^{2}_{s}(\R^{3})$ denoting the subspace of symmetric functions.
In the Fock space $\FF$ we introduce creation and anihilation distribution valued operators denoted by $a^{\ast}_{x}$ and
$a_{x}$ respectively which act on sectors $\FF_{n-1}$ and $\FF_{n+1}$
in the following manner,
\begin{align*}
&a^{\ast}_{x}(\psi_{n-1}):=\frac{1}{\sqrt{n}}\sum_{j=1}^{n}\delta(x-x_{j})
\psi_{n-1}(x_{1},\ldots ,x_{j-1},x_{j+1},\ldots ,x_{n})
\\
&a_{x}(\psi_{n+1}):=\sqrt{n+1}\psi_{n+1}([x],x_{1},\ldots ,x_{n})
\end{align*}
with $[x]$ indicating that the variable $x$ is frozen. In addition  $a_{x}$ kills $\FF_{0}$ i.e. $a_{x}(\psi_{0})=0$.
The vacuum state will play an important role later and we define it as follows
\begin{equation*}
\vac :=(1,0,0\ldots )
\end{equation*}
so that $a_{x}\vac =0$. One can easily check that $\big[a_{x},a^{\ast}_{y}\big]=\delta(x-y)$ and since
the creation and anihilation operators are distribution valued we can form operators that act on $\FF$ by introducing
a field, say $\phi(x)$, and form
\begin{align*}
a_{\bar{\phi}}:=\int dx\left\{\bar{\phi}(x)a_{x}\right\}
\quad {\rm and}\quad
a^{\ast}_{\phi}:=\int dx\left\{\phi(x)a^{\ast}_{x}\right\}
\end{align*}
where by convention we associate $a$ with $\bar{\phi}$ and  $a^{\ast}$ with $\phi$.
These operators are well defined, unbounded, on $\FF$ provided that $\phi$ is square integrable.
The creation and anihilation operators provide a way to introduce coherent states in $\FF$ in the following manner,
first define
\begin{equation*}
\A(\phi):=\int dx\left\{\bar{\phi}(x)a_{x}-\phi(x)a^{\ast}_{x}\right\}
\end{equation*}
and then introduce $N$-particle coherent states as
\begin{equation}\label{cohstates}
\big\vert\psi(\phi)\big>:=e^{-\sqrt{N}\A(\phi)}\vac \ .
\end{equation}
It is easy to check that
\begin{equation*}
e^{-\sqrt{N}\A(\phi)}\vac =\left(\ldots\ c_{n}\prod_{j=1}^{n}\phi(x_{j})\ \ldots\right)
\quad {\rm with}\quad  c_{n}=\big(e^{-N}N^{n}/n!\big)^{1/2}\ .
\end{equation*}
In particular, by Stirling's formula, the main term that we are interested in has the coefficient
\begin{align}
c_{N}\approx (2\pi N)^{-1/4} \label{stirling}
\end{align}
Thus a coherent state introduces a tensor product in the sector $\FF_{N}$,
hence we can use such states as a mean field approximation to the Hamiltonian evolution in Fock space, see \eqref{meanfield1}.

The Fock Hamiltonian (acting on Fock space vectors) is
\begin{subequations}
\begin{align}
&\H:=\H_{1}-N^{-1}\V\quad\quad\quad   {\rm where,} \label{FockHamilt1-a}\\
&\H_{1}:=\int dxdy\left\{
\Delta_{x}\delta(x-y)a^{\ast}_{x}a_{y}\right\}\quad {\rm and} \label{FockHamilt1-b}\\
&\V:=\frac{1}{2}\int dxdy\left\{v_{N}(x-y)a^{\ast}_{x}a^{\ast}_{y}a_{x}a_{x}\right\}\ ,
\label{FockHamilt1-c}
\end{align}
\end{subequations}
where we set
\begin{equation}\label{scaledpotential}
v_{N}(x-y):=N^{3\beta}v\big(N^{\beta}\vert x-y\vert\big)\ ,
\end{equation}
and the evolution in Fock space is described by the equation,
\begin{equation}\label{fockeq}
\frac{1}{i}\partial_{t}\big\vert\psi\big>
=\H\big\vert\psi\big>
\end{equation}
which has the formal solution
\begin{equation}\label{exactfockevol}
\big\vert\psi(t)\big>=e^{it\H}\big\vert\psi_{0}\big>\ .
\end{equation}
Notice that $\H$ preserves the sectors $\FF_{n}$ and that $\H$ agrees with the classical Hamiltonian \eqref{classevol} on $\FF_{N}$. However in this
framework we allow any number of particles to evolve and one is interested, in particular, in the evolution on the sector $\FF_{N}$.

Our goal is to approximate $\big\vert\psi(t)\big>$ in \eqref{exactfockevol} and for this purpose we introduce two fields $\phi(t,x)$ and
$k(t,x,y)$ and the associated operators,
\begin{subequations}
\begin{align}
&\A(\phi):=\int dx\left\{\bar{\phi}(t,x)a_{x}-\phi(t,x)a^{\ast}_{x}\right\}\label{meanfield-2}
\\
&\B(k):=\int dxdy\left\{\bar{k}(t,x,y)a_{x}a_{y}
-k(t,x,y)a^{\ast}_{x}a^{\ast}_{y}\right\}\ . \label{pairexcit-2}
\end{align}
\end{subequations}

The coherent initial data are introduced via $\big\vert\psi_{0}\big>=e^{-\sqrt{N}\A(\phi_{0})}\big\vert 0\big>$ which means that the initial data
are a tensor product  on $\FF_{N}$ as desired, see \eqref{tensorproduct}.
Our approximation scheme is
\begin{equation}\label{approx}
\big\vert\psi_{appr}\big>:=e^{-\sqrt{N}\A(t)}e^{-\B(t)}e^{iN\chi(t)}\big\vert 0\big>
\end{equation}
with $\chi(t)$ a phase factor,
and we plan to show that $\big\vert\psi(t)\big>\approx \big\vert\psi_{appr}(t)\big>$.

The issue for us is to determine the dynamics of the fields $\phi$ and $k$. It turns out the the evolution of $k$ is described via a set of new fields,
\begin{subequations}
\begin{align}
\sh &:=k + \frac{1}{3!} k \circ \overline k \circ k + \ldots~, \label{hypebolicsine}\\
\ch &:=\delta(x-y) + \frac{1}{2!}\overline k  \circ k + \ldots~,\label{hypebolicosine}
\end{align}
\end{subequations}
where $\circ$ indicates composition, namely $k\circ l$ stands for the product,
\begin{align*}
(k\circ l)(x_{1}x_{2}):=\int dy\left\{k(x_{1},y)l(y,x_{2})\right\}\ .
\end{align*}
A crucial property of the above multiplication is that it is not commutative i.e. $k\circ l\not= l\circ k$.
In order to describe the evolution we need
\begin{subequations}
\begin{align}
&g_{N}(t, x, y):= - \Delta_x \delta (x-y)
 +(v_N * |\phi|^2 )(t, x) \delta(x-y) \nonumber \\
&\qquad\qquad\  +v_N(x-y) \overline\phi (t, x)  \phi(t, y)\label{op-g}
\\
&m_{N}(x, y):=- v_N(x-y) \phi(x) \phi(y) \ .\label{m-term}
\end{align}
\end{subequations}
Using $g_{N}$ we can construct two operators as follows: For a function $s(t,x,y)$ symmetric in $(x,y)$ and a function $p(t,x,y)$
conjugate symmetric in $(x,y)$ i.e. $\bar{p}^{T}=p$, we define
\begin{subequations}
\begin{align}
{\bf S}(s)&:= \frac{1}{i} s_t  + g_{N}^T \circ s + s \circ g_{N} \label{op-S}\\
{\bf W}(p)&:= \frac{1}{i} p_t  +[g_{N}^T, p]\ . \label{op-W}
\end{align}
\end{subequations}
The dynamics of the fields are determined via,
\begin{subequations}
\begin{align}
&\frac{1}{i}\partial_{t}\phi -\Delta\phi +\big(v_{N}\ast\vert\phi\vert^{2}\big)\phi =0 \label{meanfield-3}
\\
&{\bf S}\left(\sht\right)=m_{N} \circ \cht + \chbt \circ m_{N}  \label{pair-S-1} \\
&{\bf W}\left(\chbt\right)= m_{N} \circ \shbt- \sht \circ \overline m_{N}\ . \label{pair-W-1}
\end{align}
\end{subequations}
Recall we assume $v \in C^1_0$  is non-negative, spherically symmetric, and decreasing.
\begin{remark}
It is clear that $\cht$ and $\sht$ are not independent of each other, thus we can ignore the third equation, however in the form stated above the
equations are readily amenable to the derivation of apriori estimates. The equation for $\phi$ is of Hartree type and its formal limit
as $N\to\infty$ is NLS.
\end{remark}

The theorem concerning the evolution of the mean field $\phi$ and the pair excitation kernel $k$ reads as follows.
\begin{theorem} \label{mainestthm} Suppose that $0<\beta <1$ in \eqref{scaledpotential}.
Given initial data $\phi(0,x):=\phi_{0}(x) \in W^{k, 1}$ ($k$ derivatives in $L^1$, with $k$ sufficiently large) and $k(0,x,y):=0$ the system \eqref{meanfield-3}\eqref{pair-S-1}\eqref{pair-W-1}
has global solutions in time which satisfy the apriori estimates,
\begin{subequations}
\begin{align}
&\Vert \phi(t)\Vert_{H^{s}(\R^{3})}\leq C_{s} \label{hartreeest-1}
\\
&\Vert\phi(t)\Vert_{L^{\infty}(\R^{3})}+\Vert \partial_{t}\phi(t)\Vert_{L^{\infty}(\R^{3})}\leq \frac{C}{t^{3/2}} \label{hartreeest-2}
\\
&\Vert \sht(t)\Vert_{L^{2}(\R^{6})}+\Vert\cht(t)-\delta\Vert_{L^{2}(\R^{6})}\leq C\log(1+t)\ . \label{pairestim-1}
\end{align}
\end{subequations}
\end{theorem}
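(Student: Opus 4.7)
The plan is to handle the Hartree equation \eqref{meanfield-3} for $\phi$ first, and then to exploit the fact that the $k$-system is linear once $\phi$ is given. For $\phi$, I would use conservation of mass $\|\phi(t)\|_{L^2}$ and of the positive-definite energy $\|\nabla\phi\|_{L^2}^2+\tfrac12\iint v_N(x-y)|\phi(x)|^2|\phi(y)|^2\,dx\,dy$ (coercive because $v\ge 0$) to obtain a uniform $H^1$ bound; $N$-independence follows from $\|v_N\|_{L^1}=\|v\|_{L^1}$. Higher $H^s$ control is then propagated by applying $\partial^\alpha$, using fractional Leibniz/commutator estimates, and running Gronwall with the lower-order Sobolev norms on the right, giving \eqref{hartreeest-1}. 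For the pointwise decay \eqref{hartreeest-2}, I would combine the $W^{k,1}$ hypothesis on the data with the dispersive estimate $\|e^{it\Delta}\phi_0\|_{L^\infty}\lesssim t^{-3/2}\|\phi_0\|_{L^1}$ through the Duhamel representation, bootstrapping against the cubic nonlinearity (controllable via $\|v_N\ast|\phi|^2\|_{L^\infty}\le \|v\|_{L^1}\|\phi\|_{L^\infty}^2$); the bound on $\partial_t\phi$ follows by differentiating the equation in $t$ and repeating the Duhamel argument for the resulting linear inhomogeneous problem.

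For the pair-excitation system \eqref{pair-S-1}--\eqref{pair-W-1}, the key structural remark is that $g_N$ is self-adjoint on $L^2(\R^3)$, so the free parts of $\S$ and $\W$ acting on kernels are formally conservative in $L^2(\R^6)$. Testing \eqref{pair-S-1} against $\sht$ in $L^2(\R^6)$ yields the energy identity
\[
\tfrac{d}{dt}\|\sht\|_{L^2}^2 \;=\; -2\,\mathrm{Im}\bigl\langle m_N\circ\cht+\overline{\cht}\circ m_N,\;\sht\bigr\rangle,
\]
with a companion identity for $\|\cht-\delta\|_{L^2}^2$ coming from \eqref{pair-W-1}. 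Writing $\cht=\delta+u$, the contributions involving $m_N\circ u$ and $\bar u\circ m_N$ are controlled by the uniform operator bound $\|m_N\|_{L^2\to L^2}\le \|v\|_{L^1}\|\phi\|_{L^\infty}^2$, which follows from $(m_N f)(x)=-\phi(x)(v_N\ast(\phi f))(x)$ and Young's inequality. Combined with the pointwise decay \eqref{hartreeest-2}, such terms give rise to a Gronwall coefficient $\lesssim \|\phi\|_{L^\infty}^2\lesssim\min(1,t^{-3})$, which is time-integrable.

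The main obstacle is the ``bare'' source $2m_N$ produced by the $\delta$-part of $\cht$: because $v_N$ concentrates on the diagonal of $\R^6$, the norm $\|m_N\|_{L^2(\R^6)}$ is not uniform in $N$, yet the claimed bound on $\|\sht\|_{L^2}$ is $N$-independent. The plan for this part is to exploit the dispersive/smoothing character of the Schrödinger-type generator of $\S$ (whose principal part is $-\Delta_x-\Delta_y$ on $\R^6$) together with the $N$-independent bound $\|m_N\|_{L^1(\R^6)}\le \|v\|_{L^1}\|\phi\|_{L^2}^2$. Concretely, one splits $\sht=s_\flat+s_\sharp$ where $s_\flat$ solves $\S(s_\flat)=2 m_N$ with zero initial data, estimates $s_\flat$ in $L^2(\R^6)$ by a Strichartz/duality argument against the free group (trading the singular concentration of $m_N$ for time-decay, and using the $t^{-3/2}$ decay of $\phi$ from \eqref{hartreeest-2}), and expects the outcome to be an $O(\log(1+t))$ bound uniform in $N$. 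The remainder $s_\sharp$ together with $u=\cht-\delta$ then satisfies a closed Gronwall inequality with integrable coefficients, producing \eqref{pairestim-1}. Global existence of the coupled system \eqref{meanfield-3}--\eqref{pair-W-1} finally follows from these a priori bounds together with the routine local existence theory, which is immediate since $(\sht,\cht)$ obeys a linear system once $\phi$ is fixed; the only genuinely new technical input is the treatment of the concentrated source $m_N$ described above.
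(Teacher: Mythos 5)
Your strategy for part \eqref{pairestim-1} is essentially the paper's: decompose $\shh(2k)$ into a ``bare'' part driven by $2m_N$ and a remainder, estimate the bare part using the dispersion of the free propagator on $\mathbb R^6$, and close a Gronwall inequality for the remainder and $\chh(2k)-\delta$ using the integrable coefficient $\|\phi(t)\|_{L^\infty}^2\lesssim t^{-3}$. Your operator bound for $m_N$ and the observation that the free parts of $\S,\W$ are conservative are exactly what is used. (One remark: the uniform-in-$N$ estimate for the bare part is not captured by $\|m_N\|_{L^1(\R^6)}$ alone --- the paper actually needs the Fourier-space inequality $\int |\widehat{m_N}(\xi,\eta)|^2(|\xi|^2+|\eta|^2)^{-2}d\xi\,d\eta\lesssim\|\phi\|_{L^3}^4$, proved via Hardy--Littlewood--Sobolev, combined with an integration-by-parts-in-time argument and $\|\phi(s)\|_{L^3}\|\partial_s\phi(s)\|_{L^3}\lesssim (1+s)^{-1}$, which is where the logarithm arises. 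Your ``Strichartz/duality'' sketch points in this direction but does not identify the mechanism.)

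There is a genuine gap in parts \eqref{hartreeest-1}--\eqref{hartreeest-2}: you omit the \emph{interaction Morawetz estimate}, which is the engine of the whole Hartree section. Conservation of mass and energy gives a uniform $H^1$ bound, but applying $\partial^\alpha$ and running Gronwall ``with lower-order Sobolev norms on the right'' cannot produce a \emph{time-uniform} $H^s$ bound for $s>1$; it only shows the norm does not blow up in finite time, with bounds that a priori grow exponentially. The paper first proves the global spacetime bound $\|\phi\|_{L^4([0,\infty)\times\R^3)}\le C$ (hence $\|\phi\|_{L^8_tL^4_x}\le C$) via an interaction Morawetz inequality adapted to the Hartree nonlinearity, then splits $[0,\infty)$ into finitely many intervals where $\|\phi\|_{L^8 L^4}\le\epsilon$ and bootstraps Strichartz estimates on each interval. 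That is what makes $C_s$ independent of $t$. The same Morawetz bound is also needed for \eqref{hartreeest-2}: in the Duhamel bootstrap for $\|\phi(t)\|_{L^\infty}$, the near-time contribution carries a factor $\sup_{t/2<s<t}\|\phi(s)\|_{L^\infty}^\delta$ that can only be absorbed if one already knows $\|\phi(s)\|_{L^\infty}\to0$; the paper derives this pre-decay from the Morawetz $L^4_{t,x}$ bound together with uniform $C^s$ bounds. Without a smallness assumption on the data (which the theorem does not impose), a direct Duhamel bootstrap as you describe is circular and does not close. Inserting the interaction Morawetz estimate --- with the extra Hartree error term shown to have a favorable sign because $v$ is radial, decreasing, and $v'_N\le0$ --- is the key step your proposal is missing.
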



The main difficulty in obtaining the estimates in the theorem above is the fact that $v_{N}$ defined in \eqref{scaledpotential} has a formal limit
$v_{N}(x-y)\to c\delta(x-y)$ which means that $m_{N}$ has a limit which is not square integrable, as a matter of fact it does not belong to
any $L^{p}$ for $p>1$.
In view of the theorem above, we can compare the exact with the approximate evolutions and the result is the following theorem.

\begin{theorem}\label{mainthm}
Suppose that $\big\vert\psi(t)\big>$ is the solution of \eqref{fockeq} with initial data $\big\vert\psi_{0}\big>:=e^{-\sqrt{N}\A(\phi_{0})}\vac$ and
$\big\vert\psi_{appr}(t)\big>$ is the approximation in \eqref{approx} where the evolution of the fields $\phi$ and $k$ is
determined from theorem \eqref{mainestthm}. Under these conditions the following estimate holds,
\begin{equation}\label{mainestim}
\big\Vert\big\vert\psi(t)\big>-\big\vert\psi_{appr}(t)\big>\big\Vert_{\FF}\leq \frac{C (1+t) \log^4(1+t)}{N^{(1-3\beta)/2}}\ .
\end{equation}
provided $0 < \beta < \frac{1}{3}$. This is a meaningful approximation of the $N$th component
of $|\psi \big>$ provided $0 < \beta < \frac{1}{6}$, because of formula \eqref{stirling}. A slightly more precise form of the estimate could be obtained by integrating the right hand side of the inequalities in Proposition \eqref{errorest}.
\end{theorem}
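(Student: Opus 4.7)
The strategy is to conjugate the true Fock-space solution $|\psi(t)\rangle$ into a renormalized state that is close to the Fock vacuum $\vac$. Define
\[
|\widetilde\psi(t)\rangle := e^{-iN\chi(t)}\,e^{\B(k(t))}\,e^{\sqrt{N}\A(\phi(t))}\,|\psi(t)\rangle.
\]
Choosing $\chi(0)=0$, and using $k(0)=0$ together with $|\psi(0)\rangle = e^{-\sqrt{N}\A(\phi_0)}\vac$, one has $|\widetilde\psi(0)\rangle = \vac$. Since the three conjugating factors are unitary on $\F$,
\[
\big\|\,|\psi(t)\rangle - |\psi_{\text{appr}}(t)\rangle\,\big\|_\F \;=\; \big\|\,|\widetilde\psi(t)\rangle - \vac\,\big\|_\F,
\]
so the task reduces to bounding the right-hand side.

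I would then compute $\frac{1}{i}\partial_t|\widetilde\psi\rangle = \H_{\text{red}}(t)|\widetilde\psi\rangle$, where $\H_{\text{red}} = W\H W^{-1} + \frac{1}{i}(\partial_t W)W^{-1}$ and $W$ is the composed unitary; both summands are self-adjoint. Two algebraic identities drive the computation: the shift $e^{-\sqrt{N}\A(\phi)} a_x e^{\sqrt{N}\A(\phi)} = a_x + \sqrt{N}\,\phi(x)$ and the Bogoliubov relation $e^{-\B(k)} a_x e^{\B(k)} = \int\bigl(\chh(x,y)\,a_y + \shh(x,y)\,a_y^*\bigr)dy$, where $\shh,\chh$ are the fields from \eqref{hypebolicsine}--\eqref{hypebolicosine}. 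Substituting into $\H = \H_1 - N^{-1}\V$ and grouping by degree in $(a,a^*)$ and by power of $\sqrt N$ produces an expansion
\[
\H_{\text{red}} \;=\; \bigl[\text{scalar}\bigr] + \sqrt{N}\,\bigl[\text{linear}\bigr] + \bigl[\text{quadratic}\bigr] + \tfrac{1}{\sqrt N}\,\H^{(3)} + \tfrac{1}{N}\,\H^{(4)}.
\]
The whole point of the approximation is that the first three pieces cancel: the Hartree equation \eqref{meanfield-3} kills the linear terms, the coupled pair system \eqref{pair-S-1}--\eqref{pair-W-1} kills the quadratic terms, and $\chi(t)$ is chosen precisely to absorb the residual scalar. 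What remains is $\H_{\text{red}}(t) = \tfrac{1}{\sqrt N}\H^{(3)}(t) + \tfrac{1}{N}\H^{(4)}(t)$, where $\H^{(3)},\H^{(4)}$ are explicit polynomials in $a,a^*$ with coefficients built from $v_N,\phi,\shh,\chh$.

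Because $\H_{\text{red}}$ is self-adjoint it generates a unitary propagator $U(t,s)$ on $\F$; integrating $\frac{d}{ds}\bigl[U(t,s)\vac\bigr] = i\,U(t,s)\H_{\text{red}}(s)\vac$ gives
\[
|\widetilde\psi(t)\rangle - \vac \;=\; -i\int_0^t U(t,s)\,\H_{\text{red}}(s)\,\vac\, ds,
\qquad
\big\|\,|\widetilde\psi(t)\rangle - \vac\,\big\|_\F \;\leq\; \int_0^t \big\|\,\H_{\text{red}}(s)\vac\,\big\|_\F\, ds.
\]
Since $a_x\vac = 0$, only the pure-creation components of $\H^{(3)}$ and $\H^{(4)}$ contribute, and the resulting Fock norm is just the $L^2$ norm of the corresponding symmetric coefficient kernel. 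Using $\|v_N\|_{L^2}\lesssim N^{3\beta/2}$ together with the apriori bounds \eqref{hartreeest-1}--\eqref{hartreeest-2} on $\phi$ and \eqref{pairestim-1} on $\shh,\chh-\delta$, one expects a pointwise-in-time bound of the form $\|\H_{\text{red}}(s)\vac\|_\F \lesssim \log^3(1+s)/N^{(1-3\beta)/2}$, which integrates to the claimed $C(1+t)\log^4(1+t)/N^{(1-3\beta)/2}$.

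The main obstacle I anticipate lies in the $L^2$ estimation of the $a^{*4}$ coefficient in $N^{-1}\H^{(4)}$, which is of schematic form $N^{-1} v_N(x-y)\,\shh(x,x')\,\shh(y,y')$ (plus contributions involving $\chh-\delta$). To avoid an unacceptable $N^{3\beta}$ loss coming from $\|v_N\|_{L^\infty}$, one cannot pull $v_N$ out in $L^\infty$; instead $v_N$ must be paired against $\shh$ on an appropriate variable (using Young's or Hardy-type inequalities), costing only $\|v_N\|_{L^2}\sim N^{3\beta/2}$. The cutoff $\beta<1/3$ in the theorem is exactly what is needed for the resulting quartic error $N^{-1+3\beta/2}$, together with the cubic error $N^{-(1-3\beta)/2}$, to decay as $N\to\infty$; the further restriction $\beta<1/6$ in the second statement of the theorem accounts for the Stirling factor \eqref{stirling} when passing from the full Fock norm to the $N$-particle sector.
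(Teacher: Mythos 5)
Your plan is the same as the paper's: conjugate by the unitary $e^{\B}e^{\sqrt{N}\A}$ (the paper calls this the reduced dynamics), observe that the Hartree equation kills the $N^{1/2}$ (linear) terms, the pair system \eqref{pair-S-1}--\eqref{pair-W-1} block-diagonalizes the $O(1)$ (quadratic) terms, and the phase $\chi$ absorbs the scalar, after which a Duhamel/energy estimate on the forced vacuum equation yields $\|\,|\psi(t)\rangle - |\psi_{\text{appr}}(t)\rangle\,\|_{\F}\le N^{-1/2}\int_0^t\|\E(t_1)\vac\|_{\F}\,dt_1$, and the integrand is controlled by the a priori estimates. One misconception worth correcting in your closing paragraph: for the quartic error $N^{-1}e^{\B}\V e^{-\B}\vac$ the paper \emph{does} estimate by pulling $\|v_N\|_{L^\infty}\sim N^{3\beta}$ out (see Proposition \eqref{errorest}); this is not unacceptable because the prefactor is $N^{-1}$, giving $N^{3\beta-1}\log^4(1+t)$, which for $\beta<1/3$ decays in $N$ faster than the cubic contribution $N^{(3\beta-1)/2}$. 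The $\|v_N\|_{L^2}\sim N^{3\beta/2}$ device is reserved for the cubic error (which carries only $N^{-1/2}$), where one also exploits the $\|\phi\|_{L^\infty}\lesssim t^{-3/2}$ decay to render the time integral finite. Finally, the pointwise-in-time bound on the error is $\log^4(1+t)$, not $\log^3$, the extra logarithm coming from factors like $\|p_1\|_{L^2}^2$.
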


\begin{remark}
The real phase factor $\chi$  is described via $\chi(t):=\int^{t} dt_{1}\big\{\mu_{0}(t_{1})+N^{-1}\mu_{1}(t_{1})\big\}$
where
\begin{align}
\mu_{0}(t)=\frac{1}{2}\int dxdy\left\{
v_{N}(x-y)\vert\phi(t,x)\vert^{2}\vert\phi(t,y)\vert^{2}\right\}\label{mu-1}
\end{align}
and $\mu_{1}$ is a complicated integral given in \eqref{mu-I}
\end{remark}

\begin{proof} Here is an outline of the proof of this theorem.
In order to relate the exact with the approximate solution we introduce the reduced dynamics
\begin{equation}\label{reduceddynamics}
\big\vert\psi_{red}(t)\big> :=e^{\B(t)}e^{\sqrt{N}\A(t)}e^{it\H}e^{-\sqrt{N}\A(0)}\big\vert 0\big>
\end{equation}
i.e. we follow the exact dynamics for time $t$ and then go back following the approximate evolution.
Notice that
$\big\vert\psi_{red}(0)\big>=\big\vert 0\big>$ and if our approximation was following the  exact evolution we would have that
$\big\vert\psi_{red}(t)\big>=\vac$.
Thus our goal is to estimate the deviation of the evolution from the vacuum state. It is straightforward to compute the evolution
of $\big\vert\psi_{red}\big>$ and it is
\begin{equation}
\frac{1}{i}\partial_{t}\big\vert\psi_{red}\big>=
\H_{red}\big\vert\psi_{red}\big>
\end{equation}
where the  (self-adjoint) reduced Hamiltonian is,
\begin{align}\label{reducedHamiltonian}
\H_{red} &:=\frac{1}{i}\big(\partial_{t}e^{\B}\big)e^{-\B}\nonumber
\\
&+e^{\B}\left(\frac{1}{i}\big(\partial_{t}e^{\sqrt{N}\A}\big)e^{-\sqrt{N}\A}
+e^{\sqrt{N}\A}\H e^{-\sqrt{N}\A}\right)e^{-\B}\ .
\end{align}
The main idea is that the evolution of the fields $\phi$ and $k$ is chosen so that the reduced Hamiltonian looks like
\begin{equation*}
\H_{red}=N\mu(t)+\int dxdy \left\{L(t,x,y)a^{\ast}_{x}a_{y}\right\}
-N^{-1/2}\E(t)
\end{equation*}
where $\E(t)$ is an error term containing polynomials in $(a,a^{\ast})$ up to degree four, and $L$ is some self-adjoint expression which is irrelevant for the
rest of the argument.

Next consider
\begin{equation*}
\big\vert\widetilde{\psi}\big>:=e^{-iN\chi(t)}\left(\big\vert\psi_{red}\big> \right)-\vac
\quad {\rm where}\quad \chi(t):=\int^{t}\mu(t_{1})dt_{1}
\end{equation*}
where we called $\mu :=\mu_{0}+N^{-1}\mu_{1}$.
 Thus
 \begin{equation*}
\left(\frac{1}{i}\partial_{t}- \H_{red} + N \mu(t)  \right)e^{-iN\chi(t)} \big\vert\psi_{red}\big> =0.
\end{equation*}
and therefore
\begin{equation*}
\left(\frac{1}{i}\partial_{t}- \H_{red} + N \mu(t)  \right)\big\vert\widetilde{\psi}\big>
= N^{-1/2}\E(t)\vac\
\end{equation*}
The equation above has a forcing term namely $N^{-1/2}\E (t)\vac$ and a standard energy estimate together with the fact that
$e^{\sqrt{N}\A}$ and $e^{\B}$ are unitary,
 gives
\begin{align}\label{energyestim1}
&\big\Vert\big\vert\psi(t)\big>-\big\vert\psi_{appr}(t)\big>\big\Vert_{\FF} \notag\\
&=\big\Vert\big\vert\widetilde{\psi}(t)\big>\big\Vert_{\FF}\leq
N^{-1/2}\int_{0}^{t} dt_{1}\left\{\big\Vert\E(t_{1})\vac\big\Vert_{\FF}\right\}\ .
\end{align}

The proof will be complete if we estimate the right hand side in the above inequality. Notice that $\E\vac$ has entry only in
Fock sectors $\FF_{j}$ for $j=1,2,3,4$ and in order to estimate it we need the lemma below .
\end{proof}
\begin{lemma}
The error term is described as follows,
\begin{equation*}
\E:=e^{\B}\left(\P_{3}+N^{-1/2}\P_{4}\right)e^{-\B}
\end{equation*}
where $\P_{3}$ and $\P_{4}$ are cubic and quartic polynomials in $(a,a^{\ast})$ respectively.
Moreover the following estimate holds if $0 \le \beta < \frac{1}{3}$,
\begin{equation} \label{finalestim1}
\big\Vert\E(t)\vac\big\Vert_{\FF}\leq CN^{3\beta /2} \log^4(1+t)\ .
\end{equation}
A more precise estimate is given in Proposition \eqref{errorest}.
\end{lemma}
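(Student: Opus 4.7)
The first task is to isolate the cubic and quartic remainders explicitly. Starting from $e^{\sqrt N\A(\phi)} a_x e^{-\sqrt N\A(\phi)} = a_x + \sqrt N\,\phi(x)$ and its adjoint, I would expand
\[
e^{\sqrt N\A}\,\H\, e^{-\sqrt N\A} = e^{\sqrt N\A}\H_1 e^{-\sqrt N\A} - N^{-1} e^{\sqrt N\A}\V e^{-\sqrt N\A}
\]
in half–integer powers of $N$. The conjugate of $N^{-1}\V$ produces monomials of degrees $0,1,2,3,4$ in $(a,a^{\ast})$ with coefficients $N,N^{1/2},1,N^{-1/2},N^{-1}$ respectively. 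The $N$ and $N^{1/2}$ pieces are absorbed by the definition of $\chi$ and by the Hartree equation \eqref{meanfield-3} for $\phi$; the remaining quadratic piece combines with contributions from $e^{\B}\tfrac{1}{i}\partial_t(\cdot)e^{-\B}$ and from $\H_1$, and is annihilated precisely by the pair system \eqref{pair-S-1}, \eqref{pair-W-1}. What survives identifies $\P_4 = \tfrac12\int v_N(x-y)\,a^*_x a^*_y a_x a_y\,dxdy$ and $\P_3$ as a finite sum of terms of the form $\int v_N(x-y)\,\bar\phi(x)\,a^*_x a^*_y a_y\,dxdy$ together with its Hermitian conjugate and symmetric rearrangements.

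\textbf{Reduction to vacuum norms.} Conjugation by $e^{\B(k)}$ is a Bogoliubov rotation: integrating the ODE generated by $[\B(k),\cdot\,]$ yields an identity of the shape
\[
e^{\B} a_x e^{-\B} = \int \bigl\{\chh(k)(x,z)\,a_z + \overline{\shh(k)}(x,z)\,a^*_z\bigr\}\,dz,
\]
with the analogous formula for $a^*_x$. Substituting these expressions into $\P_3 + N^{-1/2}\P_4$ produces cubic and quartic polynomials in $(a,a^*)$ whose coefficient kernels are explicit integrals of $v_N(x-y)$ and $\phi$ against compositions of $\ch$ and $\sh$. When applied to $\vac$, only normal-ordered monomials with no surviving annihilation operator contribute, yielding Fock vectors of degree at most four; Wick's theorem then reduces $\|\E\vac\|_{\F}$ to a finite sum of $L^2(\R^{3n})$ norms of those coefficient kernels.

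\textbf{A priori bounds and the main obstacle.} The true difficulty is that $v_N\to c\,\delta$ distributionally, so $m_N$ and its relatives are uncontrolled in $L^p$ for $p>1$ uniformly in $N$. The proposed scheme is to always place exactly one copy of $v_N$ in $L^2$, paying $\|v_N\|_{L^2}\sim N^{3\beta/2}$, and to distribute any additional factors of the potential through $\|v_N\|_{L^1}\le\|v\|_{L^1}$ convolutions, absorbed by $\|\phi\|_{L^\infty}\le C$ coming from Sobolev embedding applied to \eqref{hartreeest-1}. After conjugation by $e^{\B}$ at most four $\sh$-factors arise from the quartic $\P_4$ (at most three from $\P_3$), each contributing a factor bounded by $\|\sht\|_{L^2}\le C\log(1+t)$ via \eqref{pairestim-1}; this is the origin of the $\log^{4}$ growth. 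Transporting norms through the unitary $e^{\B}$ then gives \eqref{finalestim1}.

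\textbf{Where I expect the most work.} The principal obstacle is not any single estimate but the combinatorial task of verifying that the above distribution of $L^2$ and $L^1$ factors succeeds simultaneously for every term produced by the Bogoliubov expansion. The worst summand is the fully quartic contribution with four $\sh$-factors, which both dictates the $\log^4$ power and determines the threshold $\beta<1/3$ via the balance $N^{-1/2}\cdot N^{3\beta/2}$. Tracking this carefully, and handling the mixed terms in which some of the $\sh$'s are replaced by $\ch-\delta$ so that \eqref{pairestim-1} still applies, is what Proposition \ref{errorest} ultimately records.
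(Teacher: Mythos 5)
Your outline matches the paper's strategy: identify $\P_3 = [\A,\V]$ and $\P_4 = \V$, conjugate them by the Bogoliubov rotation $e^{\B}$ using $e^{\B}a_x e^{-\B} = \int(\ch(y,x)a_y + \sh(y,x)a^*_y)\,dy$, apply to the vacuum so that only the normal-ordered terms with no surviving annihilation operator contribute, and then bound the $L^2$ norms of the resulting coefficient kernels using the a priori estimates \eqref{hartreeest-2} and \eqref{pairestim-1}. This is precisely what Sections 5 and 6 of the paper carry out.

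However, there is a concrete flaw in the way you propose to distribute the norms. You declare as a global scheme ``always place exactly one copy of $v_N$ in $L^2$, paying $\|v_N\|_{L^2}\sim N^{3\beta/2}$,'' and use $\|\phi\|_{L^\infty}$ to absorb the rest. That works for the cubic $\P_3$, because $[\A,\V]$ carries an explicit factor of $\phi$, and the paper indeed estimates the leading cubic contribution by $\|\phi\|_{L^\infty}\|v_N\|_{L^2}\|\shh(k)\|_{L^2}$. But $\P_4 = \tfrac12\int v_N(x-y)a^*_xa^*_ya_xa_y$ contains no $\phi$ at all. After conjugation and Wick ordering, the worst quartic kernel is
\begin{equation*}
v_N(y_1-y_2)\,\shh(k)(y_3,y_1)\,\shh(k)(y_2,y_4),
\end{equation*}
whose $L^2(dy_1\,dy_2\,dy_3\,dy_4)$ norm cannot be factored as $\|v_N\|_{L^2}\times(\text{$L^2$ norms of $\shh$})$: the variable $y_1-y_2$ of $v_N$ is shared with the two $\shh$ factors, and you have no pointwise (sup-in-one-variable) control on $\shh(k)$. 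The paper instead takes $v_N$ out in $L^\infty$, paying $\|v_N\|_{L^\infty}\sim N^{3\beta}$, and bounds each of the four trigonometric kernels ($\shh$ or $p_1 = \chh - \delta$) in $L^2$. This produces the quartic contribution $N^{-1/2}\|v_N\|_{L^\infty}\log^4(1+t) \sim N^{3\beta - 1/2}\log^4(1+t)$. The stated uniform bound $CN^{3\beta/2}\log^4(1+t)$ then holds precisely because $N^{3\beta - 1/2}\le N^{3\beta/2}$ when $\beta\le 1/3$; this (and not the product $N^{-1/2}\cdot N^{3\beta/2}$ you cite, which is really the budget for the final theorem) is the origin of the restriction $0\le\beta<1/3$ in the lemma. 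So the scheme you describe would get stuck on the quartic term as written, and you need to switch to $\|v_N\|_{L^\infty}$ there.
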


\begin{remark}
The polynomials $\P_{3}$ and $\P_{4}$ appearing in the error term are given by the expressions,
\begin{align*}
\P_{3}:=&\int dxdy\left\{v_{N}(x-y)\big(\phi(y)a^{\ast}_{x}a^{\ast}_{y}a_{x}+\bar{\phi}(y)a^{\ast}_{x}a_{x}a_{y}\big) \right\}
\\
\P_{4}:=&(1/2)\int dxdy\left\{v_{N}(x-y)a^{\ast}_{x}a^{\ast}_{y}a_{x}a_{y}\right\}
\end{align*}
as we will see shortly.
\end{remark}

The rest of this section is devoted to the derivation of \eqref{meanfield-3}\eqref{pair-S-1}\eqref{pair-W-1}.
We have to compute $\H_{red}$ above, see \eqref{reducedHamiltonian}, and for this task there are two crucial ingredients. They are based on the  formal identities below
for any two operators, say $\A$ and $\H$,
\begin{subequations}
\begin{align}
&e^{\A}\H e^{-\A}=\sum_{n=0}^{\infty}\frac{1}{n!}\big({\rm ad}_{\A}\big)^{n}\big(\H\big)\label{adjoined-1}
\\
&\big(\partial_{t}e^{\A}\big)e^{-\A}=\sum_{n=1}^{\infty}\frac{1}{n!}\big({\rm ad}_{\A}\big)^{n-1}\big(\A_{t}\big)
\label{adjoined-2}
\end{align}
\end{subequations}
where ${\rm ad}_{\A}\big(\H\big):=\big[\A,\H\big]$. They indicate that we have to compute  repeated commutators of various
operators. The series defining the exponentials in \eqref{adjoined-1}, \eqref{adjoined-2} converge absolutely on the dense subset of vectors with finitely many
nonzero entries provided that $\A=\A(\phi)$ is a polynomial of degree one with $\phi\in L^{2}$ or $\A=\B(k)$ is second order with $\Vert k\Vert_{L^{2}}$
small. If $\B$ is skew-Hermitian, $e^{\B}$ extends as a unitary operator for all $k\in L^{2}$. This construction is closely related to
the Segal-Shale-Weil representation, as explained in \cite{hepp}, \cite{F}, \cite{shale}, and our appendix \eqref{app}.
This calculation was also used in our previous papers \cite{GMM1, GMM2}.

The first observation is the fact that since $\A(\phi)$ is a degree one polynomial, if we denote by $\P_{n}$ a homogeneous polynomial of degree $n$ then
commuting with $\A$ produces $\big[\A,\P_{n}\big]=\P_{n-1}$ i.e. a homogeneous polynomial of degree $n-1$. This in turn implies that repeated
commutators produce a finite series in \eqref{adjoined-1}, \eqref{adjoined-2} which can be computed explicitly after some tedious but straightforward calculations.

The second observation is that in \eqref{adjoined-1}, \eqref{adjoined-2} when we replace $\A$ with $\B$ we obtain infinite series with a certain periodicity which allows for
explicit summation. This can be expressed via a Lie algebra isomorphism.
For symplectic matrices of the blocked form
\begin{equation*}
L:=\left(\begin{matrix}
d(x,y)&l(x,y)\\
k(x,y)&-d(y,x)
\end{matrix}\right)
\end{equation*}
where $d$, $k$ and $l$ are kernels in $L^{2}$, and $k$ and $l$ are symmetric in $(x,y)$, we define the map from $L$ to
quadratic polynomials is $(a,a^{\ast})$ in the following manner,
\begin{equation}\label{liemap}
\I\big(L\big) =
\frac{1}{2}\int dxdy\left\{
(a_{x}\ ,\ a^{\ast}_{x})\left(\begin{matrix}d(x,y)&l(x,y)\\
k(x,y)&-d(y,x)\end{matrix}\right)\left(\begin{matrix}-a^{\ast}_{y}\\ a_{y}\end{matrix}\right)\right\}\ .
\end{equation}
The crucial property of this map is the Lie algebra isomorphism
\begin{equation}\label{Lieisomorph}
\big[\I(L_{1}),\I(L_{2})\big]=\I\big([L_{1},L_{2}]\big)
\end{equation}
thus any computation that involves commutations can be performed in the realm of symplectic matrices and then transfered
to polynomials in $(a,a^{\ast})$. In particular if we call
$\I(H)=\H$ for a quadratic Hamiltonian and $\I(K)=\B$ then we have the two formulas below,
\begin{subequations}
\begin{align}
e^{\B}\H e^{-\B}&=\I\big(e^{K}He^{-K}\big)\label{formula-1}
\\
\big(\partial_{t}e^{\B}\big)e^{-\B}&=\I\big((\partial_{t}e^{K})e^{-K}\big)\ .
\label{formula-2}
\end{align}
\end{subequations}
Actually, to avoid the infinite trace in \eqref{formula-1}, we write
\begin{align*}
e^{\B}\H e^{-\B}&=
\H + [e^{\B}, \H]e^{-\B}\\
&=\H + \I\big([e^{K}, H]e^{-K}\big)
\end{align*}
As a matter of fact if we define the following quardatic expressions,
\begin{align*}
&\D^{\dag}_{xy}:=a_{x}a^{\ast}_{y}\quad ;\quad \D_{xy}:=a^{\ast}_{x}a_{y}
\\
&\Q^{\ast}_{xy}:=a^{\ast}_{x}a^{\ast}_{y}\quad ;\quad Q_{xy}:=a_{x}a_{y}
\end{align*}
then we can write,
\begin{align*}
\I(L)=-\frac{1}{2}\int dxdy\left\{
d(x,y)\D^{\dag}_{xy}+d(y,x)\D_{xy}+k(x,y)\Q^{\ast}_{xy}-l(x,y)\Q_{xy}\right\}\ .
\end{align*}

\begin{remark}
Notice that $D^{\dag}_{xy}=D_{yx}+\delta(x-y)$ thus we can write
\begin{align*}
\I(L)=&-\frac{1}{2}\int dxdy\left\{
d(x,y)\D_{yx}+d(y,x)\D_{xy}+k(x,y)\Q^{\ast}_{xy}-l(x,y)\Q_{xy}\right\}\\
&-\frac{1}{2}\int dx\{d(x,x)\}
\ .
\end{align*}
\end{remark}

In our present formalism if we define the matrix
\begin{align*}
&K=\left(
\begin{matrix}
0 &\overline{k}\\
k & 0
\end{matrix}\right)\ ,
\end{align*}
then we have that $\I(K)=\B$, see the expression in \eqref{pairexcit-2}. The exponential of $K$ can be computed,
\begin{align*}
e^K &=\left(
\begin{matrix}
\ch &\shb\\
\sh & \chb
\end{matrix}
\right)
\qquad
{\rm where,}
\\
\sh &:=k + \frac{1}{3!} k \circ \overline k \circ k + \ldots~, \\
\ch &:=\delta(x-y) + \frac{1}{2!}\overline k  \circ k + \ldots~,
\end{align*}
and $\circ$ indicates composition. For completeness and for the convenience of the reader we include in the appendix the derivation of
\eqref{Lieisomorph}, see also \cite{GMM1}, \cite{F}.


Let us now proceed with the calculations. First look at the expression inside the parentheses in the reduced Hamiltonian \eqref{reducedHamiltonian}.
It is straightforward after repeated (but finite) commutations with $\A$ to come up with the expression below (see section 3 of \cite{GMM1}),
\begin{align}\label{firstreduction}
&\frac{1}{i}\big(\partial_{t}e^{\sqrt{N}\A}\big)e^{-\sqrt{N}\A}
+e^{\sqrt{N}\A}\H e^{-\sqrt{N}\A} \nonumber
\\
&=N\mu_{0} +N^{1/2}\P_{1}+\P_{2}-N^{-1/2}\P_{3}-N^{-1}\P_{4}
\end{align}
where $\P_{n}$ indicate polynomials of degree $n$ to be given explicitly below. The first term $\mu_{0}$ is a scalar which can be absorbed in
the evolution as an extra phase factor. It is given by the commutators,
\begin{equation*}
\frac{1}{2i}\big[\A,\partial_{t}\A\big]+\frac{1}{2}\big[\A,[\A,\H_1]\big]
-\frac{1}{4!}\Big[\A,\big[\A,[\A,[\A,\V]]\big]\Big]
\end{equation*}
which reduce to the expression below,
\begin{align}
\mu_{0}&:=\int dx\left\{\frac{1}{2i}\big(\phi\bar{\phi}_{t}-\bar{\phi}\phi_{t}\big)-\big\vert\nabla\phi\big\vert^{2}
\right\}\nonumber
\\
&-\frac{1}{2}\int dxdy\left\{v_{N}(x-y)\vert\phi(x)\vert^{2}\vert\phi(y)\vert^{2}\right\}\ . \label{phase1-I}
\end{align}
The first degree polynomial $\P_{1}$ arise from the commutators,
\begin{equation*}
\frac{1}{i}\partial_{t}\A +\big[\A,\H_1\big]-\frac{1}{3!}\big[\A,[\A,[\A,\V]]\big]
\end{equation*}
and it can be expressed as follows,
\begin{equation}\label{meanfield2}
\P_{1}=\int dx\left\{ h(t,x)a^{\ast}_{x}+\bar{h}(t,x)a_{x}\right\}
\end{equation}
where $h:=-(1/i)\partial_{t}\phi +\Delta\phi -\big(v_{N}\ast\vert\phi\vert^{2}\big)\phi$.
The second degree polynomial consists of the terms
\begin{equation*}
\H_{1}-\frac{1}{2}\big[\A,[\A,\V]\big]
\end{equation*}
and
can expressed
\begin{align}
\P_{2}&=\frac{1}{2}\int dxdy\left\{-g_{N}(t,x,y)\D_{yx}-g_{N}(t,y,x)\D_{x,y}\right\} \nonumber
\\
&+\frac{1}{2}\int dxdy\left\{\overline{m}_{N}(t,x,y)\Q_{xy} +
m_{N}(t,x,y)\Q^{\ast}_{x,y}\right\}
\label{2nd-order-I}
\end{align}
where $g_{N}$ and $m_{N}$ are given by, see \eqref{op-g}, \eqref{m-term}
\begin{align*}
&g_{N}(t, x, y):= - \Delta_x \delta (x-y)
 +(v_N * |\phi|^2 )(t, x) \delta(x-y)\\
&\qquad\qquad  +v_N(x-y) \overline\phi (t, x)  \phi(t, y)
\\
&m_{N}(x, y) :=- v_N(x-y) \phi(x) \phi(y) \ .
\end{align*}
It is clear that $g_{N}$ and $m_{N}$ above depend on the number of particles $N$. Subsequently, for simplicity, we will suppres this
subscript and recall it only when it is relevant in an argument.
Let us define the two operators below
\begin{subequations}
\begin{align}
&\H_{G}:=\frac{1}{2}\int dxdy\left\{-g_{N}(t,x,y)\D_{yx}-g_{N}(t,y,x)\D_{x,y}\right\} \label{diag-g-term}
\\
&\M:=\frac{1}{2}\int dxdy\left\{\overline{m}_{N}(t,x,y)\Q_{xy} +
m_{N}(t,x,y)\Q^{\ast}_{x,y}\right\} \nonumber
\\
&=
 \I\left(\begin{matrix}0&\overline{m}\\ -m&0\end{matrix}\right) \label{off-diag-m-term}
\end{align}
\end{subequations}
so that we can write $\P_{2}=\H_{G}+\M$.
The relevance of this splitting will become clear shortly.
The third and fourth degree polynomiasl arise from the commutators
$
\big[\A,\V\big]
$
and $\quad \V$
respectively
and are given below
\begin{subequations}
\begin{align}
\P_{3}&:=\int dxdy\left\{v_{N}(x-y)\big(\phi(y)a^{\ast}_{x}a^{\ast}_{y}a_{x}+\bar{\phi}(y)a^{\ast}_{x}a_{x}a_{y}\big)\right\}
\label{cubic-1}
\\
\P_{4}&:=(1/2)\int dxdy\left\{v_{N}(x-y)a^{\ast}_{x}a^{\ast}_{y}a_{x}a_{y}\right\}\ . \label{quartic-1}
\end{align}
\end{subequations}
The mean field approximation emerges from the first degree polynomial $\P_{1}$. Since $\mu_{0}$ can be absorbed into the evolution it is reasonable to pick
the field $\phi$ so that $h(\phi)=0$. This leads to the evolution
\begin{equation}
\frac{1}{i}\partial_{t}\phi -\Delta\phi +\big(v_{N}\ast\vert\phi\vert^{2}\big)\phi =0 \label{meanfieldevol}
\end{equation}
which is of Hartree type. The formal limit of the equation above is the cubic NLS where the constant in front of the nonlinear term
is the integral of
the potential $v$. If $\phi$ satisfies \eqref{meanfieldevol} then $\mu_{0}$ reduces to
\begin{equation}
\mu_{0}=\frac{1}{2}\int dxdy\left\{v_{N}(x-y)\vert\phi(t,x)\vert^{2}\vert\phi(t,y)\vert^{2}\right\}\ .
\end{equation}


Now we can compute the reduced Hamiltonian in \eqref{reducedHamiltonian} using the splitting in \eqref{diag-g-term}, \eqref{off-diag-m-term}.
First let us first give a name to
\begin{equation}\label{error-1}
\E:=e^{\B}\left(\P_{3}+N^{-1/2}\P_{4}\right)e^{-B}
\end{equation}
which will be treated later as an error term.
Now we can write, see \eqref{reducedHamiltonian},
\begin{align}\label{reducedHamilt-2}
\H_{red}&=\frac{1}{i}\left(\partial_{t}e^{\B}\right) e^{-\B}
+\H_G +[e^{\B}, \H_G]e^{-\B} + e^{\B} \I(M) e^{-\B} + N \mu_0\nonumber
\\
&-e^{\B}\left(N^{-1/2}\P_{3}+N^{-1}\P_{4}\right)e^{-B}\nonumber +N \mu_0\nonumber
\\
&=\H_G +\I \left((1/i)\left(\partial_{t} e^K\right) e^{-K}
+[e^K, G]e^{-K} + e^K M e^{-K}\right)-N^{-1/2}\E +N \mu_0\nonumber
\\
&=\H_G +\I(R)-N^{-1/2}\E +N \mu_0\ ,
\end{align}
where $R$ is defined to be the expression,
\begin{equation*}
R:=(1/i)\left(\partial_{t} e^K\right) e^{-K}
+[e^K, G]e^{-K} + e^K M e^{-K}\ .
\end{equation*}
For the convenience of the reader, let us recall our set up,
\begin{align*}
&K:=\left(
\begin{matrix}
0 &\overline{k}\\
k & 0
\end{matrix}
\right)
\quad {\rm and}\quad
 e^K=\left(
\begin{matrix}
\ch &\shb\\
\sh & \chb
\end{matrix}
\right)
\\
& \sh:=k + \frac{1}{3!} k \circ \overline k \circ k + \ldots~, \\
& \ch:=\delta(x-y) + \frac{1}{2!}\overline k  \circ k + \ldots~,\\
&g(t, x, y):= - \Delta_x \delta (x-y)
 +(v_N * |\phi|^2 )(t, x) \delta(x-y)\\
&\qquad\qquad +v_N(x-y) \overline\phi (t, x)  \phi(t, y)
~, \\
&m(x, y) :=- v_N(x-y) \phi(x) \phi(y)
\quad {\rm where}\quad
v_N(x)= N^{3 \beta} v(N^{\beta}x)\\
&G:=
\left(
\begin{matrix}
g &0\\
0 & -g^T
\end{matrix}
\right)\quad {\rm and}\quad M :=
\left(
\begin{matrix}
0 &\overline{m}\\
-m & 0
\end{matrix}
\right)\\
&N_u :=
\left(
\begin{matrix}
I &0\\
0 & -I
\end{matrix}
\right) \, \mbox{ (this corresponds to the Number operator)}\\
&{\bf S}(s):= \frac{1}{i} s_t  + g^T \circ s + s \circ g\quad {\rm and}\quad
{\bf W}(p):= \frac{1}{i} p_t  +[g^T, p]\ .
\end{align*}
Thus ${\bf S}$ describes a Shr\"odinger type evolution, while ${\bf W}$ is a Wigner type operator. These operators will emerge shortly.
Recall the formula \eqref{reducedHamilt-2} that we derived earlier for the reduced Hamiltonian
\begin{align*}
\H_{red}=\H_G +\I(R)-N^{-1/2}\E\ ,
\end{align*}
where $\H_G$ has only $a^* a$  terms (which annihilate the vacuum)  and $R$ can be computed explicitly. In fact we have,
\begin{align*}
&R=\\
&\bigg(
\frac{1}{i}
 \left(
\begin{matrix}
\ch_t & \shb_t\\
\sh_t & \chb_t
\end{matrix}
\right)\\
& +
\left(
\begin{matrix}
[\ch, g] -\shhb \circ  m& \, \, \, -\shb\circ g^T -g\circ \shb+ \chh\circ \overline m\\
\sh\circ g + g^T\circ \sh -\chhb\circ  m& -[\chb, g^T]+\shh\circ \overline m
\end{matrix}
\right)\bigg)\\
&\circ\left(
\begin{matrix}
\ch &-\shb\\
-\sh & \chb
\end{matrix}
\right) \, \,\\
 &\text{(matrix product, where kernel products mean compositions)}
\end{align*}

The condition that we would like to impose is that $R$ is block diagonal so that $\I(R)$ contains only terms of the form
$aa^{\ast}$ and $a^{\ast}a$ so that, apart from a trace when we commute $a$ with $a^{\ast}$, we obtain an operator which
annihilates the vacuum state. The remaining trace can be absorbed in the evolution as a phase factor. Thus our requirement is
\begin{align}
\frac{1}{i}\left(\frac{\partial}{\partial t} e^K\right) e^{-K}
+[e^K, G]e^{-K} + e^K M e^{-K} \, \, \, \, \mbox {is block diagonal.}\label{cond-a}
\end{align}
We proceed to show this equivalent to equations \eqref{pair-S-1}, \eqref{pair-W-1}.
Let us make the elementary observations
\begin{align*}
&\left(\frac{\partial}{\partial t} e^K\right) e^{-K}=
\frac{\partial}{\partial t} I - e^K \circ \frac{\partial}{\partial t} \circ e^{-K}\\
&\ \, [e^K, G]e^{-K} =e^K G e^{-K} -G
\end{align*}
so removing the part of \eqref{cond-a} that is diagonal already we have the equivalent formulation of \eqref{cond-a}
\begin{align}
e^K\left(-\frac{1}{i}\frac{\partial}{\partial t} +G +M\right) e^{-K} \, \, \, \, \mbox {is block diagonal.}\label{cond-b}
\end{align}
Now we make the  observation that a matrix is block-diagonal if and only if it commutes with the number operator matrix $N_u$, as well as
(for arbitrary matrices $A$ and $B$) we have $[e^K A e^{-K}, B]=0$ if and only if $[A, e^{-K} B e^K]=0$, so our equation \eqref{cond-b} reads,
\begin{align}
\left[\left(-\frac{1}{i}\frac{\partial}{\partial t} +G +M\right)\, ,\, e^{-K} N_u e^{K}\right]=0\ .\label{cond-c}
\end{align}
A direct calculation gives
\begin{align*}
e^{-K} N_u e^{K}=
\left(
\begin{matrix}
\cht &\shbt\\
-\sht & -\chbt
\end{matrix}
\right)
\end{align*}
after which is is straightforward to compute
\begin{align*}
\left[-\frac{1}{i}\frac{\partial}{\partial t} +   G\, ,\, e^{-K} N_u e^{K}\right]=
\left(
\begin{matrix}
\overline{{\bf W}\left(\chbt \right)}&\overline{{\bf S}\left((\sht\right)}\\
{\bf S}\left(\sht\right) & {\bf W}\left(\chbt\right)
\end{matrix}
\right)
\end{align*}
and simlarly,
\begin{align*}
\left[M\, ,\, e^{-K} N_u e^{K}\right]=
\left(
\begin{matrix}
\overline{-m\circ \shbt + \sht \circ \overline{m}} &\overline{- m \circ \cht - \chbt \circ m}\\
- m \circ \cht - \chbt \circ m & -m\circ \shbt + \sht \circ \overline{m}
\end{matrix}
\right)
\end{align*}
Finally combining the two formulas above we obtain, see  \eqref{cond-c}, the linear pair of equations below
\begin{subequations}
\begin{align}
&{\bf S}\left(\sht\right)=m \circ \cht + \chbt \circ m \label{pair-S-2}\\
&{\bf W}\left(\chbt\right)= m \circ \shbt- \sht \circ \overline m\ .\label{pair-W-2}
\end{align}
\end{subequations}
This completes the derivation of the evolution equations for the pair excitations and the mean field, namely \eqref{pair-S-2}, \eqref{pair-W-2},
together with \eqref{meanfieldevol}  describe the evolution
of $\phi$ and $k$ and are the equations in \eqref{meanfield-3}, \eqref{pair-S-1} and \eqref{pair-W-1}.
In particular, we have proved that in that if $\phi$, $k$ satisfy these equations, then the energy estimate \eqref{energyestim1} holds.

\section{Estimates for the solution to the Hartree equation}
This section adapts classical results for NLS due to Lin and Strauss \cite{L-S}, Ginibre and Velo \cite{G-V-nls}, Bourgain \cite{B-nls}, as well as
Colliander, Keel, Staffilani, Takaoka and Tao \cite{CKSTT} to the Hartree equation.
Assume
\begin{align}
&\frac{1}{i} \frac{\partial}{\partial t} \phi - \Delta \phi + \left(v_N * |\phi|^2 \right)\phi =0 \label{har}\\
&\phi(0, \cdot)=\phi_0 \ . \notag
\end{align}
where
$v \in C^1_0$  is non-negative, spherically symmetric, and decreasing.
We recall the relevant conserved quantities, following the notation \cite{GMM2}:
\begin{align*}
&\rho :=(1/2)|\phi|^2~; \\
&p_{j}:=(1/2i)\left(\phi\nabla_{j}\overline \phi -\overline \phi\nabla_{j}\phi\right)~;
\quad p_{0}=(1/2i)\left(\phi\partial_{t}\overline{\phi } -\overline{\phi}\partial_{t}\phi\right)~; \\
&\sigma_{jk}:=\nabla_{j}\overline{\phi} \nabla_{k}\phi +\nabla_{k}\overline{\phi}\nabla_{j}\phi~;
\quad \sigma_{0j}=\nabla_{j}\overline{\phi}\partial_{t}\phi +\partial_{t}\overline{\phi}\nabla_{j}\phi~\\
&\lambda :=-\Im (\phi \partial_t \overline{\phi})+ |\nabla \phi|^2 +
\frac{1}{2}(v * |\phi|^2) |\phi|^2\\
&=\frac{1}{2}\left(\Delta |\phi|^2 - (v * |\phi|^2) |\phi |^2 \right);\\
e:&= |\nabla \phi|^2 + \frac{1}{2}(v * |\phi|^2) |\phi |^2~.
\end{align*}

The associated conservation laws are
\begin{subequations}
\label{conserv-laws}
\begin{align}
&\partial_{t}\rho -\nabla_{j}p^{j}=0~,\label{conserv-a} \\
&\partial_{t}p_{j} -\nabla_{k}\left\{\sigma_{j}^{\ k}-\delta_{j}^{\ k}\lambda\right\} +l_{j}=0~, \label{conserv-b}\\
&\partial_{t}e -\nabla_{j}\sigma_{0}^{\ j}+l_{0} =0~.\label{conserv-c}
\end{align}
\end{subequations}
These laws express the conservation of mass, momentum and energy, respectively,
where the vector $\big(l_{j}, l_{0}\big)$ is
\begin{align*}
l_{j}:=2 \left((v_N * \rho) \rho_j - (v_N* \rho_j) \rho\right)\ ,\qquad l_{0}:=2 \left((v_N * \rho) \rho_0 - (v_N* \rho_0) \rho\right)\ .
\end{align*}
In the case of NLS, $v_N=\delta$ and $l_j, l_0$ are 0.

We adapt the well-known method of interaction Morawetz estimates, due to Colliander, Keel, Staffilani, Takaoka and Tao,  outlined in \cite{CKSTT}.
Start with
\begin{align*}
Q(t)= \int \left(\nabla_j p^j(t, x) \rho(t, y) + \rho(t, x)\nabla_j p^j(t, x) \right)|x-y| dx dy\ .
\end{align*}
Using \eqref{conserv-laws} we get
\begin{align*}
&\dot{Q}(t)= 2 \int \nabla_j p^j(t, x) \nabla_k p^k(t, y)|x-y| dx dy\\
&+\int \bigg(\nabla_j \left(\nabla_{k}\left\{\sigma_{j}^{\ k}(t, x)-\delta_{j}^{\ k}\lambda(t, x)\right\} -l_{j}(t, x) \right) \rho(t, y) \\
 &+ \rho(t, x)\nabla_j \left(  \left(\nabla_{k}\left\{\sigma_{j}^{\ k}(t, y)-\delta_{j}^{\ k}\lambda(t, y)\right\} -l_{j}(t, y) \right)  \right)\bigg)|x-y| dx dy\\
 &\ge \int \left(-\lambda(t, x)  \rho(t, y) - \rho(t, x) \lambda(t, y)\right)\Delta |x-y| dx dy \, \,  \mbox {(main term)}\\
 &- \int \left((\nabla_j l_{j}(t, x) )\rho(t, y) + \rho(t, x)(\nabla_j l_{j}(t, y)) \right) |x-y| dx dy \, \,  \mbox {(error term.)}
\end{align*}
We have used the fact which we recall for the reader's convenience (see \cite{CKSTT}), that
\begin{align*}
&\left(\nabla_j \nabla_k  a\right)(x-y)\left( -2 p_j (t, x) p_k(t, y) + \sigma_{j}^{\ k}(t, x) \rho(t, y) +
\sigma_{j}^{\ k}(t, y) \rho(t, x)\right)\\
&= \left(\nabla_j \nabla_k  a\right)(x-y)\bigg(\left(\phi(x) \overline {\phi_j}(y) + \phi_j(x) \overline \phi(y)\right)
\overline{\left(\phi(x) \overline {\phi_j}(y) + \phi_j(x) \overline \phi(y)\right)}\\
&+\left(\phi(x) \phi_j(y) - \phi_j(x)  \phi(y)\right)\overline{\left(\phi(x) \phi_j(y) - \phi_j(x)  \phi(y)\right)}\bigg)
\ge 0
\end{align*}
where $a(x)=|x|$.
It is easy to check
\begin{align*}
&\mbox {(main term)} \ge c \|\phi(t, \cdot)\|^4_{L^4} \\
&+ 2 \int\big( (v_N*\rho)(t, x) \rho(t, x) \rho(t, y) +
(v_N*\rho)(t, y) \rho(t, y) \rho(t, x)\big) \Delta|x-y| dx dy\\
& \mbox{with $c>0$}
\end{align*}
 We proceed to analyze the error term:
\begin{align*}
&\mbox{error term}\\
&=- \int \left((\nabla_j l_{j}(t, x) )\rho(t, y) + \rho(t, x)(\nabla_j l_{j}(t, y)) \right) |x-y| dx dy  \\
&=-2\int (\nabla_j l_{j}(t, x) )\rho(t, y)  |x-y| dx dy\\
&=2\int  l_{j}(t, x) \rho(t, y)  \frac{(x-y)^j}{|x-y|} dx dy\\
&=4 \int  \big((v_N * \rho)(t, x) \rho_j(t, x) - (v_N* \rho_j)(t, x) \rho(t, x)\big)\rho(t, y)  \frac{(x-y)^j}{|x-y|} dx dy\\
&=4 \int v_N(x-z)(\big( \rho(t, z) \rho_j(t, x)- \rho_j(t, z) \rho(t, x)\big)\rho(t, y)\frac{(x-y)^j}{|x-y|} dx dy dz\\
&=-8\int v'_{N}(|x-z|) \frac{(x-z)^j}{|x-z|}(\rho(t, z) \rho(t, x)\rho(t, y)\frac{(x-y)^j}{|x-y|} dx dy dz\\
&- 4 \int v_N(x-z)\rho(t, z) \rho(t, x)\rho(t, y) \partial_{x, j} \left(\frac{(x-y)^j}{|x-y|}\right) dx dy dz\\
&=-4 \int v'_N(|x-z|) \left( \frac{(x-z)^j}{|x-z|}\frac{(x-y)^j}{|x-y|}  +\frac{(z-x)^j}{|z-x|}\frac{(z-y)^j}{|z-y|}\right)  \rho(t, z) \rho(t, x)\rho(t, y)\\
&- 2 \int\big( (v_N*\rho)(t, x) \rho(t, x) \rho(t, y) +
(v_N*\rho)(t, y) \rho(t, y) \rho(t, x)\big) \Delta|x-y| dx dy
\end{align*}
The next-to-last line is $\ge 0$ because of the assumption  $v'_N \le 0$ and the elementary trigonometric inequality
\begin{align*}
&\frac{(x-z)^j}{|x-z|}\frac{(x-y)^j}{|x-y|} + \frac{(z-x)^j}{|z-x|}\frac{(z-y)^j}{|z-y|} \\
&= \cos(\theta_1) + \cos(\theta_2) \ge 0
\end{align*}
 The last line is negative, but cancels part of the main term. Thus
\begin{align*}
\mbox{(main term) + (error term)} \ge c \|\phi\|^4_{L^4}
\end{align*}
Since $Q(t)$ is bounded uniformly in time by $\|\phi_0\|^4_{H^1}$,
we have shown the following proposition.
\begin{proposition} Let $\phi$ be a solution to the Hartree equation \eqref{har}. There exists $C$ depending only on $\|\phi_0\|_{H^1}$ such that
\begin{align*}
\|\phi\|_{L^4([0, \infty) \times \mathbb R^3)} \le C
\end{align*}
and, as an immediate consequence of conservation of energy,
\begin{align}
\|\phi\|_{L^8[0, \infty) L^4( \mathbb R^3)} \le C\ .  \label{harest}
\end{align}
\end{proposition}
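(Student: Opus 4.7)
The plan is to adapt the interaction Morawetz estimate of Colliander--Keel--Staffilani--Takaoka--Tao \cite{CKSTT} from cubic NLS to the nonlocal Hartree equation. I would introduce the Morawetz-type functional
\[
Q(t) := \int \bigl(\nabla_j p^j(t,x)\,\rho(t,y) + \rho(t,x)\,\nabla_j p^j(t,y)\bigr)|x-y|\,dx\,dy,
\]
and first observe that $Q(t)$ is bounded uniformly in $t$ by a constant depending only on $\|\phi_0\|_{H^1}$. This follows after integrating by parts once (to move $\nabla_j$ off $p^j$ and onto $|x-y|$), Cauchy--Schwarz, and conservation of mass and energy, since $\nabla|x-y|$ is bounded and $|x-y|$ can be absorbed via Hardy-type manipulations.

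Next I would differentiate $Q$ in time, substituting the three local conservation laws \eqref{conserv-a}--\eqref{conserv-c}. Two kinds of contributions emerge. The \emph{main term}, coming from the stress tensor $\sigma_{jk}$ and the pressure $\lambda$, is handled via the classical algebraic identity for $\nabla_j \nabla_k |x-y|$ acting on the combination $-2 p_j p_k + \sigma_j{}^k\rho(y) + \sigma_j{}^k\rho(x)$; completing a square as in \cite{CKSTT} yields a positive contribution $c\|\phi(t,\cdot)\|_{L^4}^4$ together with a \emph{bad} piece of the form $\int (v_N*\rho)\rho\rho\,\Delta|x-y|$ coming from the $-\lambda\,\rho\,\Delta|x-y|$ contribution. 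The \emph{error term}, generated by the Hartree source $l_j$, vanishes only in the local NLS case, so this is where genuinely new work is required.

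The heart of the proof is to show that the error term simultaneously cancels the bad piece and has a definite sign. I would integrate by parts twice in the $l_j$ integral, once to remove $\nabla_j$ from $l_j$ and a second time to transfer a derivative onto $v_N$. This splits the error term into (i) an integral weighted by $v'_N(|x-z|)$ against three densities $\rho(t,x)\rho(t,y)\rho(t,z)$, and (ii) the expression $-\int (v_N*\rho)\rho\rho\,\Delta|x-y|$ which exactly kills the bad part of the main term. The sign of (i) is nonnegative thanks to the assumption $v'\le 0$ together with the elementary trigonometric inequality
\[
\frac{(x-z)^j}{|x-z|}\frac{(x-y)^j}{|x-y|} + \frac{(z-x)^j}{|z-x|}\frac{(z-y)^j}{|z-y|}
 = \cos\theta_1 + \cos\theta_2 \ge 0
\]
for the angles at the vertices $x$ and $z$ of the triangle $xyz$. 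Combining yields $\dot Q(t)\ge c\|\phi(t,\cdot)\|_{L^4}^4$, and integrating in $t$ against the uniform bound on $Q$ gives $\|\phi\|_{L^4([0,\infty)\times\R^3)}\le C$. The second bound \eqref{harest} is then immediate by Sobolev embedding $H^1(\R^3)\hookrightarrow L^4(\R^3)$ and conservation of energy: writing $\|\phi(t,\cdot)\|_{L^4}^8 = \|\phi(t,\cdot)\|_{L^4}^4 \cdot \|\phi(t,\cdot)\|_{L^4}^4$ and pulling one factor out in $L^\infty_t$ gives $\|\phi\|_{L^8_tL^4_x}^8\le C\,\|\phi\|_{L^4_{t,x}}^4\le C$.

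The main obstacle is the sign and cancellation analysis of the nonlocal error term: in the NLS case this is trivial, but here the presence of $v_N$ forces one to track a cubic-in-$\rho$ integral with an unspecified radially-symmetric, nonincreasing, compactly supported kernel, and the positivity only materializes after the two integrations by parts described above and the trigonometric identity. Everything else is essentially parallel to the standard interaction-Morawetz argument, and the assumptions $v\in C^1_0$, $v\ge 0$, radial and decreasing, are used exactly where they are needed.
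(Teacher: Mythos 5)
Your proposal is correct and follows essentially the same route as the paper: the same interaction Morawetz functional $Q(t)$, the same decomposition of $\dot Q$ into main and error terms via the local conservation laws, the same two integrations by parts on the nonlocal term, the same invocation of $v'_N\le 0$ together with the trigonometric inequality $\cos\theta_1+\cos\theta_2\ge 0$, and the same observation that one piece of the error exactly cancels the bad part of the main term. The derivation of \eqref{harest} from the $L^4_{t,x}$ bound via $L^\infty_tL^4_x$ (energy plus Sobolev) is also the paper's argument.
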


\begin{remark}
It was shown by Bourgain in \cite{B-nls} that if $\phi$ is a solution to cubic NLS, then there exists $C_s$ depending only on
$\|\phi_0\|_{H^s}$ such that
\begin{align*}
\|\phi(t, \cdot)\|_{H^s} \le C_s  \, \, \forall t\ .
\end{align*}
\end{remark}

Using the above Morawetz estimate (which was not yet discovered when Bourgain did this work), we can easily prove
the same for our Hartree equation.
\begin{proposition} \label{Hs} Let $\phi$ be a solution to the  equation \eqref{har}. There exists $C_s$
depending only on
$\|\phi_0\|_{H^s}$ such that
such that
\begin{align*}
\|\phi(t, \cdot)\|_{H^s} \le C_s
\end{align*}
uniformly in time.
\end{proposition}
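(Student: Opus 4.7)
The plan is to combine the Morawetz bound \eqref{harest} with Strichartz estimates via the standard Bourgain partitioning argument. The key observation is that the Hartree nonlinearity behaves, up to losing a harmless factor $\|v\|_{L^1}$, like the cubic NLS nonlinearity, because Young's inequality gives $\|v_N * |\phi|^2\|_{L^p} \le \|v\|_{L^1}\||\phi|^2\|_{L^p}$ uniformly in $N$. Hence Bourgain's argument for cubic NLS transplants almost verbatim.

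First I would partition $[0,\infty)$ into finitely many successive intervals $I_1, \dots, I_M$ on which $\|\phi\|_{L^8(I_j;L^4)} \le \epsilon$, for a small constant $\epsilon$ to be fixed later. By \eqref{harest} the number of such intervals $M$ is bounded by $(C/\epsilon)^8$, depending only on $\|\phi_0\|_{H^1}$. On each $I_j = [t_{j-1},t_j]$ I would apply the inhomogeneous Strichartz estimate to the Duhamel formula
\begin{equation*}
\phi(t) = e^{i(t-t_{j-1})\Delta} \phi(t_{j-1}) - i \int_{t_{j-1}}^t e^{i(t-s)\Delta}\bigl[(v_N * |\phi|^2)\phi\bigr](s)\, ds,
\end{equation*}
using the admissible pair $(q,r)=(8/3,4)$ in three dimensions.

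The main step is then the standard fractional Leibniz bound on the nonlinearity in the dual norm $L^{8/5}_t L^{4/3}_x$. Splitting derivatives between the two factors of $\phi$ (with the convolution derivative handled by moving $\nabla$ inside via $\nabla(v_N*|\phi|^2)=v_N*(\bar\phi\nabla\phi+\phi\nabla\bar\phi)$, so that no regularity is lost to the potential), one obtains an inequality of the schematic form
\begin{equation*}
\|\langle\nabla\rangle^s\phi\|_{L^{8/3}(I_j;L^4)} + \|\langle\nabla\rangle^s\phi\|_{L^\infty(I_j;L^2)} \le C\|\langle\nabla\rangle^s\phi(t_{j-1})\|_{L^2} + C\|v\|_{L^1}\|\phi\|_{L^8(I_j;L^4)}^2\|\langle\nabla\rangle^s\phi\|_{L^{8/3}(I_j;L^4)}.
\end{equation*}
Choosing $\epsilon$ small so that $C\|v\|_{L^1}\epsilon^2 \le 1/2$, the last term is absorbed and one gets $\|\langle\nabla\rangle^s\phi(t_j)\|_{L^2}\le 2\|\langle\nabla\rangle^s\phi(t_{j-1})\|_{L^2}$. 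Iterating over the $M$ intervals yields $\|\phi(t)\|_{H^s}\le 2^M\|\phi_0\|_{H^s}$ uniformly in $t$, which is the desired bound since $M$ depends only on $\|\phi_0\|_{H^1}\le\|\phi_0\|_{H^s}$.

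The only point needing care, and the real reason for separating $v_N$ from the cubic NLS case, is verifying that all constants produced by the fractional Leibniz rule and Young's inequality are independent of $N$; this is immediate since $v_N$ enters only through $\|v_N\|_{L^1}=\|v\|_{L^1}$ and through convolution, which is smoothing rather than worsening. Thus no difficulty comes from the concentration $v_N\to c\delta$, and the result follows once the partitioning and Strichartz steps are assembled.
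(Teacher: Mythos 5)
Your proof is correct and follows essentially the same route as the paper's: partition $[0,\infty)$ into finitely many intervals on which the Morawetz bound makes $\|\phi\|_{L^8_tL^4_x}$ small, apply the $(8/3,4)$ Strichartz pair in Duhamel form, control the nonlinearity in the dual $L^{8/5}_tL^{4/3}_x$ norm using Young's inequality to make the constants independent of $N$, absorb by choosing $\epsilon$ small, and iterate over the finitely many intervals. The only differences are cosmetic: you spell out the Leibniz splitting and the final $2^M$ iteration bound, which the paper leaves implicit as ``similar and easier terms'' and ``repeating the process finitely many times.''
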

\begin{proof}
Split $[0, \infty)$ into finitely many intervals $I_k$ where
\begin{align*}
\|\phi\|_{L^8(I_k) L^4( \mathbb R^3)} \le \epsilon
\end{align*}
where $\epsilon$ is to be prescribed later.
Differentiating \eqref{har}
\begin{align}
&\frac{1}{i} \frac{\partial}{\partial t} D^s\phi - \Delta D^s\phi =- D_s \left((v_N* |\phi|^2) \phi\right) \label{dhar}\\
& \mbox{where} \notag\\
& D_s \left((v_N* |\phi|^2) \phi\right) =\left(v_N * |\phi|^2 \right)D^s\phi \,
\, \mbox{ $+$ similar and easier terms.} \notag
\end{align}
For the first interval, $I_1$, we get, using the $L^{8/3}L^{4}$ Strichartz estimate,
\begin{align*}
&\|D^s \phi\|_{L^{8/3}(I_1)L^{4}(\mathbb R^3)} \le C \|\phi_0\|_{H^s} + C\| \left(v_N * |\phi|^2 \right)D^s\phi\|_{L^{8/5(I_1)}L^{4/3}(\mathbb R^3)}\\
& \le C_1 \|\phi_0\|_{H^s} + C_2 \|\phi\|^2_{L^8(I_1) L^4(\mathbb R^3)} \|D^s \phi\|_{L^{8/3}(I_1)L^{4}(\mathbb R^3)}\ .
\end{align*}
At this stage, we pick $\epsilon$ so that $C_2 \epsilon^2 \le \frac{1}{2}$ to conclude
\begin{align*}
\|D^s \phi\|_{L^{8/3}(I_1)L^{4}(\mathbb R^3)}
 \le 2 C_1 \|\phi_0\|_{H^s}\ .
\end{align*}
In turn, this allows us to control the inhomogeneity of \eqref{dhar}
\begin{align*}
&\| \left(v_N * |\phi|^2 \right)D^s\phi\|_{L^{8/5}(I_1)L^{4/3}(\mathbb R^3)}\\
& \le \|\phi\|^2_{L^8(I_1) L^4(\mathbb R^3)} \|D^s \phi\|_{L^{8/3}(I_1)L^{4}(\mathbb R^3)}\\
&\le C  \|\phi_0\|^3_{H^s}
\end{align*}
and therefore
\begin{align*}
&\|\phi(t, \cdot)\|_{H^s} \le \|\phi_0\|_{H^s} + C \|\phi\|^2_{L^8(I_1) L^4(\mathbb R^3)} \|D^s \phi\|_{L^{8/3}(I_1)L^{4}(\mathbb R^3)}\\
&\le C  \|\phi_0\|^3_{H^s}
\end{align*}
for all $t \in I_1$.
Repeating the process finitely many times, we are done.
\end{proof}

If we assume the data $\phi_0$ and sufficiently many derivatives  are not only in $L^2$ but also in $L^1$, we can also get decay.
\begin{corollary} \label{linf}Let $\phi$ be a solution to \eqref{har}. There exists $C$ depending only on $\|\phi_0\|_{W^{k, 1}}$
for $k$ sufficiently large such that
\begin{subequations}
\begin{align}
&\|\phi(t, \cdot)\|_{L^{\infty}} \le \frac{C}{t^{\frac{3}{2}}} \label{linfest}\\
&\mbox{and also} \notag\\
&\|\partial_t \phi(t, \cdot)\|_{L^{\infty}} \le \frac{C}{t^{\frac{3}{2}}}
\end{align}
\end{subequations}
\end{corollary}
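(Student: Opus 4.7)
The plan is to combine the standard dispersive estimate $\|e^{it\Delta} f\|_{L^{\infty}} \le C|t|^{-3/2}\|f\|_{L^{1}}$ with a vector-field argument controlling weighted $L^{2}$ norms. Starting from the Duhamel formula
\begin{equation*}
\phi(t) = e^{it\Delta}\phi_0 - i\int_0^t e^{i(t-s)\Delta}\bigl[(v_N * |\phi|^2)\phi\bigr](s)\, ds,
\end{equation*}
the homogeneous term decays as $Ct^{-3/2}\|\phi_0\|_{L^1}$ with constant depending only on $\|\phi_0\|_{L^1}$. The work is in controlling the inhomogeneous term at the same rate. The cleanest route is to exploit the Galilean operators $J_k := x_k + 2it\,\partial_k$, which commute with $\frac{1}{i}\partial_t - \Delta$ and satisfy the factorization $J_k = 2it\, e^{i|x|^2/(4t)} \partial_k\, e^{-i|x|^2/(4t)}$. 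Uniform $L^2$ control of $J^{\alpha}\phi$ for $|\alpha|$ up to some fixed order depending on $k$ translates, via the factorization and Gagliardo--Nirenberg in 3D, directly into the $L^{\infty}$ decay $\|\phi(t)\|_{L^{\infty}} \le Ct^{-3/2}$.

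To bound $\|J^{\alpha}\phi(t)\|_{L^2}$, first note that at $t=0$ the operators $J_k$ reduce to multiplication by $x_k$, so $\|J^{\alpha}\phi(0)\|_{L^2}$ is controlled by $\|\phi_0\|_{W^{k,1}}$ (indeed by $\|\,|x|^{|\alpha|}\phi_0\|_{L^2}$, which is dominated by a combination of $L^1$ and $W^{k,1}$ norms for $k$ large). Since $J$ does not commute with the nonlinearity, an energy estimate for $J^{\alpha}\phi$ produces commutator forcing of the schematic form
\begin{equation*}
[J_k,\, v_N * |\phi|^2]\phi \;=\; 2it\bigl(v_N * \partial_k|\phi|^2\bigr)\phi,
\end{equation*}
plus iterated versions. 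These are controlled via Young's inequality, $\|v_N\|_{L^1} = \|v\|_{L^1}$, $L^2$ conservation, and the $H^s$ bound from Proposition \eqref{Hs}, together with the Morawetz bound $\|\phi\|_{L^8_t L^4_x}$ from the previous subsection. Running a bootstrap with ansatz $\|\phi(s)\|_{L^{\infty}} \le M(1+s)^{-3/2}$ makes the forcing time-integrable (the factor of $t$ in front is compensated by $\|v_N * \nabla|\phi|^2\|_{L^{\infty}} \lesssim (1+s)^{-5/2}$ after interpolating with the $L^8_tL^4_x$ bound), which closes the bootstrap and yields uniform-in-time control of $\|J^{\alpha}\phi\|_{L^2}$.

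The estimate on $\partial_t\phi$ then follows directly from the equation: $\partial_t\phi = -i\Delta\phi + i(v_N * |\phi|^2)\phi$, so
\begin{equation*}
\|\partial_t\phi\|_{L^{\infty}} \le \|\Delta\phi\|_{L^{\infty}} + \|v\|_{L^1}\|\phi\|_{L^{\infty}}^3,
\end{equation*}
and the cubic term is bounded by $Ct^{-9/2}$ from \eqref{linfest}, while $\|\Delta\phi\|_{L^{\infty}}$ is handled by applying the same vector-field scheme to $\Delta\phi$ (which satisfies a Hartree-type equation with an additional forcing involving derivatives of $|\phi|^2$, treated as above).

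The main obstacle is the $N$-dependence of $v_N$: naive estimates lose powers of $N^{\beta}$ whenever a derivative falls on $v_N$. The remedy is to always integrate by parts so that derivatives act on $|\phi|^2$ rather than on $v_N$, using only $\|v_N\|_{L^1} = \|v\|_{L^1}$ and never $\|\nabla v_N\|_{L^1} = N^{\beta}\|\nabla v\|_{L^1}$. Once this discipline is enforced, all constants are independent of $N$, depending only on $\|\phi_0\|_{W^{k,1}}$ and on $\|v\|_{L^1}$, as required by the corollary.
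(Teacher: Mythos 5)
Your proposal takes a genuinely different route from the paper. You propose a vector-field (Galilean operator $J_k = x_k + 2it\partial_k$) scheme, controlling weighted $L^2$ norms $\|J^\alpha\phi\|_{L^2}$ and passing to $L^\infty$ via the factorization and Gagliardo--Nirenberg. The paper instead follows Lin--Strauss directly: after first showing $\|\phi(t)\|_{L^\infty}\to 0$ qualitatively (using the uniform $C^s$ bound and $L^4_{t,x}$ Morawetz bound), it estimates Duhamel's formula, splits the time integral into $[0,t/2]$ and $[t/2,t]$, uses an interpolated linear estimate on the near region to produce an integrable kernel $k(t-s)$ with the sublinear power $\|\phi(s)\|_{L^{\infty}}^{1+\delta}$, and closes via a weighted-sup Gronwall on $M(t)=\sup_{s\le t}(1+s^{3/2})\|\phi(s)\|_{L^\infty}$. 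The $\partial_t\phi$ estimate is then obtained the same way, from Duhamel applied to the equation that $\partial_t\phi$ satisfies.

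There is a concrete gap in your proposal. At the crux of the bootstrap you assert that the explicit $t$-factor arising from $[J_k, v_N*|\phi|^2]\phi = 2it(v_N*\partial_k|\phi|^2)\phi$ is compensated by $\|v_N*\nabla|\phi|^2\|_{L^\infty}\lesssim(1+s)^{-5/2}$ ``after interpolating with the $L^8_tL^4_x$ bound.'' The Morawetz bound $\|\phi\|_{L^8_tL^4_x}\le C$ is a time-averaged estimate and does not yield pointwise-in-time decay of $\|\nabla\phi\|_{L^\infty}$; with only the bootstrap ansatz $\|\phi\|_{L^\infty}\lesssim(1+s)^{-3/2}$ and the uniform $H^s$ bound $\|\nabla\phi\|_{L^\infty}\le C$, one gets $\|v_N*\nabla|\phi|^2\|_{L^\infty}\lesssim(1+s)^{-3/2}$, not $(1+s)^{-5/2}$. (The first-order case can actually be saved by the alternate pairing $\|(v_N*\nabla|\phi|^2)\phi\|_{L^2}\le\|\phi\|_{L^\infty}\,\|v_N\|_{L^1}\,\|\phi\|_{L^\infty}\,\|\nabla\phi\|_{L^2}\lesssim t^{-3}$, giving an integrable forcing $\lesssim t^{-2}$, but you do not write this.) The real obstruction is at $|\alpha|\ge 2$, which is unavoidable if you want $L^\infty$ decay in 3D via Sobolev embedding: the commutator produces $(2it)^2(v_N*\partial^2|\phi|^2)\phi$, and $\|\partial^2|\phi|^2\|_{L^2}$ contains $\bar\phi\,\partial^2\phi$ which has no decay from the $H^s$ bound alone; the best estimate I see is forcing $\lesssim t^{1/2}$ or, with considerable extra work distributing $J$ into the convolution, borderline $\lesssim t^{-1}$, which is log-divergent and degrades the rate to $t^{-3/2}\log t$. (For local NLS the $t$-factors cancel via gauge-invariance since $J_k(|\phi|^2\phi)=2|\phi|^2J_k\phi-\phi^2\overline{J_k\phi}$; this identity fails for the Hartree nonlinearity.) Your remark that one should ``always integrate by parts so that derivatives act on $|\phi|^2$ rather than on $v_N$'' addresses the $N$-dependence but not this $t$-growth. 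Finally, routing the $\partial_t\phi$ estimate through $\Delta\phi$ is an unnecessary detour; the paper handles $\partial_t\phi$ directly by Duhamel.
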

\begin{proof}
The proof follows the outline of \cite{L-S}, except that
we have two modern ingredients which were not available to Lin and Strauss in 1977:
\begin{align*}
&\|\phi\|_{C^s(\mathbb R^{3+1})} \le C_s \, \, s \in \mathbb N\\
&\|\phi\|_{L^4(\mathbb R^{3+1})} \le C
\end{align*}
This implies that $\|\phi(t, \cdot)\|_{L^{\infty}(\mathbb R^{3})} \to 0 $ as $t \to \infty $. Indeed,
\begin{align*}
&\|\nabla (\phi^2)\|_{L^4([n, n+1] \times \mathbb R^3)} \\
&\le 2 \|\nabla \phi\|_{L^{\infty}(\mathbb R^{3+1})}
\|\phi\|_{L^4([n, n+1] \times \mathbb R^3)} \to 0
\end{align*}
This implies $\|\phi\|_{L^p([n, n+1] \times \mathbb R^3)} \to 0$ for any fixed $4<p<\infty$. Repeating the process one more time implies  $\|\phi(t, \cdot)\|_{L^{\infty}(\mathbb R^{3})} \to 0 $.

We solve \eqref{har} by Duhamel's formula and use the standard $L^{\infty}$ $L^1$ decay estimate for the linear equation. We use the following  estimate:
\begin{align}
&\|e^{i (t-s) \Delta} \left(( v*|\phi|^2 )\phi (s)\right)\|_{L^{\infty}} \label{ex1}\\
& \le \notag
 \frac{C}{|t-s|^{3/2}}
 \|( v*|\phi|^2) \phi (s)\|_{L^1} \le \frac{C}{|t-s|^{3/2}} \|\phi(s, \cdot)\|_{L^{\infty}}
\end{align}
We would also like to estimate $\|e^{i (t-s) \Delta} \left(( v*|\phi|^2 )\phi (s)\right)\|_{L^{\infty}}$ \\
by $\|\nabla e^{i (t-s) \Delta} \left(( v*|\phi|^2 )\phi (s)\right)\|_{L^3}$. This is a false end-point, but becomes true if one replaces $3$ by $3+ \epsilon$. To keep numbers easy, skip the $\epsilon$ and  notice first that
\begin{align}
&\|\nabla e^{i (t-s) \Delta} \left(( v*|\phi|^2 )\phi (s)\right)\|_{L^3} \label{ex3}
 \le \frac{C}{|t-s|^{1/2}}
\|\nabla( v*|\phi|^2 )\phi (s) \|_{L^{3/2}}\\
&\le \frac{C}{|t-s|^{1/2}} \|\nabla \phi\|_{L^2}\|\phi^2\|_{L^6}\notag
\le \frac{C}{|t-s|^{1/2}} \|\phi(s, \cdot)\|^{2/3}_{L^4}\|\phi(s, \cdot)\|^{4/3}_{L^{\infty}}\\
&\le \frac{C}{|t-s|^{1/2}} \|\phi(s, \cdot)\|^{4/3}_{L^{\infty}} \notag
\end{align}
Now, using $3+ \epsilon$ rather than $3$ leads to an estimate of the form
\begin{align}
&\|e^{i (t-s) \Delta} \left(( v*|\phi|^2 )\phi (s)\right)\|_{L^{\infty}} \label{ex2}
\le \frac{C}{|t-s|^{1/2+ \epsilon'}} \|\phi(s, \cdot)\|^{4/3 - \epsilon''}_{L^{\infty}}
\end{align}
Combining \eqref{ex1} and \eqref{ex2} we get: There exists a kernel $ \in L^1([0, \infty))$ and $\delta>0$ such that
\begin{align}
&\|e^{i (t-s) \Delta} \left(( v*|\phi|^2 )\phi (s)\right)\|_{L^{\infty}} \label{ex3}
\le k(t-s) \|\phi(s, \cdot)\|^{1+\delta}_{L^{\infty}}
\end{align}
Putting all together
\begin{align*}
\|\phi(t, \cdot)\|_{L^{\infty}} \le \frac{C}{t^{3/2}} \|\phi_0\|_{L^1} + \int_0^{t/2} \frac{C}{|t-s|^{3/2} }\|\phi(s, \cdot)\|_{L^{\infty}} ds +\int_{t/2}^t k(t-s)\|\phi(s, \cdot)\|^{1+\delta}_{L^{\infty}} ds
\end{align*}
Denoting $M(t)= \sup_{0 <s<t} (1+s^{3/2}) \| \phi(s, \cdot)\|_{L^{\infty}}$,
We have, for $t>1$,
\begin{align*}
M(t) \le C \|\phi_0\|_{L^1} + \frac{C}{(1+|t|^{3/2})}\int_0^{t/2} \frac{1}{(1+|s|^{3/2})} M(s) ds
+ C \sup_{t/2<s<t}\|u(s, \cdot)\|^{\delta}_{L^{\infty}} M(t)
\end{align*}
The last term can be absorbed in $M(t)$, and the result follows by Gronwall's inequality.
Now that we know that $\|\phi(s, \cdot)\|_{L^{\infty}} \le \frac{C}{1+s^{3/2}}$, it is very easy to estimate $\partial_t \phi$. We use \eqref{ex1} and \eqref{ex3} (with $3+\epsilon$ replacing 3), as well as the fact that all norms
$\|\partial^{\alpha} \phi(s, \cdot)\|_{L^p} \le C_{\alpha, p}$ uniformly in $s$, for all $\p \ge 2$. This is a consequence of
Proposition \eqref{Hs} (boundedness of the $H^s$ norms).
\begin{align*}
&\|\partial_t \phi(t, \cdot)\|_{L^{\infty}} \le \frac{C}{t^{3/2}} \|\partial_t\phi_0\|_{L^1} + \int_0^t
\|e^{i (t-s) \Delta} \partial_s \left(( v*|\phi|^2 )\phi (s)\right)\|_{L^{\infty}}\\
& \le \frac{C}{t^{3/2}}+ C \int_0^{t-1} \frac{1}{1+|t-s|^{3/2}}\|\partial_s \left(( v*|\phi|^2 )\phi (s)\right)\|_{L^1} ds\\
&+ C \int_{t-1}^{t} \frac{1}{1+|t-s|^{1/2 + \epsilon}}\|\nabla \partial_s \left(( v*|\phi|^2 )\phi (s)\right)\|_{L^{3/2-\epsilon'}} ds\\
& \le \frac{C}{t^{3/2}}+ C \int_0^{t-1} \frac{1}{1+|t-s|^{3/2}}\|\partial_s \left(( v*|\phi|^2 )\phi (s)\right)\|_{L^1} ds\\
&  + C \int_{t-1}^{t} \frac{1}{1+|t-s|^{1/2 + \epsilon}}\|\nabla \partial_s \left(( v*|\phi|^2 )\phi (s)\right)\|_{L^{3/2-\epsilon'}} ds\\
&  \le \frac{C}{t^{3/2}} + C \int_0^{t-1} \frac{1}{1+|t-s|^{3/2}}\|\phi (s)\|_{L^{\infty}} ds\\
&+ C \int_{t-1}^{t} \frac{1}{1+|t-s|^{1/2 + \epsilon}}\|\phi (s)\|_{L^{\infty}} ds\\
\end{align*}
If we estimate $\|\phi (s)\|_{L^{\infty}}$ using \eqref{linfest}, we are done.

\end{proof}

By interpolating with the $L^2$ uniform bound we get the next Corollary.
\begin{corollary}
\label{l3}Let $\phi$ be a solution to \eqref{har}. There exists $C$ depending only on $\|\phi_0\|_{W^{k, 1}}$
for $k$ sufficiently large such that
\begin{align*}
\|\phi(t, \cdot)\|_{L^3} + \|\partial_t \phi(t, \cdot)\|_{L^3}  \le \frac{C}{1+t^{\frac{1}{2}} }\ . \\
\end{align*}
\end{corollary}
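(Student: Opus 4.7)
The proof is a routine interpolation argument, exactly as suggested by the sentence preceding the statement.

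My plan is to combine the two pointwise-in-$t$ estimates already at our disposal: the uniform $L^2$ bound coming from conservation of mass (and, for the time derivative, from the equation together with Proposition \ref{Hs}), and the $t^{-3/2}$ decay in $L^\infty$ from Corollary \ref{linf}. Then the log-convex interpolation inequality
\begin{align*}
\|f\|_{L^3} \le \|f\|_{L^2}^{2/3}\,\|f\|_{L^\infty}^{1/3}
\end{align*}
gives the decay rate $t^{-1/2}$ for large $t$, while Sobolev embedding together with the uniform $H^s$ bound from Proposition \ref{Hs} handles $0 \le t \le 1$. Gluing the two regimes yields the stated $1/(1+t^{1/2})$ decay.

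More precisely, first I would observe that $\|\phi(t,\cdot)\|_{L^2} = \|\phi_0\|_{L^2}$ by conservation of mass, which is clearly controlled by $\|\phi_0\|_{W^{k,1}}$ via the $L^1 \cap L^\infty \subset L^2$ embedding once $k$ is large enough. For the time derivative, I rewrite the equation \eqref{har} as
\begin{align*}
\partial_t \phi = i\Delta \phi - i\left(v_N * |\phi|^2\right)\phi ,
\end{align*}
and bound both terms in $L^2$ by $\|\phi(t,\cdot)\|_{H^2}^3$ (using Proposition \ref{Hs}), which is uniformly bounded in $t$. Thus $\|\partial_t\phi(t,\cdot)\|_{L^2} \le C$ uniformly.

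Then applying the interpolation inequality together with Corollary \ref{linf} gives, for $t \ge 1$,
\begin{align*}
\|\phi(t,\cdot)\|_{L^3} \le \|\phi(t,\cdot)\|_{L^2}^{2/3}\,\|\phi(t,\cdot)\|_{L^\infty}^{1/3} \le C \cdot \left(\frac{C}{t^{3/2}}\right)^{1/3} = \frac{C}{t^{1/2}} ,
\end{align*}
and an identical estimate with $\phi$ replaced by $\partial_t\phi$. For $0 \le t \le 1$, Sobolev embedding and the uniform $H^s$ bound (Proposition \ref{Hs}) give $\|\phi(t,\cdot)\|_{L^3} + \|\partial_t\phi(t,\cdot)\|_{L^3} \le C$. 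Combining both ranges yields the desired estimate $C/(1+t^{1/2})$. There is no real obstacle here; the only thing to be careful about is justifying the uniform $L^2$ bound on $\partial_t\phi$, which I handle by invoking Proposition \ref{Hs} with $s=2$.
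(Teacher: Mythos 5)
Your proposal is correct and is exactly the interpolation argument the paper intends — the paper offers no further detail beyond the remark ``By interpolating with the $L^2$ uniform bound we get the next Corollary,'' and your write-up is a faithful elaboration of that one line. The only cosmetic slip is the claim that \emph{both} terms of $i\Delta\phi - i(v_N*|\phi|^2)\phi$ are bounded by $\|\phi\|_{H^2}^3$: the linear term $\Delta\phi$ is bounded by $\|\phi\|_{H^2}$ alone, and the $L^\infty$ estimate of $v_N*|\phi|^2$ used in the cubic term really wants $\|\phi\|_{H^s}$ with $s>3/2$ rather than $s=2$ exactly; but since Proposition \ref{Hs} furnishes uniform $H^s$ bounds for every $s$ (given that $\phi_0\in W^{k,1}$ with $k$ large), this has no bearing on the conclusion.
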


\section{Estimates for the pair excitations}

Define $\cht := \delta + p_2$, $\sht :=s_2$, and also
$\ch := \delta + p_1$, $\sht :=s_1$
so that, see \eqref{pair-S-2}, \eqref{pair-W-2} become
\begin{subequations}
\begin{align}
&{\bf S}\left(s_2\right)= 2 m +m \circ p_2 + \pb_2 \circ m \label{Ppair}\\
&{\bf W}\left(\bar{p}_2\right)= m \circ \sb_2- s_2 \circ \overline m \label{oldPpair}\\
&s_2(0, \cdot)=p_2(0, \cdot)=0 \notag
\end{align}
\end{subequations}
The goal of this section is to prove the following theorem.
\begin{theorem} \label{sc4thm} Assume $\phi_0 \in  W^{k, 1}$ for $k$  sufficiently large.
The following estimates hold:
\begin{align}
\|s_2(t, \cdot)\|_{L^2(\mathbb R^6)} + \|p_2(t, \cdot)\|_{L^2(\mathbb R^6)} \label{not}
\le C\log ( 1+ t)
\end{align}
where $C$  depends on  $\|\phi(0, \cdot)\|_{ W^{k, 1}}$ for some finite $k$. A similar result holds for the higher time derivatives, but we will not use it or prove it.
\end{theorem}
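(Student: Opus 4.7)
The plan is to combine $L^2_{xy}$ energy estimates for $s_2$ and $p_2$ with a Duhamel analysis that handles the singular linear forcing $2m$ through the smoothing of the two-body Schr\"odinger evolution. A preliminary observation is that $g$ is Hermitian as a kernel (its local part is real and the non-local piece $v_N(x-y)\bar\phi(x)\phi(y)$ is manifestly Hermitian), so the quadratic operators $s \mapsto g^T \circ s + s\circ g$ and $p \mapsto [g^T, p]$ are self-adjoint on $L^2_{xy}$, and both $\S$ and $\W$ generate unitary propagators $U_\S(t, \tau)$, $U_\W(t, \tau)$ on $L^2_{xy}$. Pairing \eqref{Ppair}--\eqref{oldPpair} with $s_2, \bar p_2$ and taking imaginary parts then gives
\[
\tfrac{1}{2}\tfrac{d}{dt}\|s_2\|_{L^2}^2 = -\operatorname{Im}\bigl\langle s_2,\, 2m + m \circ p_2 + \bar p_2 \circ m\bigr\rangle,
\]
and a similar identity for $\|p_2\|^2$.

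The bilinear-in-unknowns forcing terms ($m \circ p_2$, $\bar p_2 \circ m$, $m \circ \bar s_2$, $s_2 \circ \bar m$) can be treated directly in $L^2_{xy}$ thanks to the convolution structure of $m$ and the uniform bound $\|v_N\|_{L^1} = \|v\|_{L^1}$. Writing for instance $(m \circ p_2)(x,y) = -\phi(x)\bigl(v_N * (\phi\, p_2(\cdot, y))\bigr)(x)$, Young's inequality gives
\[
\|m \circ p_2\|_{L^2_{xy}} \le \|\phi(t,\cdot)\|_{L^\infty}^2\, \|v\|_{L^1}\, \|p_2\|_{L^2_{xy}},
\]
which by the decay $\|\phi(t,\cdot)\|_{L^\infty} \lesssim (1+t)^{-3/2}$ from Corollary~\ref{linf} is $C (1+t)^{-3}\|p_2\|_{L^2}$, and likewise for the other bilinear pieces. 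These contributions are integrable in $t$ and will close by Gr\"onwall.

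The heart of the argument is the singular linear forcing $2m$: since $\|m(t)\|_{L^2_{xy}} \sim N^{3\beta/2}$, a pointwise-in-time $L^2$ estimate on this forcing blows up with $N$, so the direct energy method fails. I would therefore treat the associated Duhamel integral
\[
I(t) := 2i\int_0^t U_\S(t, \tau)\, m(\tau)\, d\tau
\]
by a $TT^*$ reduction,
\[
\|I(t)\|_{L^2_{xy}}^2 \le 4\int_0^t\!\int_0^t \bigl|\bigl\langle m(s_1),\, U_\S(s_1, s_2)\, m(s_2)\bigr\rangle\bigr|\, ds_1\, ds_2.
\]
Passing to center-of-mass and relative coordinates $X = (x+y)/2$, $Y = x-y$, the principal part of $U_\S$ factorizes as $e^{i(s_1-s_2)\Delta_X/2}\otimes e^{2i(s_1-s_2)\Delta_Y}$, while $m = -v_N(Y)\phi(X+Y/2)\phi(X-Y/2)$ is $N$-concentrated on $Y = 0$. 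The three-dimensional Schr\"odinger dispersion in $Y$ converts the concentration of $v_N$ into a time-integrated kernel whose $L^2_Y$ norm grows only logarithmically (morally the boundary-of-$L^2$ behavior of the free 3-D resolvent $|Y|^{-1}$), while the $L^\infty$-decay of $\phi$ from Corollary~\ref{linf} supplies the integrability in $s_1, s_2$; the restriction $\beta < 1$ is what allows treating the $\phi$-dependent lower-order pieces of $g$ perturbatively and is sharp in the scaling count for $v_N$.

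Finally I would close the argument by a Gr\"onwall-type bootstrap. Setting $M(t) := \|s_2(t)\|_{L^2} + \|p_2(t)\|_{L^2}$ and combining the estimates above yields
\[
M(t) \le C\log(1+t) + C\int_0^t (1+s)^{-3}\, M(s)\, ds,
\]
from which $M(t) \le C\log(1+t)$ follows. The principal obstacle is the logarithmic bound on $\|I(t)\|_{L^2_{xy}}$: one must leverage dispersive smoothing in the relative variable to absorb the $N^{3\beta/2}$ concentration of $m$, since no single $L^p_{xy}$ norm of $m$ simultaneously gives $N$-uniformity and time-integrability along the diagonal $s_1 = s_2$.
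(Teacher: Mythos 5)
Your overall architecture is sound: split off the singular linear forcing $2m$ and handle it via Duhamel, treat the bilinear couplings $m\circ p_2$, $\bar p_2\circ m$, etc., in $L^2_{xy}$ via $\|\phi(t)\|_{L^\infty}^2\lesssim (1+t)^{-3}$, and close with a Gr\"onwall bootstrap for $M(t)=\|s_2\|_{L^2}+\|p_2\|_{L^2}$. That organization works and is a legitimate alternative to the paper's decomposition $s_2=s_a+s_e$ (the paper first solves $\S(s_a)=2m$, then feeds the better-behaved $M=m\circ\bar s_a-s_a\circ\bar m$ as a forcing in the $(s_e,p_2)$ system).

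However, the central technical step --- the $N$-uniform bound
\[
\Bigl\|\int_0^t U_\S(t,\tau)\,m(\tau)\,d\tau\Bigr\|_{L^2_{xy}}\le C\log(1+t)
\]
--- is asserted but not proved, and the heuristic you offer for it is misdirected. You attribute the logarithm to ``the boundary-of-$L^2$ behaviour of the free 3-D resolvent $|Y|^{-1}$'' in the relative variable $Y=x-y$. That is not where the logarithm comes from: $|Y|^{-1}$ is locally in $L^2(\mathbb R^3)$, and the $N$-uniformity issue lives in frequency, not in the small-$|Y|$ region. What the paper actually does (Lemmas \ref{ker} and \ref{s_a}) is a \emph{non-stationary phase} argument, i.e., one integration by parts in time in the Duhamel integral, which is the same elementary identity your $TT^*$ kernel encodes. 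After conjugating with the free flow, one is left with the frequency weight $(|\xi|^2+|\eta|^2)^{-1}$ hitting $\hat m$. The $N$-uniformity then comes from the specific convolution structure of $m_N(x,y)=-v_N(x-y)\phi(x)\phi(y)$: writing $m_N$ as an integral over translates $\phi\,\phi_z$ against $v_N(z)\,dz$ and applying Cauchy--Schwarz in $z$, Plancherel and Hardy--Littlewood--Sobolev give
\[
\int\frac{|\widehat{m_N}(\xi,\eta)|^2}{(|\xi|^2+|\eta|^2)^{2}}\,d\xi\,d\eta\le C\,\|v_N\|_{L^1}^2\,\|\phi\|_{L^3}^4=C\,\|v\|_{L^1}^2\,\|\phi\|_{L^3}^4,
\]
uniformly in $N$. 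The logarithm then arises not from any $Y$-divergence but from time decay: the boundary terms of the integration by parts give $\|\phi(0)\|_{L^3}^2+\|\phi(t)\|_{L^3}^2$, while the remaining time integral carries $\|\phi(s)\|_{L^3}\|\partial_s\phi(s)\|_{L^3}\lesssim (1+s)^{-1}$ (Corollary \ref{l3}), whose integral over $[0,t]$ is $\log(1+t)$. You would also need the analogue of Lemma \ref{lem3} to pass from the flat Laplacian propagator to the full $U_\S$: the potential part $V$ of $\S$ is $L^2\to L^2$ bounded with norm $\lesssim(1+t)^{-3}$, so the correction is $O(1)$ after another energy estimate. Without Lemma~\ref{ker} and the explicit $1/(1+s)$ decay feeding the logarithm, your Gr\"onwall inequality $M(t)\le C\log(1+t)+C\int_0^t(1+s)^{-3}M(s)\,ds$ has no established source for its first term, and the proof does not close.
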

An immediate corollary is of the above theorem is,
\begin{corollary} \label{s1est}
The following estimates hold:
\begin{align*}
&\|s_1(t, \cdot)\|_{L^2(\mathbb R^6)}
\le C\log ( 1+ t)\\
&\|p_1(t, \cdot)\|_{L^2(\mathbb R^6)}
\le C\log ( 1+ t)\\
&\int |p_1(x, x)| dx \le  C\log^2 ( 1+ t)\ .
\end{align*}
\end{corollary}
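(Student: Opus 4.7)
The plan is to deduce the bounds on $(s_1, p_1)$ directly from those on $(s_2, p_2)$ in Theorem \eqref{sc4thm}, using only the algebraic relation $e^{2K} = (e^K)^2$; no further PDE input is needed. Writing
$$e^K = \left(\begin{matrix}\ch & \shb\\ \sh & \chb\end{matrix}\right), \qquad e^{-K} = \left(\begin{matrix}\ch & -\shb\\ -\sh & \chb\end{matrix}\right)$$
(the second identity because $\ch, \chb$ are even and $\sh, \shb$ odd in $k$), the condition $e^K \circ e^{-K} = I$ yields the identities
$$\ch \circ \ch = \delta + \shb \circ \sh, \qquad \chb \circ \chb = \delta + \sh \circ \shb, \qquad \sh \circ \ch = \chb \circ \sh.$$
Multiplying out $e^{2K} = e^K \circ e^K$ and using these gives the ``double-angle'' formulas
$$s_2 = \sht = 2\,\chb \circ s_1, \qquad p_2 = \cht - \delta = 2\,\sb_1 \circ s_1.$$
Since $k$ is a symmetric kernel, one has $\sb_1 = s_1^*$ as operators on $L^2(\mathbb R^3)$, so $\ch = \cosh\bigl((s_1^* s_1)^{1/2}\bigr)$ and $\chb = \cosh\bigl((s_1 s_1^*)^{1/2}\bigr)$ are bounded, self-adjoint, and satisfy $\ch, \chb \ge I$.

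For the bound on $\|s_1\|_{L^2(\mathbb R^6)}$, which is exactly the Hilbert--Schmidt norm of $s_1$, I would use $\chb^2 \ge I$:
$$\tfrac14 \|s_2\|_{L^2}^2 = \|\chb \circ s_1\|_{\mathrm{HS}}^2 = \mathrm{tr}\bigl(s_1^*\, \chb^2\, s_1\bigr) \ge \mathrm{tr}(s_1^* s_1) = \|s_1\|_{L^2}^2,$$
so $\|s_1\|_{L^2} \le \tfrac12 \|s_2\|_{L^2} \le C\log(1+t)$ by Theorem \eqref{sc4thm}.

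For $p_1$, substituting $\ch = I + p_1$ into $\ch^2 = \delta + \sb_1 \circ s_1$ gives $2 p_1 + p_1^2 = \sb_1 \circ s_1$, equivalently
$$p_1 = (2I + p_1)^{-1}(\sb_1 \circ s_1).$$
Positivity $p_1 \ge 0$ gives $\|(2I + p_1)^{-1}\|_{\mathrm{op}} \le \tfrac12$, so
$$\|p_1\|_{L^2(\mathbb R^6)} \le \tfrac12 \|\sb_1 \circ s_1\|_{L^2(\mathbb R^6)} = \tfrac14 \|p_2\|_{L^2(\mathbb R^6)} \le C\log(1+t).$$

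Finally, for $\int |p_1(x,x)|\,dx$, the same quadratic identity rearranges as $2\,p_1(x,x) = (\sb_1 \circ s_1)(x,x) - (p_1 \circ p_1)(x,x)$, and both terms on the right are pointwise nonnegative: by symmetry of $s_1$ one has $(\sb_1 s_1)(x,x) = \int |s_1(x,z)|^2\,dz$, and by self-adjointness of $p_1$ one has $(p_1^2)(x,x) = \int |p_1(x,z)|^2\,dz$. Hence $2|p_1(x,x)| \le (\sb_1 s_1)(x,x) + (p_1^2)(x,x)$, and integrating yields
$$2\int |p_1(x,x)|\,dx \le \|s_1\|_{L^2}^2 + \|p_1\|_{L^2}^2 \le C\log^2(1+t)$$
from the previous two bounds. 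The only subtlety is the operator-theoretic bookkeeping (self-adjointness and positivity of $\ch, \chb$, and the identification $\sb_1 = s_1^*$ for symmetric kernels); all of this follows from the paper's symmetric-kernel convention, after which the corollary is purely algebraic.
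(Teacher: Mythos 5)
Your proof is correct and follows essentially the same route as the paper: it rests on the double-angle identities $s_2 = 2\,\chhb\circ s_1$ (equivalently $2 s_1\circ\chh$) and $p_1\circ p_1 + 2p_1 = \overline{s}_1\circ s_1$, together with $\chh\geq I$ and the pointwise nonnegativity of $p_1(x,x)$ and $(p_1\circ p_1)(x,x)$. The one small divergence is your $L^2$ bound on $p_1$, which you obtain by inverting $(2I+p_1)$ and comparing with $\|p_2\|_{L^2}$; the paper instead just takes the trace of $p_1\circ p_1 + 2p_1 = \overline{s}_1\circ s_1$ and drops the nonnegative term $2\,\mathrm{tr}\,p_1$, which yields $\|p_1\|_{L^2}\leq\|s_1\|_{L^2}$ and the trace bound on $p_1$ in one stroke — slightly more economical, but both are valid.
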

\begin{proof} (of corollary \eqref{s1est})
Since $\sht = 2 \sh \circ \ch$, we get
\begin{align*}
&\|s_1(t, \cdot)\|_{L^2(\mathbb R^6)} \le \frac{1}{2} \|s_2(t, \cdot)\|_{L^2(\mathbb R^6)} \|\ch^{-1}\|_{operator}\\
&\le  \frac{1}{2} \|s_2(t, \cdot)\|_{L^2(\mathbb R^6)}\ .
\end{align*}
We also have  $p_1(x, x) \ge 0$, $p_1\circ p_1(x, x) \ge 0$, so taking traces in the relation
\begin{align*}
p_1\circ p_1 + 2 p_1 = \overline{s}_1 \circ s_1
\end{align*}
gives the other estimates.
\end{proof}

Before starting the proof of Theorem \eqref{sc4thm} , we need some preliminary lemmas.
\begin{lemma} \label{ker}
Recall  $
m(t, x, y) =- v_N(x-y) \phi(t, x) \phi(t, y)$. Then there exists $C$ such that
\begin{subequations}
\begin{align}
&\int \frac{|\widehat{ m_N} (t, \xi, \eta)|^2}{\left(|\xi|^2 + |\eta|^2\right)^2} d \xi d \eta\le C \label{mint}
 \|\phi(t, \cdot)\|_{L^3}^4\\
 &\mbox{and also} \notag\\
 &\int \frac{|\partial_t \widehat{ m_N} (t, \xi, \eta)|^2}{\left(|\xi|^2 + |\eta|^2\right)^2} d \xi d \eta
 \le  \|\phi(t, \cdot)\|_{L^3}^2 \| \partial_t\phi(t, \cdot)\|_{L^3}^2\ . \label{mint1}\\
 \notag
\end{align}
\end{subequations}
Similar estimates hold for higher time derivatives.
\end{lemma}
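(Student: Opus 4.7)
The plan is to recognize the integral in \eqref{mint} as the squared norm $\|m_N\|_{\dot H^{-2}(\mathbb{R}^6)}^2$, rewrite it in position space as a Riesz potential bilinear form, break the $x$--$y$ symmetry of the kernel in order to integrate out one pair of variables using the convolution structure of $v_N$, and conclude with the three-dimensional Hardy--Littlewood--Sobolev inequality together with H\"older and Young. The reason we cannot apply Hardy--Littlewood--Sobolev on $\mathbb{R}^6$ directly is that a bound of the form $\|m_N\|_{\dot H^{-2}} \lesssim \|m_N\|_{L^{6/5}(\mathbb{R}^6)}$ forces $\|v_N\|_{L^{6/5}} = N^{\beta/2}\|v\|_{L^{6/5}}$ into the estimate, which blows up as $N \to \infty$; the only scale-invariant norm of $v_N$ is $\|v_N\|_{L^1} = \|v\|_{L^1}$, so the argument must arrange for $v_N$ to appear exactly through a single factor of $v_N * |\phi|$.

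Writing $z = (x,y) \in \mathbb{R}^6$ and noting $|\xi|^2 + |\eta|^2 = |\zeta|^2$ with $\zeta = (\xi,\eta)$, Plancherel together with the formula for the Green function of $(-\Delta_{\mathbb{R}^6})^{-2}$ (the Riesz potential $I_4$ in dimension $6$, a constant multiple of $|z|^{-2}$) gives
\begin{equation*}
\int \frac{|\widehat{m_N}(\xi,\eta)|^2}{(|\xi|^2+|\eta|^2)^2}\, d\xi\, d\eta = c \iint \frac{m_N(x_1,y_1)\,\overline{m_N(x_2,y_2)}}{|x_1-x_2|^2 + |y_1-y_2|^2}\, dx_1\, dy_1\, dx_2\, dy_2.
\end{equation*}
Taking absolute values, using the asymmetric pointwise bound $(|x_1-x_2|^2 + |y_1-y_2|^2)^{-1} \le |x_1-x_2|^{-2}$, and integrating out $y_1, y_2$ via $\int v_N(x-y)|\phi(y)|\,dy = (v_N * |\phi|)(x)$ collapses the four-fold integral to
\begin{equation*}
\le c \iint \frac{g(x_1)\,g(x_2)}{|x_1-x_2|^2}\, dx_1\, dx_2, \qquad g(x) := |\phi(x)|\,(v_N * |\phi|)(x).
\end{equation*}
The three-dimensional Hardy--Littlewood--Sobolev inequality with kernel $|x_1-x_2|^{-2}$ bounds this by $C\|g\|_{L^{3/2}(\mathbb{R}^3)}^2$. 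H\"older's inequality ($\tfrac13 + \tfrac13 = \tfrac23$) yields $\|g\|_{L^{3/2}} \le \|\phi\|_{L^3}\,\|v_N * |\phi|\|_{L^3}$, and Young's convolution inequality yields $\|v_N * |\phi|\|_{L^3} \le \|v_N\|_{L^1}\|\phi\|_{L^3} = \|v\|_{L^1}\|\phi\|_{L^3}$; chaining these gives \eqref{mint}.

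Estimate \eqref{mint1} follows by applying the same scheme to the two terms produced by the product rule $\partial_t m_N = -v_N(x-y)[\partial_t\phi(x)\phi(y) + \phi(x)\partial_t\phi(y)]$, pairing one factor of $\phi$ with one factor of $\partial_t\phi$ in the H\"older step. Higher time derivatives are analogous, the $L^3$ norms of the resulting time derivatives of $\phi$ being controlled by Proposition \eqref{Hs} and Corollary \eqref{l3}. The main obstacle is securing $N$-independent constants in the face of the concentration of $v_N$: any fully symmetric reduction, for instance the Fourier AM--GM bound $(|\xi|^2+|\eta|^2)^{-2} \le (4|\xi|^2|\eta|^2)^{-1}$, leads to mixed-norm $L^{6/5}_x L^{3/2}_y$ control of $m_N$ which demands $\|v_N\|_{L^{3/2}} = N^{\beta}\|v\|_{L^{3/2}}$ and fails. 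Breaking the $x$--$y$ symmetry by discarding $|y_1-y_2|^2$ from the denominator is precisely what lets us integrate $y$ out completely and replace the singular $v_N$ by the tame $v_N * |\phi|$ before Hardy--Littlewood--Sobolev is brought in.
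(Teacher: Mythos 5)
Your proof is correct and takes a genuinely different route from the paper's. Both arguments share the essential insight — the only $N$-uniform norm available is $\|v_N\|_{L^1}$, so the six-dimensional singularity must be reduced to a three-dimensional one before a Hardy--Littlewood--Sobolev estimate is applied. The paper achieves this on the Fourier side: it writes $v_N(x-y)=\int\delta(x-y-z)\,v_N(z)\,dz$, computes $\widehat{m_N}(\xi,\eta)=\int v_N(z)\,e^{iz\cdot\eta}\,\widehat{\phi\phi_z}(\xi+\eta)\,dz$, applies Cauchy--Schwarz in $z$ (which pulls out $\|v_N\|_{L^1}$), performs a change of variables and integrates $\eta$ out to reduce $(|\xi|^2+|\eta|^2)^{-2}$ to $|\xi|^{-1}$, and finishes with $\int|\widehat{f}(\xi)|^2|\xi|^{-1}d\xi\lesssim\|f\|_{L^{3/2}}^2$ applied to $f=\phi\phi_z$ for each fixed $z$. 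You achieve the reduction on the physical side: after Plancherel and the Riesz potential formula, discarding the $|y_1-y_2|^2$ contribution to the denominator lets you integrate $y_1,y_2$ completely and convert $v_N$ into $v_N*|\phi|$ (extracting $\|v_N\|_{L^1}$ via Young), again arriving at a 3D HLS with kernel $|x_1-x_2|^{-2}$. Your physical-space version arguably gives a slightly cleaner account of \emph{why} the argument must break the $x$--$y$ symmetry, and your discussion of the failure of the naive $\dot H^{-2}(\mathbb{R}^6)\hookleftarrow L^{6/5}(\mathbb{R}^6)$ bound is accurate and clarifying; the paper's version has the minor advantage of isolating a single scalar inner function $\phi\phi_z$ whose regularity is transparent. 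Both handle \eqref{mint1} identically by the product rule.
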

\begin{proof}
Write
\begin{align*}
v_N(x-y)\phi(t, x) \phi(t, y) = \int \delta(x-y-z)v_N(z) \phi(t, x) \phi(t, y) dz\ .
\end{align*}
The Fourier transform of $\delta(x-y-z) \phi(t, x) \phi(t, y)$ is easily computed to be
\begin{align*}
e^{i z \cdot \eta} \widehat{\phi \phi_z}(t, \xi + \eta)
\end{align*}
where we denote $\phi_z(x)=\phi(x-z)$.
Thus
\begin{align*}
&|\widehat{ m_N} (t, \xi, \eta)|^2=|\int v_N(z) e^{i z \cdot \eta} \widehat{\phi \phi_z}(t, \xi + \eta) dz|^2\\
&\le \|v_N\|_{L^1} \int |v_N(z)|| \widehat{\phi \phi_z}(t, \xi + \eta)|^2 dz
\end{align*}
Thus, after a change of variables, the left hand side of \eqref{mint} is dominated by
\begin{align*}
&\int |v_N(z)| \frac{|\widehat{\phi \phi_z}(t, \xi )|^2}{\left(|\xi|^2 + |\eta|^2\right)^2} d \xi d \eta dz\\
&\le C \int |v_N(z)| \frac{|\widehat{\phi \phi_z}(t, \xi )|^2}{|\xi|} d \xi  dz\\
&\le C \int |v_N(z)|\|\phi\|^4_{L^3} dz =C\|\phi\|^4_{L^3}\ .
\end{align*}
We have used the fact that
\begin{align*}
&\int\frac{|\widehat{\phi \phi_z}(t, \xi )|^2}{|\xi|} d \xi \le C \|D_x^{-1/2}\left(\phi \phi_z\right)\|^2_{L^2}\\
&\le C \|\phi \phi_z\|^2_{L^{3/2}} \, \, \mbox{(by Hardy-Littlewood-Sobolev)}\\
&\le C \|\phi\|^4_{L^3}\ .
\end{align*}
The proof of \eqref{mint1} is similar.
\end{proof}
\begin{lemma} \label{s_a}
Let $s_a^0$ be the solution to
\begin{align}
&\left(\frac{1}{i} \frac{\partial}{\partial t} -\Delta_{\mathbb R^6} \right)s^0_a(t, x, y) = 2 m(t, x, y) \label{Seq1}\\
&s^0_a(0, x, y)=0\ . \notag
\end{align}
Then
\begin{align*}
&\|s^0_a(t, \cdot)\|_{L^2(\mathbb R^6)}\\
 &\le C \bigg(\|\phi(0, \cdot)\|_{L^3 }^2 + \|\phi(t, \cdot)\|_{L^3}^2
+
&  \int_0^t  \|\phi(s, \cdot)\|_{L^3 }
\|\partial _s\phi(s, \cdot)\|_{L^3 } ds\bigg)\\
&\le C \log (1+t)
\end{align*}
where $C$ depends only on  $\|\phi_0\|_{W^{k, 1}}$.
\end{lemma}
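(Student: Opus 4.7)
The plan is to Fourier transform \eqref{Seq1} in the spatial variables on $\mathbb R^6$, write the solution by Duhamel, and then integrate by parts in $s$ so as to trade the two powers of the phase $A:=|\xi|^2+|\eta|^2$ coming from the Schr\"odinger symbol against the two negative derivatives on $m$ supplied by Lemma \eqref{ker}. Concretely, the Fourier transform of \eqref{Seq1} is the scalar ODE $\partial_t\widehat{s^0_a}=-iA\widehat{s^0_a}+2i\widehat m$ with zero data, whose solution is $\widehat{s^0_a}(t,\xi,\eta)=2i\int_0^t e^{-i(t-s)A}\widehat m(s,\xi,\eta)\,ds$. Using $e^{-i(t-s)A}=(iA)^{-1}\partial_s e^{-i(t-s)A}$ and integrating by parts in $s$ yields
\begin{equation*}
\widehat{s^0_a}(t,\xi,\eta)=\frac{2\widehat m(t,\xi,\eta)}{A}-\frac{2 e^{-itA}\widehat m(0,\xi,\eta)}{A}-\frac{2}{A}\int_0^t e^{-i(t-s)A}\,\partial_s\widehat m(s,\xi,\eta)\,ds.
\end{equation*}

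Taking $L^2_{\xi,\eta}$ norms, Parseval identifies the left-hand side with $\|s^0_a(t,\cdot)\|_{L^2(\mathbb R^6)}$, and Minkowski moves the norm inside the time integral on the right. The two boundary terms are then controlled by the first half of Lemma \eqref{ker}, giving $\|\widehat m(\tau)/A\|_{L^2}\le C\|\phi(\tau,\cdot)\|_{L^3}^2$ at $\tau=0,t$; the interior term is controlled by the second half of the lemma, yielding $\|\partial_s\widehat m(s)/A\|_{L^2}\le C\|\phi(s,\cdot)\|_{L^3}\|\partial_s\phi(s,\cdot)\|_{L^3}$. This produces the first displayed inequality. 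For the quantitative $\log(1+t)$ bound, I will apply Corollary \eqref{l3}, which gives both $\|\phi(s,\cdot)\|_{L^3}$ and $\|\partial_s\phi(s,\cdot)\|_{L^3}$ bounded by $C(1+s^{1/2})^{-1}$; the integrand then decays like $(1+s)^{-1}$, whose primitive is $\log(1+t)$, while the two boundary contributions are $O(1)$ and can be absorbed into the logarithm.

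The principal obstacle, and the reason a direct energy estimate on \eqref{Seq1} fails, is that $\|m(t,\cdot)\|_{L^2(\mathbb R^6)}$ is \emph{not} uniformly bounded in $N$: the kernel $m(t,x,y)=-v_N(x-y)\phi(t,x)\phi(t,y)$ concentrates on the diagonal $\{x=y\}$ at scale $N^{-\beta}$. Lemma \eqref{ker} compensates for this by showing that $m$ nevertheless lies in $\dot H^{-2}(\mathbb R^6)$ uniformly in $N$, and the two factors of $A^{-1}$ generated by a single integration by parts in time in the Duhamel formula are exactly what is needed to match this two-derivative deficit. No further dispersive information about the free Schr\"odinger group on $\mathbb R^6$ is required beyond this cancellation.
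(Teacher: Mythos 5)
Your proposal is correct and follows essentially the same route as the paper: Duhamel in Fourier variables on $\mathbb{R}^6$, one integration by parts in time to produce the two boundary terms $\widehat m(\tau)/A$ at $\tau=0,t$ and the interior term $\partial_s\widehat m(s)/A$, then Lemma \eqref{ker} for the $\dot H^{-2}$ bounds and Corollary \eqref{l3} for the $\log(1+t)$ growth. The only cosmetic difference is that the paper writes the oscillatory factor as $e^{is(|\xi|^2+|\eta|^2)}$ rather than $e^{-i(t-s)A}$, which has no effect on the $L^2$ norms.
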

\begin{proof}
Solving \eqref{Seq1} by Duhamel's formula we get
\begin{align}
&\|s^0_a(t, \cdot)\|_{L^2(\mathbb R^6)} \notag\\
&\le C
\|\int_0^t e^{i s \left(|\xi|^2 + |\eta|^2\right)} \hat{m}(s, \xi, \eta) d s \|_{L^2(\mathbb R^6)}\notag \\
&\le C \bigg(
\|\frac{ \hat{m}(0, \xi, \eta)}{|\xi|^2 + |\eta|^2 }  \|_{L^2(\mathbb R^6)}
+\|\frac{ \hat{m}(t, \xi, \eta)}{|\xi|^2 + |\eta|^2 }  \|_{L^2(\mathbb R^6)} \bigg) \label{terms}\\
&+\|\int_0^t e^{i s \left(|\xi|^2 + |\eta|^2\right)}\frac{ \frac{\partial}{\partial s}\hat{m}(s, \xi, \eta)}{|\xi|^2 + |\eta|^2 } d s \|_{L^2(\mathbb R^6)}
\ .\label{terms1}
\end{align}
The terms \eqref{terms} are estimated directly by Lemma \eqref{ker}.
For the last term \eqref{terms1} move the norm inside the integral and use Lemma \eqref{ker} as well as Corollary
\eqref{l3}.
\end{proof}

\begin{lemma} \label{lem3}
Let $s_a$ be the solution to
\begin{align}
&\S(s_a) = 2 m(t, x, y) \label{eq1}\\
&s_a(0, x, y)=0\ . \notag
\end{align}
Then
\begin{align*}
 \|s_a(t, \cdot)\|_{L^2(\mathbb R^6)} \le C \log(1+t)
 \end{align*}
where $C$  depends  on  $ \|\phi_0\|_{W{k, 1}}$.
\end{lemma}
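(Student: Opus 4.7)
The plan is to compare $s_a$ with the free-Schrödinger solution $s_a^0$ from Lemma \eqref{s_a}, exploiting the fact that the operator ${\bf S}$ differs from $\frac{1}{i}\partial_t - \Delta_{\mathbb R^6}$ only by lower order potential-type terms involving $\phi$. Expanding $g^T \circ s_a + s_a \circ g$ term by term, the pieces coming from $-\Delta_x\delta(x-y)$ inside $g$ combine, after integration by parts, to give the full six-dimensional Laplacian $-\Delta_{\mathbb R^6}\,s_a$, while the remaining parts produce
\begin{align*}
E(s_a)(t,x,y) &:= (v_N*|\phi|^2)(t,x)\,s_a(t,x,y) + (v_N*|\phi|^2)(t,y)\,s_a(t,x,y)\\
&\quad + \phi(t,x)\int v_N(z-x)\bar\phi(t,z)\,s_a(t,z,y)\,dz\\
&\quad + \phi(t,y)\int v_N(z-y)\bar\phi(t,z)\,s_a(t,x,z)\,dz.
\end{align*}
Setting $r := s_a - s_a^0$ and using $\S(s_a) = 2m = (\frac{1}{i}\partial_t - \Delta_{\mathbb R^6})s_a^0$, one obtains $\left(\frac{1}{i}\partial_t - \Delta_{\mathbb R^6}\right)r = -E(s_a)$ with $r(0,\cdot)=0$.

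Next I would show the pointwise-in-time bound
\begin{align*}
\|E(s_a)(t,\cdot)\|_{L^2(\mathbb R^6)} \le C\,\|\phi(t,\cdot)\|_{L^\infty}^2\,\|s_a(t,\cdot)\|_{L^2(\mathbb R^6)}.
\end{align*}
For the two multiplication terms this is immediate since $\|v_N*|\phi|^2\|_{L^\infty} \le \|v_N\|_{L^1}\|\phi\|_{L^\infty}^2 = \|v\|_{L^1}\|\phi\|_{L^\infty}^2$. For the two nonlocal terms I would rewrite the first as $\phi(x)\bigl(v_N * (\bar\phi\, s_a(\cdot,y))\bigr)(x)$ and apply Young's inequality in the first variable, bringing out another factor of $\|v_N\|_{L^1}$ and two factors of $\|\phi\|_{L^\infty}$. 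The crucial discipline here is that every estimate uses only $\|v_N\|_{L^1}$ and never $\|v_N\|_{L^p}$ for $p>1$, since the latter grow as positive powers of $N$.

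Because $e^{it\Delta_{\mathbb R^6}}$ is unitary on $L^2(\mathbb R^6)$, Duhamel's formula for $r$ yields
\begin{align*}
\|r(t,\cdot)\|_{L^2(\mathbb R^6)} \le \int_0^t \|E(s_a)(\tau,\cdot)\|_{L^2(\mathbb R^6)}\,d\tau \le C\int_0^t \|\phi(\tau,\cdot)\|_{L^\infty}^2\,\|s_a(\tau,\cdot)\|_{L^2}\,d\tau.
\end{align*}
Combined with Lemma \eqref{s_a}, which gives $\|s_a^0(t,\cdot)\|_{L^2}\le C\log(1+t)$, and invoking the decay $\|\phi(\tau,\cdot)\|_{L^\infty}\le C(1+\tau)^{-3/2}$ from Corollary \eqref{linf}, we arrive at
\begin{align*}
\|s_a(t,\cdot)\|_{L^2} \le C\log(1+t) + C\int_0^t \frac{1}{(1+\tau)^3}\,\|s_a(\tau,\cdot)\|_{L^2}\,d\tau.
\end{align*}
Since $\int_0^\infty(1+\tau)^{-3}\,d\tau < \infty$, Gronwall's inequality immediately closes the estimate and delivers the claimed bound $\|s_a(t,\cdot)\|_{L^2}\le C\log(1+t)$.

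The main obstacle is purely bookkeeping: keeping every constant $N$-independent. It would be fatal to invoke $\|v_N\|_{L^\infty}$ or $\|v_N\|_{L^2}$ anywhere. The argument closes only because $\|v_N\|_{L^1}=\|v\|_{L^1}$ is $N$-independent and because the $L^\infty$ decay of $\phi$ from Corollary \eqref{linf} is integrable in time with a margin to spare; this integrability is precisely what tames the Gronwall iteration and preserves the logarithmic growth already present in the free-evolution piece $s_a^0$.
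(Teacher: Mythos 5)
Your proposal is correct, and the decomposition $r = s_a - s_a^0$ is the same one the paper uses (the paper writes $s_a^1$ for your $r$). The two arguments diverge in how they close the estimate. You place the full potential term on the right, $(\frac{1}{i}\partial_t - \Delta_{\mathbb R^6})r = -V(s_a)$, apply Duhamel with the unitary free propagator $e^{it\Delta_{\mathbb R^6}}$, and invoke Gronwall with the integrable weight $\|\phi(\tau,\cdot)\|_{L^\infty}^2 \lesssim (1+\tau)^{-3}$. The paper instead observes that $\S(s_a^1) = -V(s_a^0)$, so the source involves only the \emph{known} free solution, and then runs an energy identity for the operator $\S$ itself, namely $\W(s_a^1 \circ \overline{s_a^1}) = \S(s_a^1)\circ\overline{s_a^1} - s_a^1\circ\overline{\S(s_a^1)}$, whose trace yields $\partial_t\|s_a^1\|_{L^2}^2 \le 2\|V(s_a^0)\|_{L^2}\|s_a^1\|_{L^2}$; integrating directly gives $\|s_a^1\|_{L^2}\le C$ with no iteration. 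Your route is marginally more elementary in that it bypasses the $\W$-trace machinery and uses only unitarity of the free flow, at the cost of needing Gronwall; the paper's route is slightly shorter because the source $V(s_a^0)$ is already controlled. Both deliver the claim, and your discipline of using only $\|v_N\|_{L^1}$ is exactly what keeps the constants $N$-independent, as the paper also implicitly does.
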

\begin{proof}
Let $V$ be the "potential" part of $\S$, so that
\begin{align*}
\S= \frac{1}{i} \frac{\partial}{\partial t} -\Delta_{\mathbb R^6} + V\ .
\end{align*}
 The operator $V$  is  bounded from $L^2$ to $L^2$, with norm bounded by
$\|\phi(t, \cdot)\|^2_{L^{\infty}} \le \frac{C}{1+ t^3}$ (by Corollary \eqref{linf}).
Explicitly,
\begin{align*}
&V(u)(t, x, y)=  \left( (v_N * |\phi|^2 )(t, x) +  (v_N * |\phi|^2 )(t, y) \right) u(t, x, y)\\
&+
\int v_N(x-z) \overline\phi (t, x)  \phi(t, z) u(z, y) dz
+ \int u (x, z)v_N(z-y) \phi (t, z)  \overline \phi(t, y)  dz\ .
\end{align*}
Write
\begin{align*}
s_a= s_a^0 + s_a^1
\end{align*}
where $s_a^0$ is as in the previous lemma,
\begin{align*}
\frac {1}{i} \frac{\partial}{\partial t}s^0_a(t, x, y) -\Delta_{\mathbb R^6} s^0_a(t, x, y)  = 2 m\ .
\end{align*}
Then $s_a^1$ satisfies the equation
\begin{align}
\S(s_a^1)  =- V \left( s_a^0(t, \cdot)\right)\ . \label{ener}
\end{align}
Both $s_a^0$ and $s_a^1$ are zero initially, and the estimate is already known for
$s_a^0$.
. We apply energy estimates to \eqref{ener}.
Recall
\begin{align*}
 \W (s_a^1 \circ \overline {s_a^1})=\S(s_a^1) \circ \overline {s_a^1} - s_a^1 \circ \overline {\S(s_a^1)} =  - V \left( s_a^0(t, \cdot)\right) \circ \overline {s_a^1} + s_a^1 \circ \overline {V \left( s_a^0(t, \cdot)\right)}\ .
\end{align*}
Taking traces, we get
\begin{align*}
&\frac{\partial}{\partial t} \|s_a^1\|^2_{L^2(\mathbb R^6)} \le 2 \|V \left( s_a^0(t, \cdot)\right)\|_{L^2(\mathbb R^6)}
\|s_a^1\|_{L^2(\mathbb R^6)}\\
&\le \frac{C}{1+t^3} \| s_a^0(t, \cdot)\|_{L^2(\mathbb R^6)}
\|s_a^1\|_{L^2(\mathbb R^6)}\\
&\le \frac{ C \log (1+t)}{1+t^3} \|s_a^1(t, \cdot)\|_{L^2(\mathbb R^6)}\ .
\end{align*}
Integrating, we get the estimate
\begin{align*}
 \|s_a^1(t, \cdot)\|_{L^2(\mathbb R^6)} \le C
 \end{align*}
 which implies the claim.
\end{proof}
We are ready for the proof of Theorem \eqref{sc4thm}.
\begin{proof}
Write $s_2=s_a+s_e$ where ${\bf S}\left(s_a\right)=2m$, as in the previous lemma. The kernels
$s_e$ and $p_2$ satisfy the following less singular system:
\begin{subequations}
\begin{align}
&{\bf S}\left(s_e\right)= m \circ p_2 + \bar{p}_2 \circ m \label{newpair1}\\
&{\bf W}\left(\pb_2\right)= \big(m \circ \sab- s_a \circ \overline m \big) +  m \circ \seb- s_e \circ \overline m\ . \label{newpair2}
\end{align}
\end{subequations}
Using lemma \eqref{lem3} to estimate $\|s_a(t, \cdot)\|_{L^2}$ and defining
\begin{align*}
M(t, \cdot)= m \circ \sab- s_a \circ \overline m
\end{align*}
we have
\begin{subequations}
\begin{align}
&{\bf S}\left(s_e\right)= m \circ p_2 + \bar{p}_2 \circ m \label{newpair3}\\
&{\bf W}\left(\pb_2\right)= M +  m \circ \seb- s_e \circ \overline m \label{newpair4}\\
&\mbox{where}\notag\\
&\|M(t, \cdot)\|_{L^2(\mathbb R^6)} \le \frac{C \log(1+t)}{ 1+t^3}\notag
\end{align}
\end{subequations}
since
\begin{align*}
\|m \circ s\|_{L^2} \le C \|\phi\|^2_{L^{\infty}} \|s\|_{L^2} \le \frac{ C}{1+ t^3} \|s\|_{L^2}\ .
\end{align*}
Using the general formulas
\begin{align*}
&\W(s \circ \overline s) = \S(s) \circ \overline s - s \circ \overline{\S(s)}\\
&\W(p \circ p) = \W(p) \circ p + p \circ \W(p)
\end{align*}
and taking traces we get
\begin{align*}
&\frac{\partial}{\partial t}\left(\|s_e(t, \cdot)\|^2_{L^2(\mathbb R^6)} + \|p_2(t, \cdot)\|^2_{L^2(\mathbb R^6)}\right)\\
&\le \frac{C}{1+t^3} \left(\|p_2(t, \cdot)\|_{L^2(\mathbb R^6)}\|s_e(t, \cdot)\|_{L^2(\mathbb R^6)}\right) + C \|M(t, \cdot)\|_{L^2(\mathbb R^6)}\|p_2(t, \cdot)\|_{L^2(\mathbb R^6)}
\end{align*}
which leads to the desired estimate.
Explicitly,
define
\begin{align*}
E^2(t)=\|s_e(t, \cdot)\|^2_{L^2(\mathbb R^6)} + \|p_2(t, \cdot)\|^2_{L^2(\mathbb R^6)}
\end{align*}
then
\begin{align*}
\frac{\partial}{\partial t} E(t) \le C \left(\frac{1}{1+t^3} E(t) + \| M(t, \cdot)\|_{L^2}\right)
\end{align*}
thus $E(t)$ is uniformly bounded. Also, $\|p_2\|_{L^2}$ stays uniformly bounded and the logarithmic growth of $\|s_2\|_{L^2}$ can be traced back to
$\|s_a^{0}\|_{L^2}$ from
Lemma \eqref{s_a}.
\end{proof}

\section{List of all the error terms}

Our purpose in the present section is to compute explicitly all the error terms and show how they can be estimated.
Fortunately there are only a few terms for which one should be carefull, the rest being easier to handle.

Let us recall here briefly our basic strategy, which is to define
\begin{align*}
\big\vert\psi_{red}\big>:= e^{\B(k(t))}  e^{\sqrt{N}\A(\phi(t))}
e^{i t \H} e^{-\sqrt{N}\A(\phi(0))}\vac
\end{align*}
and compute $\H_{red}$ such that
$(1/i)\partial_{t}\big\vert\psi_{red}\big> =\H_{red} \big\vert\psi_{red}\big>$.
\begin{subequations}
\begin{align}
 &\H_{red}
=\frac{1}{i}\frac{\partial}{\partial t} \left(e^{\B(k(t))}
\right)e^{-\B(k(t))} \nonumber\\
&+e^{\B(k(t))}\left(\left(
\frac{1}{i}\frac{\partial}{\partial t}e^{\sqrt{N}\A(\phi(t))}\right)e^{-\sqrt{N}\A(\phi(t))}
  +e^{\sqrt{N}\A(\phi(t))}He^{-\sqrt{N}\A(\phi(t))} \right)e^{-\B(k(t))} \notag \\
&=N\mu_{0}(t) \label{zeroline}\\
&+N^{1/2}e^{\B(k(t))}
\int dx\left\{
h(\phi(t,x))a^{\ast}_{x}
+\bar{h}(\phi(t,x)) \right\}e^{-\B(k(t))} \label{firstline'}\\
&+
\frac{1}{i}\frac{\partial}{\partial t} \left(e^{\B(k(t))}
\right)e^{-\B(k(t)}
+  e^{\B(k(t))}
\bigg( \H_1 -\frac{1}{2}[[\A, [\A, \V]]]\bigg)e^{-\B(k(t))}\label{secondline}
 \\
 &-N^{-1/2}e^{\B(k(t))}\bigg( [\A, \V]+
 N^{-1/2} \V \bigg)e^{-\B(k(t))} \ .
\label{thirdline}
\end{align}
\end{subequations}
The function $\mu_{0}(t)$ and $h(\phi(t,x))$ appearing in \eqref{zeroline},\eqref{firstline'} are given below,
\begin{align*}
&\mu_{0} :=\int dx\left\{\frac{1}{2i}\big(\phi\bar{\phi}_{t}-\bar{\phi}\phi_{t}\big)-\big\vert\nabla\phi\big\vert^{2}
\right\}
-\frac{1}{2}\int dxdy\left\{v_{N}(x-y)\vert\phi(x)\vert^{2}\vert\phi(y)\vert^{2}\right\}
\\
&h(\phi(t,x)):=-\frac{1}{i}\partial_{t}\phi  +\Delta\phi -\big(v_{N}\ast\vert\phi\vert^{2}\big)\phi\ .
\end{align*}
As we know, \eqref{firstline'} is set to be zero due to the Hartree equation for $\phi$,
and \eqref{secondline} which contains terms of order $O(1)$  becomes block diagonal by an appropriate choice for the
evolution of $k$.
We can compute the $O(1)$ term in \eqref{secondline},
\begin{align*}
\eqref{secondline}=\H_{G}-
\I\left(\begin{matrix}
w^{T}&\overline{f}
\\
-f&-w
\end{matrix}
\right)
\end{align*}
where
\begin{align*}
&f:=\big(\S(\sh)-\chb\circ m\big)\circ\ch  -\big(\W(\chb)+\sh\circ \overline{m}\big)\circ \sh
\\
&w:=\big(\W(\chb)+\sh\circ\overline{m}\big)\circ \chb -\big(\S(\sh)-\chb\circ m\big)\circ\shb\ .
\end{align*}
The evolution implies that $f=0$ thus
\begin{align*}
\eqref{secondline}=\H_{G}-\I\left(\begin{matrix}
w^{T}&0\\ 0&-w
\end{matrix}
\right)\ .
\end{align*}
Multiplying $f$ on the right with $\shb$ and using the identities $\ch\circ\shb=\shb\circ\chb$ and $\sh\circ\shb =(\chb)^{2}-1$
we discover
\begin{align*}
f\circ\shb+w\circ\chb =\W(\chb)+\sh\circ\overline{m}\ .
\end{align*}
Using the formula (see \cite{GMM2}, lemma 3.1, stated in slightly different notation)
\begin{align*}
&\W(\chb) \circ \chb^{-1} +\chb^{-1} \circ \W(\chb)\\
& =\chb^{-1} \circ m  \circ \shb -\sh \circ \overline m \circ \chb^{-1}
\end{align*}
we get
\begin{align*}
w(y,x)&=\W\big(\chb\big)\circ \big(\chb\big)^{-1}+\sh\circ \overline{m}\circ\big(\chb\big)^{-1}=
\\
&\frac{1}{2}\Big(\big(\chb\big)^{-1}\circ m\circ \shb +\sh\circ\overline{m}\circ\big(\chb\big)^{-1}\Big)
+\frac{1}{2}\Big[\W(\chb),\big(\chb\big)^{-1}\Big]\ .
\end{align*}
If we write $w(y,x)$ for the above we have the quadratic term and a trace
\begin{align}
\eqref{secondline}=& \H_{G}+ \frac{1}{2} \Bigg(\int dxdy \left\{w(x,y)\D_{yx}+w(y,x)\D_{xy}\right\} \label{secondline-2}
\\
&+\frac{1}{2}{\rm Tr}\Big(\big(\chb\big)^{-1}\circ m\circ \shb +\sh\circ\overline{m}\circ\big(\chb\big)^{-1}\Big)\Bigg)\ .
\nonumber
\end{align}

The error terms, which are order $O(N^{-1/2})$ or higher, are all in the third line \eqref{thirdline} i.e. what we called $\E$,
\begin{align*}
\E=e^{\B}\left( [\A, \V]+N^{-1/2}\V\right) e^{-\B}\ .
\end{align*}
Recall that
\begin{subequations}
\begin{align}
&[\A, \V] =\int dx_{1}dx_{2}\left\{
 v_N(x_{1}-x_{2}) \left(\overline{\phi}(x_{2}) a^{\ast}_{x_{1}} \Q_{x_{1}x_{2}} + \phi(x_{2})\Q^{\ast}_{x_{1}x_{2}}a_{x_{1}}\right)
\right\} \label{Cubic-I}
\\
&\V =\frac{1}{2}\int dx_{1}dx_{2}\left\{v_{N}(x_{1}-x_{2})\Q^{\ast}_{x_{1}x_{2}}\Q_{x_{1}x_{2}}\right\}
\label{Quartic-I}
\end{align}
\end{subequations}
and the transformation of $(a,a^{\ast})$ is
\begin{subequations}
\begin{align}
&e^{\B} a_x e^{-\B} =
\int \left( \ch (y, x) a_y + \sh (y, x) a^*_y \right) dy :=b_{x}\label{b-I}\\
&e^{\B} a^*_x e^{-\B}=\int \left( \shb (y, x)a_y + \chb (y, x) a^*_y \right) dy:=b^{\ast}_{x}\ . \label{bstar-I}
\end{align}
\end{subequations}
One can see that $[b_{x},b^{\ast}_{y}]=\delta(x-y)$. For the rest of this section, the argument of the "trigonometric" functions is always $k$, thus $\shh$ abbreviates $\sh$, $p$ stands for $p_1$, etc.
Keeping in mind that $\chhb(y,x)=\chh(x,y)$ and $\shhb(y,x)=\shhb(x,y)$, $\shh(y,x)=\shh(x,y)$  we compute
\begin{subequations}
\begin{align}
&\ \ e^{\B}a^{\ast}_{x_{1}}a^{\ast}_{x_{2}}e^{-\B}=e^{\B}\Q^{\ast}_{x_{1}x_{2}}e^{-\B}= \nonumber \\
&\int dy_{1}dy_{2}\left\{
\shhb(y_{1},x_{1})\chh(x_{2},y_{2})\D_{y_{2}y_{1}}
+  \chhb(y_{1},x_{1})\shhb(x_{2},y_{2})\D_{y_{1}y_{2}}  \right\}+ \nonumber
\\
&\int dy_{1}dy_{2}
\left\{\chhb(y_{1},x_{1})\chh(x_{2},y_{2})\Q^{\ast}_{y_{1}y_{2}}
+\shhb(y_{1},x_{1})\shhb(x_{2},y_{2})\Q_{y_{1}y_{2}}
\right\} \nonumber
\\
&+\big(\shhb\circ\chhb\big)(x_{1},x_{2}) \label{star-left}
\\
&\ \ e^{\B}a_{x_{1}}a_{x_{2}}e^{-\B}=e^{\B}\Q_{x_{1}x_{2}}e^{-\B}=\nonumber \\
&\int dy_{3}dy_{4}
\left\{
\chh(y_{3},x_{1})\shh(x_{2},y_{4})\D_{y_{4}y_{3}}+\shh(y_{3},x_{1})\chhb(x_{2},y_{4})\D_{y_{3}y_{4}}
\right\}+ \nonumber
\\
&\int dy_{3}dy_{4}
\left\{
\chh(y_{3},x_{1})\chhb(x_{2},y_{4})\Q_{y_{3}y_{4}}+\shh(y_{3},x_{1})\shh(x_{2},y_{4})\Q^{\ast}_{y_{3}y_{4}}
\right\} \nonumber
\\
&+\big(\chhb\circ\shh\big)(x_{1},x_{2}) \ . \label{nostar-right}
\end{align}
\end{subequations}
Let us look first at the all
the terms that come from the quartic term $\V$. Since $e^{\B}\V e^{-\B}\vac$ we are interested only in terms that do not anihilate
 the vacuum.
This implies that we keep $\Q^{\ast}_{y_{3}y_{4}}$ from \eqref{nostar-right} with everything in \eqref{star-left} as well as $\Q^{\ast}_{y_{1}y_{2}}$
from \eqref{star-left} with the last term in \eqref{nostar-right}
and the product of the two last terms in \eqref{star-left}, \eqref{nostar-right}.

Below is the list of all the terms in   $\frac{1}{N}e^{\B}\V e^{-\B}$ (which do not annihilate $\vac$) in raw form,

\begin{subequations}
\begin{align}
&\frac{1}{2N}\quad \int dx_{1}dx_{2} dy_1 dy_2 dy_3 dy_4 \Big\{ \nonumber \\
&\shhb(y_{1},x_{1})\chh(x_{2},y_{2})
v_{N}(x_{1}-x_{2})
\shh(y_{3},x_{1})\shh(x_{2},y_{4})
\D_{y_{2}y_{1}}\Q^{\ast}_{y_{3}y_{4}}+ \label{quartic-a}
\\
&\chhb(y_{1},x_{1})\shhb(x_{2},y_{2})
v_{N}(x_{1}-x_{2})\shh(y_{3},x_{1})\shh(x_{2},y_{4})
\D_{y_{1}y_{2}}\Q^{\ast}_{y_{3}y_{4}} +\label{quartic-b}
\\
&\chhb(y_{1},x_{1})\chh(x_{2},y_{2})v_{N}(x_{1}-x_{2})\shh(y_{3},x_{1})\shh(x_{2},y_{4})
\Q^{\ast}_{y_{1}y_{2}}\Q^{\ast}_{y_{3}y_{4}} +\label{quartic-c}
\\
&\shhb(y_{1},x_{1})\shhb(x_{2},y_{2})v_{N}(x_{1}-x_{2})\shh(y_{3},x_{1})\shh(x_{2},y_{4})
\Q_{y_{1}y_{2}}\Q^{\ast}_{y_{3}y_{4}} \Big\} \label{quartic-d}
\\
&+\frac{1}{2N}\quad \int dx_{1}dx_{2}  dy_1 dy_2 \Big\{\notag \\
&\big(\shhb\circ\chhb\big)(x_{1},x_{2})v_{N}(x_{1}-x_{2})\shh(y_{1},x_{1})\shh(x_{2},y_{2})
\Q^{\ast}_{y_{1}y_{2}}+\label{quartic-e}
\\
&\chhb(y_{1},x_{1})\chh(x_{2},y_{2})v_{N}(x_{1}-x_{2})\big(\chhb\circ\shh\big)(x_{1},x_{2})
\Q^{\ast}_{y_{1}y_{2}}\Big\}\label{quartic-f}
\\
&+\frac{1}{2N}\quad \int dx_{1}dx_{2}\notag\\
&\big(\shhb\circ\chhb\big)(x_{1},x_{2})v_{N}(x_{1}-x_{2})\big(\chhb\circ\shh\big)(x_{1},x_{2}) .
\label{quartic-g}
\end{align}
\end{subequations}

Next let us look at the terms that come from the cubic term, namely the expression $e^{\B}[\A,\V ]e^{-\B}\vac$.
They come in two sets. First from the term $b^{\ast}_{x_{1}}(b_{x_{1}}b_{x_{2}})$ there are three terms, two coming from
the product of $\Q^{\ast}$ from \eqref{nostar-right} with \eqref{bstar-I} and one coming from $a^{\ast}$ in \eqref{bstar-I} with the
constant term in \eqref{nostar-right}. The are
listed below
\begin{subequations}
\begin{align}
&N^{-1/2}\int dx_{1}dx_{2} dy_1 dy_2 dy_3 \Big\{ \nonumber \\
&\shhb(y_{1},x_{1})v_{N}(x_{1}-x_{2})\bar{\phi}(x_{2})\shh(y_{2},x_{1})\shh(x_{2},y_{3})
\ a_{y_{1}}\Q^{\ast}_{y_{2}y_{3}}
+ \label{cubicI-a}
\\
&\chhb(y_{1},x_{1})v_{N}(x_{1}-x_{2})\bar{\phi}(x_{2})\shh(y_{2},x_{1})\shh(x_{2},y_{3})
\ a^{\ast}_{y_{1}}\Q^{\ast}_{y_{2}y_{3}} \Big\}
+ \label{cubicI-b}
\\
&N^{-1/2}\int dx_{1}dx_{2} dy_1\chhb(y_{1},x_{1})v_{N}(x_{1}-x_{2})\bar{\phi}(x_{2})\big(\chhb\circ\shh\big)(x_{1},x_{2})
\ a^{\ast}_{y_{1}}
  \label{cubicI-c}
\end{align}
\end{subequations}
The second set comes from $(b^{\ast}_{x_{1}}b^{\ast}_{x_{2}})b_{x_{1}}$
and gives five terms, namely all terms appearing in \eqref{star-left} multiplied with $a^{\ast}$ in \eqref{bstar-I}.
They are listed below,
\begin{subequations}
\begin{align}
&N^{-1/2}\int dx_{1}dx_{2} dy_1 dy_2 dy_3 \Big\{ \nonumber \\
&\shhb(y_{1},x_{1})\chh(x_{2},y_{2})
v_{N}(x_{1}-x_{2})\phi(x_{2})\shh(y_{3},x_{1})
\D_{y_{2}y_{1}}a^{\ast}_{y_{3}}
+ \label{cubicII-a}
\\
&\chhb(y_{1},x_{1})\shhb(x_{2},y_{2})
v_{N}(x_{1}-x_{2})\phi(x_{2})\shh(y_{3},x_{1})
\ \D_{y_{1}y_{2}}a^{\ast}_{y_{3}}
+ \label{cubicII-b}
\\
&\chhb(y_{1},x_{1})\chh(x_{2},y_{2})v_{N}(x_{1}-x_{2})\phi(x_{2})\shh(y_{3},x_{1})
\Q^{\ast}_{y_{1}y_{2}}a^{\ast}_{y_{3}}
+ \label{cubicII-c}
\\
&\shhb(y_{1},x_{1})\shhb(x_{2},y_{2})v_{N}(x_{1}-x_{2})\phi(x_{2})\shh(y_{3},x_{1})
\ \Q_{y_{1}y_{2}}a^{\ast}_{y_{3}} \Big\}
+ \label{cubicII-d}
\\
&N^{-1/2}\int dx_{1}dx_{2} dy_1\big(\shhb\circ\chhb\big)(x_{1},x_{2})v_{N}(x_{1}-x_{2})\phi(x_{2})\shh(y_{1},x_{1})
a^{\ast}_{y_{1}}
  \label{cubicII-e}
\end{align}
\end{subequations}
Some of the terms can be reduced to lower order by commuting $a$ with $a^{\ast}$ whenever they appear together in a product.
The term in \eqref{cubicII-d} applied to $\vac$ gives zero. Irreducible
terms are those appearing in \eqref{quartic-c} which is fourth order, as well as the terms appearing in \eqref{cubicI-b} and \eqref{cubicII-c}
which are cubic.
The quartic irreducible term in \eqref{quartic-c} can be writen, if we write $\chh(x,y)=\delta(x-y)+p(x,y)$ first, as follows,
\begin{subequations}
\begin{align}
&N^{-1}(1/2)\Big\{ \ \ v_{N}(y_{1}-y_{2})\shh(y_{3},y_{1})\shh(y_{2},y_{4})\quad + \label{main-quartic-irred}
\\
&\int dx
\left\{\bar{p}(y_{2},x)v_{N}(y_{1}-x)\shh(x,y_{4})\right\}\shh(y_{3},y_{1})+ \label{quartic-irred-1}
\\
&\int dx\left\{\bar{p}(y_{1},x)v_{N}(x-y_{2})\shh(y_{3},x)\right\}\shh(y_{2},y_{4})+ \label{quartic-irred-2}
\\
&\int dx_{1}dx_{2}
\big\{
\bar{p}(y_{1},x_{1})p(x_{2},y_{2})v_{N}(x_{1}-x_{2})\shh(y_{3},x_{1})\shh(x_{2},y_{4})\big\}\Big\}
\ . \label{quartic-irred-3}
\end{align}
\end{subequations}
The rest of the quartic terms can be reduced to either quadratic or zero order terms. For example if we look at
\eqref{quartic-b} we can write,
$\D_{y_{1}y_{2}}\Q^{\ast}_{y_{3}y_{4}}=\delta(y_{2}-y_{4})\Q^{\ast}_{y_{1}y_{4}}+
\delta(y_{2}-y_{4})\Q^{\ast}_{y_{1}y_{3}}$  (modulo terms which annihilate $\vac$)  and similarly for  \eqref{quartic-a}. Below is a list of all quadatic terms,
\begin{subequations}
\begin{align}
&(1/2N)\int dx_{1}dx_{2}\qquad\Big\{\nonumber \\
&\chhb(y_{1},x_{2})\shh(x_{2},y_{2})\big(\shhb\circ\shh\big)(x_{1},x_{1})v_{N}(x_{1}-x_{2}) + \label{quadratic-1}
\\
&\chhb(y_{1},x_{2})\shh(x_{1},y_{2})\big(\shhb\circ\shh\big)(x_{1},x_{2})v_{N}(x_{1}-x_{2}) +
\label{quadratic-2}
\\
&\chhb(y_{1},x_{1})\shh(x_{2},y_{2})\big(\shh\circ\shhb\big)(x_{1},x_{2})v_{N}(x_{1}-x_{2})
 + \label{quadratic-3}
\\
&\chhb(y_{1},x_{1})\shh(x_{1},y_{2})\big(\shh\circ\shhb\big)(x_{2},x_{2})v_{N}(x_{1}-x_{2}) +
\label{quadratic-4}
\\
&\shh(y_{1},x_{1})\shh(x_{2},y_{2})\big(\shhb\circ\chhb\big)(x_{1},x_{2})v_{N}(x_{1}-x_{2}) +
\label{quadratic-5}
\\
&\chhb(y_{1},x_{1})\chh(x_{2},y_{2})\big(\chhb\circ\shh\big)(x_{1},x_{2})v_{N}(x_{1}-x_{2})
\Big\}\ . \label{quadratic-6}
\end{align}
\end{subequations}
In addition \eqref{quartic-d} and \eqref{quartic-g} together with the trace in \eqref{secondline-2} will supply a zero order term,

\begin{align}
\mu_{1}=
&-\frac{1}{4}{\rm Tr}\Big(\big(\chb\big)^{-1}\circ m\circ \shb +\sh\circ\overline{m}\circ\big(\chb\big)^{-1}\Big)
\nonumber \\
&+\frac{1}{2N}\int dx_{1}dx_{2}\Big\{
(\shh\circ\shhb)(x_{1},x_{2})v_{N}(x_{1}-x_{2})(\shhb\circ\shh)(x_{1},x_{2})
+ \nonumber \\
&(\shh\circ\shhb)(x_{1},x_{1})v_{N}(x_{1}-x_{2})(\shhb\circ\shh)(x_{2},x_{2})+ \nonumber
\\
&\big(\shhb\circ\chhb\big)(x_{1},x_{2})v_{N}(x_{1}-x_{2})\big(\chhb\circ\shh\big)(x_{1},x_{2})
\Big\}\ . \label{mu-I}
\end{align}
This term can be absorbed as a phase factor in the evolution.
The cubic irreducible terms are those appearing in \eqref{cubicI-b},\eqref{cubicII-c}. Writing $\chhb(y, x)=\delta(y-x)+\bar{p}(y,x)$
we can express them in the manner below
\begin{subequations}
\begin{align}
&N^{-1/2}\Big\{ \quad\ v_{N}(y_{1}-y_{2})\phi(y_{2})\shh(y_{3},y_{1})\qquad\  \ + \label{main-cubic-irred}
\\
&\ \ \int dx\left\{v_{N}(y_{1}-x)\bar{\phi}(x)\shh(x,y_{3})\right\}\shh(y_{2},y_{1}) + \label{cubic-irred-1}
\\
&\ \ \int dx\left\{\bar{p}(y_{1},x)v_{N}(x-y_{2})\shh(y_{3},x)\right\}\phi(y_{2}) \ \ + \label{cubic-irred-2}
\\
&\ \ \int dx\left\{\bar{p}(y_{2},x)v_{N}(y_{1}-x)\phi(x)\right\}\shh(y_{3},y_{1}) \  \ + \label{cubic-irred-3}
\\
& \int dx_{1}dx_{2}\left\{\bar{p}(y_{1},x_{1})v_{N}(x_{1}-x_{2})\bar{\phi}(x_{2})\shh(y_{2},x_{1})\shh(x_{2},y_{3})\right\}+
\label{cubic-irred-4}\\
&\int dx_{1}dx_{2}\left\{
\bar{p}(y_{1},x_{1})p(x_{2},y_{2})v_{N}(x_{1}-x_{2})\phi(x_{2})\shh(y_{3},x_{1})\right\}\Big\}\ . \label{cubic-irred-5}
\end{align}
\end{subequations}
The rest can be reduced to linear or eliminated, for example \eqref{cubicII-d} can be eliminated.
For the cubic terms in the first set.
From the commutator relation $a_{y_{1}}\Q^{\ast}_{y_{2}y_{3}}=\delta(y_{1}-y_{2})a^{\ast}_{y_{3}}+\delta(y_{1}-y_{3})a^{\ast}_{y_{2}}$ (modulo $a$ terms which kill $\vac$)
we reduce them to linear and here is the list,
\begin{subequations}
\begin{align}
&N^{-1/2}\int dx_{1}dx_{2}\qquad\Big\{ \nonumber
\\
&\shh(y_{1},x_{2})\big(\shhb\circ\shh\big)(x_{1},x_{1})\bar{\phi}(x_{2})v_{N}(x_{1}-x_{2})\ +
\\
&\shh(y_{1},x_{1})\big(\shhb\circ\shh\big)(x_{1},x_{2})\bar{\phi}(x_{2})v_{N}(x_{1}-x_{2})\ +
\\
&\chhb(y_{1},x_{1})\big(\chhb\circ\shh\big)(x_{1},x_{2})\bar{\phi}(x_{2})v_{N}(x_{1}-x_{2})
\Big\}\ . \label{linear-from-cubicI}
\end{align}
\end{subequations}
From the cubic terms in the second list we have the commutators
$\D_{y_{1}y_{2}}a^{\ast}_{y_{3}}=\delta(y_{2}-y_{3})a^{\ast}_{y_{1}}$
and
$\D_{y_{2}y_{1}}a^{\ast}_{y_{3}}=\delta(y_{1}-y_{3})a^{\ast}_{y_{2}}$ (modulo $a$ terms)
hence from \eqref{cubicII-a},\eqref{cubicII-b} as well as from \eqref{cubicII-e} we obtain
\begin{subequations}
\begin{align}
&N^{-1/2}\int dx_{1}dx_{2}\qquad \Big\{ \nonumber
\\
&\chhb(y_{1},x_{1})(\shh\circ\shhb)(x_{1},x_{2})\phi(x_{2})v_{N}(x_{1}-x_{2})\ +
 \\
&\chhb(y_{2},x_{2})(\shh\circ\shhb)(x_{1},x_{1})\phi(x_{2}) v_{N}(x_{1}-x_{2})\ + \label{linear-from-cubic-extra}
 \\
&\shh(y_{1},x_{1})\big(\shhb\circ\chhb\big)(x_{1},x_{2})\phi(x_{2})v_{N}(x_{1}-x_{2})
\Big\}\ . \label{linear-from-cubicII}
\end{align}
\end{subequations}

\section{Estimates for the error terms}
In this section we prove
\begin{proposition} \label{errorest}
The following estimates for the error terms holds:
\begin{align*}
&N^{-1/2}\| e^{\B} [\A, \V]  e^{-\B} \vac\|_{\FF} \le  N^{\frac{3 \beta - 1}{2}} \frac{\log^3(1+t)}{t^{3/2}}\\
&N^{-1}\| e^{\B} \V  e^{-\B} \vac\|_{\FF} \le C N^{3 \beta - 1}\log^4(1+t)
\end{align*}
\end{proposition}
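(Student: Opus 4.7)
The plan is to exploit the explicit decomposition of $\E\vac$ carried out in Section 4. Having already normal-ordered $e^{\B}([\A,\V]+N^{-1/2}\V)e^{-\B}$ against $\vac$, the surviving contribution is a finite sum of terms of the form
\[
\int K_j(y_1,\dots,y_j)\, a^{\ast}_{y_1}\cdots a^{\ast}_{y_j}\, dy\,\vac,\qquad j=0,1,2,3,4,
\]
where $K_j$ is a product whose factors are drawn from $v_N$, $\phi$, $\shh$, and $p$. For such a term, $a^{\ast}_{y_1}\cdots a^{\ast}_{y_j}\vac$ is a vector in $\F_j$ whose squared $\F$-norm equals (up to a combinatorial constant) $\|K_j^{\mathrm{sym}}\|_{L^2}^2\le j!\,\|K_j\|_{L^2(\R^{3j})}^2$. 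So everything reduces to $L^2$ bounds on the explicit kernels listed in \eqref{main-quartic-irred}--\eqref{quartic-irred-3}, \eqref{main-cubic-irred}--\eqref{cubic-irred-5}, \eqref{quadratic-1}--\eqref{quadratic-6}, and \eqref{linear-from-cubicI}--\eqref{linear-from-cubicII}.

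The ingredients are: the $N$-scaling $\|v_N\|_{L^p}=N^{3\beta(1-1/p)}\|v\|_{L^p}$ (so in particular $\|v_N\|_{L^1}=O(1)$ and $\|v_N\|_{L^\infty}\lesssim N^{3\beta}$); the bounds $\|\shh\|_{L^2(\R^6)},\|p\|_{L^2(\R^6)}\lesssim\log(1+t)$ from Theorem 4.1 and Corollary 4.2; the Hartree decay $\|\phi(t,\cdot)\|_{L^\infty}\lesssim t^{-3/2}$ from Corollary \ref{linf}; and uniform boundedness of $\|\phi\|_{L^2}$. I will apply these repeatedly after integrating in the non-interacting variables. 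For instance, when a variable $y_j$ appears only inside one factor $\shh(\cdot,y_j)$, integrating produces a function $F(\cdot)$ with $\|F\|_{L^1}=\|\shh\|_{L^2}^2\lesssim\log^2(1+t)$.

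The dominant estimate in the quartic sector is on the bare term \eqref{main-quartic-irred}:
\[
\|v_N(y_1{-}y_2)\shh(y_3,y_1)\shh(y_2,y_4)\|_{L^2(\R^{12})}^2=\iint|v_N(y_1{-}y_2)|^2 F(y_1)G(y_2)\,dy_1dy_2,
\]
which I dominate by $\|v_N\|_{L^\infty}^2\|F\|_{L^1}\|G\|_{L^1}\lesssim N^{6\beta}\log^4(1+t)$, yielding an $L^2$-kernel bound of size $N^{3\beta}\log^2(1+t)$. The prefactor $N^{-1}$ from $\V$ then gives $N^{3\beta-1}\log^4(1+t)$ once all quartic terms \eqref{quartic-irred-1}--\eqref{quartic-irred-3} (in which $p$ replaces a $\delta$) are added in, contributing extra powers of $\|p\|_{L^2}\lesssim\log(1+t)$. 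The dominant cubic estimate comes from \eqref{main-cubic-irred}: the same Cauchy--Schwarz-and-Young procedure gives
\[
\int|v_N(y_1{-}y_2)|^2|\phi(y_2)|^2 F(y_1)\,dy\le\||v_N|^2*|\phi|^2\|_{L^\infty}\|F\|_{L^1}\le\|v_N\|_{L^2}^2\|\phi\|_{L^\infty}^2\|F\|_{L^1},
\]
giving an $L^2$-kernel bound of size $N^{3\beta/2}\log(1+t)/t^{3/2}$; the prefactor $N^{-1/2}$ then delivers $N^{(3\beta-1)/2}\log(1+t)/t^{3/2}$, with additional factors of $\log(1+t)$ coming from \eqref{cubic-irred-2}--\eqref{cubic-irred-5} (each $p$ contributes one $\log$ after its variable is integrated out).

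The main obstacle will be the auxiliary terms in \eqref{quartic-irred-3}, \eqref{cubic-irred-4}, and \eqref{cubic-irred-5}, where both $v_N$ and $p$ couple variables that one would like to integrate out separately. Here one has to be careful not to demand an $L^\infty$ control of $\int|\shh|^2$ (which is not available). The remedy is to always integrate one free variable against $|\shh|^2$ or $|p|^2$ first, producing an $L^1$ density, and to push $\|v_N\|_{L^\infty}$ onto the $v_N$ factor, reserving one power of $\|\phi\|_{L^\infty}$ or $\|v_N\|_{L^1}$ to control the remaining integrand. The quadratic terms \eqref{quadratic-1}--\eqref{quadratic-6} and the linear terms \eqref{linear-from-cubicI},\eqref{linear-from-cubicII} are strictly easier since they involve $v_N$ tested against a bilinear of $\shh,\chhb$ whose kernel has $L^1$ trace controlled by $\log^2(1+t)$; the $\F$-norm of their action on $\vac$ is at most the bounds already obtained. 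Summing the contributions with the appropriate prefactors $N^{-1/2}$ and $N^{-1}$ and collecting the worst powers of $\log(1+t)$ yields the two estimates stated in Proposition \ref{errorest}.
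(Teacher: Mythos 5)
Your proposal is correct and follows essentially the same route as the paper: reduce to $L^2(\R^{3j})$ bounds on the explicit kernels catalogued in Section~5 (the quartic, cubic, quadratic and linear lists), and estimate each kernel by peeling off $\|v_N\|_{L^\infty}\sim N^{3\beta}$ or $\|v_N\|_{L^2}\sim N^{3\beta/2}$, the $L^2$ bounds $\|\sh\|_{L^2},\|p\|_{L^2}\lesssim\log(1+t)$ from Theorem~\eqref{sc4thm} and Corollary~\eqref{s1est}, and the $L^\infty$ decay $\|\phi\|_{L^\infty}\lesssim t^{-3/2}$ from Corollary~\eqref{linf}. Your worked examples (the bare quartic term, the bare cubic term, the $p$-dressed quartic term, and the bookkeeping showing where the extra powers of $\log$ come from) agree in substance with the sample estimates the paper records, so no further comparison is needed.
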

\begin{proof}
All estimates are straightforward, based on Corollary \eqref{s1est} and  Lemma \eqref{linf}, and we only include a few typical ones from each category.

Estimate for \eqref{main-quartic-irred}:
\begin{align*}
&N^{-1}\| v_{N}(y_{1}-y_{2})\sh(y_{3},y_{1})\sh(y_{2},y_{4})\|_{L^2(dy_1 dy_2 dy_3 dy_4)}  \\
&\le
 N^{-1} \|v_N\|_{L^{\infty}} \|\sh\|^2_{L^2} \le C N^{3 \beta - 1}\log^2(1+t)
 \end{align*}
 Estimate for \eqref{quartic-irred-3}:
 \begin{align*}
&N^{-1}\|\int dx_{1}dx_{2}
\bar{p_1}(y_{1},x_{1})p_1(x_{2},y_{2})v_{N}(x_{1}-x_{2})\shh(y_{3},x_{1})\shh(x_{2},y_{4})\|_{L^2(dy_1 dy_2 dy_3 dy_4)}\\
& \le N^{-1} \|v_N\|_{L^{\infty}} \|\sh\|^2_{L^2} \|p_1\|^2_{L^2}\le C N^{3 \beta - 1}\log^4(1+t)
\end{align*}
Estimate for \eqref{quadratic-6}, keeping only the $\delta$ contribution in $\ch$:
\begin{align*}
&N^{-1}\|\sh(y_{1},y_{2})v_{N}(y_{1}-y_{2})\|_{L^2(dy_1 dy_2)} \le
 \|v_N\|_{L^{\infty}} \|\sh\|_{L^2} \le C N^{3 \beta - 1}\log(1+t)
\end{align*}
Estimate for \eqref{main-cubic-irred}, :
\begin{align*}
&N^{-1/2}\| v_{N}(y_{1}-y_{2})\phi(y_{2})\sh(y_{3},y_{1})\|_{L^2(dy_1 dy_2 dy_3)}\\
&
\le \|\phi\|_{L^{\infty}}  \|v_N\|_{L^2}\|\sh\|_{L^2} \le C N^{\frac{3 \beta - 1}{2}} \frac{\log(1+t)}{t^{3/2}}
\end{align*}
Estimates for \eqref{linear-from-cubic-extra}:
\begin{align*}
&N^{-1/2}\|\int dx_{1}dx_{2}
p_2(y_{2},x_{2})(\sh\circ\shb)(x_{1},x_{1})\phi(x_{2}) v_{N}(x_{1}-x_{2})\|_{L^2(dy_2)}\\
&\le N^{-1/2}\|p_2\|_{L^2} \|\sh \circ \shb (x, x)\|_{L^1(dx)}\|\phi\|_{L^{\infty}}  \|v_N\|_{L^2}\\
&\le N^{-1/2}\|p_2\|_{L^2} \|\sh \|^2_{L^2}\|\phi\|_{L^{\infty}}  \|v_N\|_{L^2}\\
&\le C N^{\frac{3 \beta - 1}{2}} \frac{\log^3(1+t)}{t^{3/2}}
\end{align*}

\end{proof}


\section{Appendix} \label{app}
The purpose of this appendix is to make certain connections with our previous work \cite{GMM1}, \cite{GMM2} and to provide some
useful information.
We would like to explain first the Lie algebra isomorphism which is crucial in our work.
This was explained in \cite{GMM1} but we include it here for completeness and for the convenience of the reader.
Let us define first
\begin{align*}
\A_{x}:=\left(\begin{matrix}a_{x} \\a^{\ast}_{x}\end{matrix}\right)
\quad {\rm and}\quad {\bf f}(x):=\left(\begin{matrix}f_{1}(x)\\ f_{2}(x)\end{matrix}\right)
\quad {\rm and}\quad J=\left(\begin{matrix}0&-1\\ 1&0\end{matrix}\right)
\end{align*}
and using these form
\begin{align}
\A\big({\bf f}\big):=\int dx \left\{ \A^{T}_{x}{\bf f}(x)\right\}=
\int dx \left\{f_{1}(x)a_{x}+f_{2}(x)a^{\ast}_{x}\right\}\ . \label{A-form}
\end{align}
It is straightforward to check the commutation,
\begin{align}
\big[\A({\bf f} ),\A({\bf g})\big]&= \int dx\left\{ f_{1}(x)g_{2}(x)-f_{2}(x)g_{1}(x)\right\} \nonumber \\
&=-\int dxdy\left\{{\bf f}^{ T}(x)\delta(x-y)J{\bf g}(y)\right\}\ . \label{Symplectic-form}
\end{align}

For $L^2 (dx dy)$ kernel functions $d(t,x,y)$ and $k(t,x,y)$, $l(t,x,y)$ such that $k$ and $l$ are symmetric in the $(x,y)$ variables form the symplectic matrix
\begin{align}
S(d,k,l):=\left(\begin{matrix}d&k\\ l&-d^{T}\end{matrix}\right)\ . \label{symplectic-matrix}
\end{align}
We will write $S(x,y)$ when convenient, suppressing the $t$ dependence.
Next we define the map $\I :\rm {sp}(\mathbb C)\mapsto {\rm Quad}$ from the space of complex, $L^2$ symplectic matrices to quadratic expressions in $(a,a^{\ast})$
as follows :
\begin{align}
\I(S):=-\frac{1}{2}\int dxdy \left\{\A^{T}_{x}S(x,y)J\A_{y}\right\}\ . \label{Lie-isom-A}
\end{align}

\begin{theorem}
Let ${\bf f}(x)$ a vector function  and $\A({\bf f} )$ the corresponding expression \eqref{A-form}. Then the following commutations
relations hold
\begin{subequations}
\begin{align}
&\big[\I(S), \A({\bf f}\ )\big]=\A\big(S\circ{\bf f}\ \big) \label{commut-1}
\\
&e^{\I(S)}\A({\bf f} )e^{-\I(S)}=\A\big(e^{S}\circ{\bf f}\ \big) \label{commut-exp}
\end{align}
\end{subequations}
where (for example) $S\circ{\bf f}=\int dx\{S(x,y){\bf f}(y)\}$ etc. Formula \eqref{commut-exp} holds for
any complex symplectic  $S \in  \rm{sp}_c(\mathbb R) = (C^T)^{-1}\rm {sp}(\mathbb R) C^T$
where $C$ is the change-of-basis matrix
\begin{align*}
 C= \frac{1}{\sqrt 2}
 \left(\begin{matrix}I&-iI\\ I&iI\end{matrix}\right)
 \end{align*}
 and $\rm {sp}(\mathbb R)$ is the Lie algebra of real symplectic matrices. This condition insures that $e^S$ is unitary.
 \end{theorem}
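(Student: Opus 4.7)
The first identity \eqref{commut-1} is verified by direct computation using the canonical commutation relation, which can be packaged as $[\A_x, \A_y^T] = -\delta(x-y) J$. I would expand the commutator $[\A_x^T S(x,y) J \A_y, \A_z^T {\bf f}(z)]$ via the Leibniz rule $[AB, C] = A[B,C] + [A,C]B$; each of the two resulting elementary commutators $[\A_y, \A_z^T {\bf f}(z)]$ and $[\A_x^T, \A_z^T {\bf f}(z)]$ produces a delta function that collapses one $z$-integration. After summing, one obtains two contributions, one of the form $\int \A_x^T S(x,y) {\bf f}(y)\, dx\,dy$ and one of the form $\int {\bf f}(x)^T J S(x,y) J \A_y\, dx\,dy$. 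The crucial ingredient is the symplectic condition $S^T J + J S = 0$ (kernel-level, which is equivalent to the stated symmetry of $k$ and $l$ in $(x,y)$): this gives $J S J = S^T$ at the kernel level, which allows one to rewrite the second contribution as $\int \A_y^T S(y,x) {\bf f}(x)\, dx\, dy$ and thereby merge the two contributions into a single $\A(S \circ {\bf f})$.

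For \eqref{commut-exp} I would use the standard conjugation trick. Set $F(t) := e^{t\I(S)} \A({\bf f}) e^{-t\I(S)}$, so that $F(0) = \A({\bf f})$ and $\dot{F}(t) = [\I(S), F(t)]$. Applied iteratively, \eqref{commut-1} shows that conjugation by $e^{t \I(S)}$ preserves the class of linear expressions $\A({\bf g})$ and intertwines it with the flow ${\bf g} \mapsto S \circ {\bf g}$; equivalently, $G(t) := \A(e^{tS} \circ {\bf f})$ satisfies the same first-order linear ODE with the same initial data. Uniqueness of solutions gives $F(t) = G(t)$, and evaluating at $t=1$ yields \eqref{commut-exp}. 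To make this rigorous I would work on the dense subspace of Fock vectors with only finitely many nonzero sectors, on which the series defining $e^{\pm t \I(S)}$ converge absolutely for small $\|t S\|$, and then extend by a standard continuation/semigroup argument.

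The final claim, that \eqref{commut-exp} holds for all $S \in \mathrm{sp}_c(\mathbb{R}) = (C^T)^{-1} \mathrm{sp}(\mathbb{R}) C^T$, requires an analytic ingredient beyond pure Lie-algebra bookkeeping. The change of basis $C$ conjugates $(a, a^*)$ into the position/momentum pair $(q, p)$; if $S = (C^T)^{-1} R C^T$ with $R$ real symplectic, then in the $(q, p)$ coordinates the quadratic $\I(S)$ is a real quadratic in self-adjoint operators, hence essentially self-adjoint (on, e.g., the finite-particle domain or a suitable core). Then $e^{\I(S)}$ is a genuine unitary operator on Fock space, conjugation by it preserves the canonical commutation relations, and the identity \eqref{commut-exp}, established algebraically on the dense subspace, extends to all $S \in \mathrm{sp}_c(\mathbb{R})$ by unitarity and continuity. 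This is precisely the content of the Segal--Shale--Weil (metaplectic) representation alluded to in the main body.

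The main obstacle is not the commutator calculation itself, which is routine once the CCR and the symplectic condition are written down correctly, but rather the analytic justification in the last step: showing $\I(S)$ is essentially self-adjoint for $S \in \mathrm{sp}_c(\mathbb{R})$ and that the algebraic identity derived on a dense invariant domain actually extends to the full Fock space via the metaplectic unitaries. Once this is granted, both \eqref{commut-1} and \eqref{commut-exp} follow from the Leibniz-rule computation plus the symplectic identity $J S J = S^T$.
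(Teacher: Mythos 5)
Your proof is correct and follows essentially the same route as the paper: a direct CCR computation for \eqref{commut-1} hinging on the symplectic identity $S^T = JSJ$ (the paper organizes this via rank-one quadratics, you via Leibniz on the kernel integral, but the algebra is the same), and for \eqref{commut-exp} a comparison of $e^{t\I(S)}\A({\bf f})e^{-t\I(S)}$ with $\A(e^{tS}\circ{\bf f})$ on the finite-particle subspace, followed by an extension-by-unitarity argument invoking the metaplectic/Segal--Shale--Weil machinery (the paper matches Taylor coefficients and uses analyticity, you phrase it as uniqueness for a linear ODE, which is equivalent). You also correctly flag that the real content of the last claim is the essential self-adjointness of $\I(S)$ for $S\in\mathrm{sp}_c(\mathbb{R})$, which the paper defers to the cited references rather than proving.
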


\begin{proof}
The commutation relation in \eqref{commut-1} can be easily checked, but we point that it follows from \eqref{Symplectic-form}.
For any rank one quadratics we have using \eqref{Symplectic-form}
\begin{align*}
\big[\A({\bf f})\A({\bf g})\ ,\ \A({\bf h})\big]=
-\A\Big(\big({\bf g}{\bf f}^{ T}+{\bf f}{\bf g}^{\ T}\big)\circ J{\bf h}\Big)\ .
\end{align*}
Thus for any $R$ we have
\begin{align*}
\Big[ \int dxdy\big\{\A^{T}_{x}R(x,y)\A_{y}\big\}\ ,\ \A({\bf f} )\Big]=
-\A\Big(\big(R+R^{T}\big)\circ J{\bf f}\Big)\ .
\end{align*}
Now specialize to
$R=(1/2)SJ$ with $S\in {\rm sp}$ and use $S^{T}=JSJ$ to complete the proof.

For the second formula, introduce a complex parameter $t$ and take  $|\psi \big>$  a Fock space vector with finitely many non-zero components. It is trivial to check, using \eqref{commut-1}, that all $t$ derivatives of
$e^{t \I(S)}\A({\bf f} )e^{- t\I(S)} |\psi \big>$ and $\A\big(e^{t S}\circ{\bf f}\ \big) |\psi \big>$
agree when $t=0$, and both the left hand side and right hand side are analytic if $|t|$ is sufficiently small. Thus they agree
for all $t$ complex, sufficiently small. To take $t$ large, we have to restrict ourselves to real $t$ and use
the group properties of the unitary family $e^{tS}$.

A more formal (but convincing) "proof" follows from
\begin{align*}
e^{\I(S)}{\mathcal C} e^{-\I(S)}={\mathcal C} +[\I(S),{\mathcal C}]+\frac{1}{2!}\big[\I(S),[\I(S),{\mathcal C}]\big]+\ldots
\end{align*}

\end{proof}

\begin{theorem}
The map $\I :{\rm sp}(\mathbb C) \mapsto {\rm Quad}$ defined in \eqref{Lie-isom-A} is a Lie algebra isomorphism.
Moreover for $S, R\in {\rm sp_c(\mathbb R)}$ we have the formulas
\begin{subequations}
\begin{align}
&\I\Big(\big(\partial_{t}e^{S} \big)\circ e^{-S}\Big)=\big(\partial_{t}e^{\I(S)}\big)e^{-\I(S)} \label{commut-deriv-A}
\\
&\I\Big(e^{S}\circ R\circ e^{-S}\big)=e^{\I(S)}\I(R)e^{-\I(S)}\ . \label{commut-exp-matrix}
\end{align}
\end{subequations}

\end{theorem}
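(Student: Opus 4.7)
The plan is to first establish the Lie-algebra homomorphism identity $[\mathcal{I}(L_1),\mathcal{I}(L_2)]=\mathcal{I}([L_1,L_2])$, and then deduce both exponential formulas by standard Lie-theoretic ODE arguments. For the homomorphism property I would bootstrap from the already-proved commutator $[\mathcal{I}(S),\mathcal{A}(\mathbf f)]=\mathcal{A}(S\circ\mathbf f)$ of the previous theorem. Using the Leibniz rule for commutators, for any vectors $\mathbf f,\mathbf g$,
\begin{equation*}
[\mathcal{I}(S),\mathcal{A}(\mathbf f)\mathcal{A}(\mathbf g)]=\mathcal{A}(S\circ\mathbf f)\mathcal{A}(\mathbf g)+\mathcal{A}(\mathbf f)\mathcal{A}(S\circ\mathbf g).
\end{equation*}
Since $\mathcal{I}(R)$ is, up to a $J$-twist, a superposition of such rank-one products $\mathcal{A}(\mathbf f)\mathcal{A}(\mathbf g)$, this reduces the computation of $[\mathcal{I}(S),\mathcal{I}(R)]$ to the matrix commutator $SR-RS$. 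The only subtlety is the scalar produced by normal ordering when one writes $\mathcal{I}(R)$ using products $\mathcal{A}\mathcal{A}$: the two central contributions arising from $[a_x,a^*_y]=\delta(x-y)$ cancel precisely because $S\in{\rm sp}$ satisfies $S^T=JSJ$, which makes the symmetric part of $SJ$ vanish after the trace. This cancellation is the technical heart of the argument and the step I expect to be the main obstacle; I would verify it by inserting rank-one $S,R$, carrying out the normal-ordering explicitly, and then extending by bilinearity.

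For injectivity of $\mathcal{I}$ I would read off the kernel $S(x,y)$ from matrix elements of $\mathcal{I}(S)$ against states of the form $a_x^*\vac$, $a_x^*a_y^*\vac$, recovering the four blocks $(d,k,l,-d^T)$ uniquely; surjectivity onto the space of symplectic-type quadratics is immediate from the definition. Together these give the Lie algebra isomorphism.

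For \eqref{commut-exp-matrix}, I would define
\begin{equation*}
F(t):=e^{t\mathcal{I}(S)}\mathcal{I}(R)e^{-t\mathcal{I}(S)},\qquad G(t):=\mathcal{I}\bigl(e^{tS}\circ R\circ e^{-tS}\bigr),
\end{equation*}
both acting on a Fock vector with finitely many nonzero entries. Differentiating and applying the homomorphism property,
\begin{equation*}
F'(t)=[\mathcal{I}(S),F(t)],\qquad G'(t)=\mathcal{I}([S,e^{tS}Re^{-tS}])=[\mathcal{I}(S),G(t)],
\end{equation*}
with $F(0)=G(0)=\mathcal{I}(R)$. Hence $F\equiv G$ by uniqueness for linear ODEs; both sides are analytic in $t$ for $|t|$ small, and one then extends to arbitrary real $t$ via the group property $e^{(t+s)S}=e^{tS}e^{sS}$ on both sides, using unitarity of $e^{\mathcal{I}(S)}$ for $S\in\mathrm{sp}_c(\mathbb R)$, as in the proof of \eqref{commut-exp}. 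Setting $t=1$ gives \eqref{commut-exp-matrix}.

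Finally, for \eqref{commut-deriv-A} I would use the Duhamel-type identity, valid in any Banach algebra and justified on the dense domain of finite Fock vectors,
\begin{equation*}
\bigl(\partial_t e^{A}\bigr)e^{-A}=\int_0^1 e^{sA}(\partial_t A)e^{-sA}\,ds.
\end{equation*}
Applying this with $A=\mathcal{I}(S)$ and then with $A=S$, then invoking \eqref{commut-exp-matrix} (already proved) inside the integrand with $R=\partial_t S$, yields
\begin{equation*}
\bigl(\partial_t e^{\mathcal{I}(S)}\bigr)e^{-\mathcal{I}(S)}=\int_0^1 \mathcal{I}\!\bigl(e^{sS}\circ\partial_t S\circ e^{-sS}\bigr)\,ds=\mathcal{I}\!\Bigl(\bigl(\partial_t e^{S}\bigr)\circ e^{-S}\Bigr),
\end{equation*}
where the last equality interchanges the integral with the linear map $\mathcal{I}$ and uses the matrix Duhamel identity in reverse. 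This completes the proof modulo the analytic verification of uniform convergence of the relevant series on the dense domain, which proceeds exactly as in the previous theorem.
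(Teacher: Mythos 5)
Your proposal is essentially correct, but it reaches the three claims by a genuinely different route than the paper. For the bracket identity the paper argues by brute force: it tabulates the commutators of $\Q_{xy}$, $\Q^{\ast}_{xy}$ and the symmetrized $\N_{xy}$ and matches them block by block with matrix commutators; you instead bootstrap from \eqref{commut-1} of the preceding theorem via the Leibniz rule, viewing $\I(R)=-\frac{1}{2}\int \A^{T}_{x}(R J)(x,y)\A_{y}$ as a superposition of rank-one products $\A({\bf f})\A({\bf g})$. Your route is shorter and reuses the earlier theorem, but note that the ``subtlety'' you single out is moot and your stated mechanism is off: for $S\in{\rm sp}$ one has $(SJ)^{T}=J^{T}S^{T}=-J(JSJ)=SJ$, so the symmetric part of $SJ$ certainly does not vanish; rather, no scalar appears at all because the Leibniz computation never reorders $a$ and $a^{\ast}$ — one lands directly on $-\frac{1}{2}\int\A^{T}\big((SR-RS)J\big)\A=\I([S,R])$ after using $S^{T}=JSJ$ to rewrite $RJS^{T}=-RSJ$ (if one insists on normal ordering, the traces produced by the two diagonal blocks $d$ and $-d^{T}$ cancel against each other, which is what the symmetric ordering in the definition of $\I$ encodes). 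For \eqref{commut-exp-matrix} the paper deduces the identity from \eqref{commut-exp} applied to rank-one $R={\bf f}\,{\bf g}^{T}$, using $S^{T}=JSJ$ and $Je^{JSJ}=e^{-S}J$, whereas you run an ODE-uniqueness/analyticity argument comparing $e^{t\I(S)}\I(R)e^{-t\I(S)}$ with $\I(e^{tS}Re^{-tS})$; both are legitimate at the paper's level of rigor, since the paper itself invokes the same analyticity-on-finite-vectors and group-property device, and your use of the homomorphism needs only the observation that $e^{tS}Re^{-tS}$ stays in ${\rm sp}$. For \eqref{commut-deriv-A} the paper expands $(\partial_{t}e^{S})e^{-S}=\dot S+\frac{1}{2}[S,\dot S]+\ldots$ and applies the homomorphism term by term, while you use the Duhamel identity together with the already-proved \eqref{commut-exp-matrix} applied with $sS$ and $R=\partial_{t}S$; this is a clean alternative whose justification on the dense domain of finite Fock vectors is comparable to the paper's formal-series argument. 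Your extra check of injectivity via matrix elements is harmless and in fact slightly more complete than the paper, which does not address that point.
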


\begin{proof} We point out that \eqref{commut-exp} implies \eqref{commut-exp-matrix}, at least when $R$ is rank one,
$
R:={\bf f}(x){\bf g}^{T}(y)
$.
Notice that \eqref{commut-exp} can be written
\begin{align*}
e^{\I(S)}\int dx\left\{ {\bf f}^{T}(x)\A_{x}\right\}e^{-\I(S)}=
\int dxdy\left\{{\bf f}^{ T}(x)e^{S^{T}}\A_{y}\right\}
\end{align*}
from which we have
\begin{align*}
&e^{\I(S)}\int dxdy\left\{\A^{T}_{x}{\bf f}(x){\bf g}^{T}(y)J\A_{y}\right\}e^{-\I(S)} \\
&=
e^{\I(S)}\A({\bf f})e^{-\I(S)}\ e^{\I(S)}\int dy\left\{{\bf g}^{T}(y)J\A_{y}\right\}e^{-\I(S)} \\
&=\A\big(e^{S}{\bf f}\big)\int dydz\left\{{\bf g}^{T}(y)Je^{JSJ}\A_{z}\right\} \\
&=
\int dxdy \left\{\A^{T}_{x}e^{S}Re^{-S}J\A_{y}\right\}
\end{align*}
since $S^{T}=JSJ$ and $Je^{JSJ}=e^{-S}J$.

A direct proof of the Lie algebra  isomorphism follows from the following elementary computations. Define first
\begin{align*}
\Q_{xy}:=a_{x}a_{y}\quad {\rm and}\quad \Q^{\ast}_{xy}:=a^{\ast}_{x}a^{\ast}_{y}
\quad {\rm and}\quad \N_{xy}:=\frac{1}{2}\big(a_{x}a^{\ast}_{y}+a^{\ast}_{y}a_{x}\big)\ .
\end{align*}
It is straightforward to verify,
\begin{align*}
&\big[\Q_{x_{1}x_{2}},\Q^{\ast}_{y_{1}y_{2}}\big]=
\delta(x_{1}-y_{1})\N_{x_{2}y_{2}}+\delta(x_{1}-y_{2})\N_{x_{2}y_{1}} +
\delta(x_{2}-y_{1})\N_{x_{1}y_{2}}+\delta(x_{2}-y_{2})\N_{x_{1}y_{1}}
\\
&\big[\Q_{x_{1}x_{2}},\N_{y_{1}y_{2}}\big]
=\delta(x_{1}-y_{2})\Q_{x_{2}y_{1}}+\delta(x_{2}-y_{2})\Q_{x_{1}y_{1}}
\\
&\big[\N_{x_{1}x_{2}},\Q^{\ast}_{y_{1}y_{2}}\big]=
\delta(x_{1}-y_{1})\Q^{\ast}_{x_{2}y_{2}}+\delta(x_{1}-y_{2})\Q^{\ast}_{x_{2}y_{1}}
\\
&\big[\N_{x_{1}x_{2}},\N_{y_{1}y_{2}}\big]=
\delta(x_{1}-y_{2})\N_{y_{1}x_{2}}-\delta(x_{2}-y_{1})\N_{x_{1}y_{2}}
\end{align*}
and using these we can directly verify that
\begin{align*}
&\frac{1}{2}\left[\int dx_{1}dx_{2}\big\{k(x_{1},x_{2})\Q_{x_{1}x_{2}}\big\}
\ ,\ -\int dy_{1}dy_{2}\big\{l(y_{1},y_{2})\Q^{\ast}_{y_{1}y_{2}}\big\}\right]\\
&=-\int dxdy \big\{\big(k\circ l\big)(x,y)\N_{xy}\big\}
\end{align*}
which corresponds to the relation
\begin{align*}
\left[\left(\begin{matrix}0&k\\0&0\end{matrix}\right)\ ,\ \left(\begin{matrix}0&0\\ l&0\end{matrix}\right)\right]
=\left(\begin{matrix}k\circ l&0\\ 0&-l\circ k\end{matrix}\right)\ .
\end{align*}
The other cases are checked in a similar manner.

To prove \eqref{commut-deriv-A}, expand both the left and right-hand side as
\begin{align*}
&\I\left(\big(\partial_{t}e^{S}\big)e^{-S}\right) =
\I\left(\dot{S}+\frac{1}{2}[S,\dot{S}]+\ldots\right)\\
&=\dot{\I}(S)+\frac{1}{2}\big[\I(S),\dot{\I}(S)\big] +\ldots =
\big(\partial_{t}e^{\I(S)}\big)e^{-\I(S)}\ .
\end{align*}
The proof of \eqref{commut-exp-matrix} is along the same lines. The proofs can be made rigorous by an analyticity argument
on the dense subset of vectors with finitely many non-zero components.
\end{proof}

The second part of this appendix is devoted to the connection between the present equations for the pair excitations \eqref{pair-S-1}, \eqref{pair-W-1}
with the evolution equation derived in
\cite{GMM1} and \cite{GMM2}, the idea being that the nonlinear equation derived in these references turns out to be linear when
expressed in new coordinates. The connection however is not entirely obvious.
From the identities,
\begin{align*}
 e^{2 K}&=
 \left(
\begin{matrix}
\cht &\shbt\\
\sht & \chbt
\end{matrix}
\right)\\
&= \left(
\begin{matrix}
\ch \circ \ch + \shb \circ \sh &\ch \circ \shb+ \shb \circ \chb\\
\sh \circ \ch + \chb \circ \sh &\sh\circ \shb +  \chb \circ \chb
\end{matrix}
\right)\\
&=\left(
\begin{matrix}
2 \shb \circ \sh +1 & 2 \shb \circ \chb\\
 2 \chb \circ \sh &2 \sh\circ \shb +  1
\end{matrix}
\right)
\end{align*}
and
\begin{align*}
 I
= \left(
\begin{matrix}
\ch \circ \ch - \shb \circ \sh &-\ch \circ \shb+ \shb \circ \chb\\
\sh \circ \ch - \chb \circ \sh &-\sh\circ \shb +  \chb \circ \chb
\end{matrix}
\right)
\end{align*}
we have
$\chbt =2 \sh \circ \shb +1 = 2 \chb \circ \chb -1$ and $\sht= 2 \chb \circ \sh = 2 \sh \circ \ch$.
Using the above identities we can readily derive the identities,
\begin{align}
&{\bf W}\left(\sh \circ \shb\right)={\bf S}\left(\sh\right) \circ \overline{\sh} - \sh \circ  \overline{{\bf S}\left(\sh\right)}\label{W-id-1}   \\
&={\bf W}\left(\chb \circ \chb\right)={\bf W}\left(\chb\right) \circ \chb +\chb \circ {\bf W}\left(\chb\right) \nonumber\\
&{\bf S}\left(\chb \circ \sh\right)={\bf W}\left(\chb\right) \circ \sh + \chb \circ {\bf S} \left(\sh \right)\label{S-id-1}  \\
&= {\bf S} \left( \sh \circ \ch\right)={\bf S} \left(\sh \right) \circ \ch -\sh \circ\overline{ {\bf W}\left(\chb\right)}\ .\nonumber
\end{align}
The equations above have a generic character when we consider for example
$\S\big(\overline{f}\circ g\big)=\W\big(\overline{f}\big)\circ g+\overline{f}\circ \S(g)$ etc.

Our original evolution equation, see \cite{GMM2} was
\begin{align} \label{orignlsshort}
&\left(\frac{1}{i} \sh_t  + g^T\circ \sh +\sh \circ g  - \chb\circ m\right)\circ \ch  =\\
&\left(\frac{1}{i} \chb_t + [g^T, \chb] + \sh \circ\overline m\right)\circ\sh ~,\notag
\end{align}
or its equivalent form
\begin{align} \label{newnlsshort}
&\frac{1}{i} \sh_t  + g^T\circ \sh +\sh \circ g  - \chb\circ m  \\
&-\left(\frac{1}{i} \chb_t + [g^T, \chb] + \sh \circ\overline m\right)\circ(\chb)^{-1} \circ\sh=0~.\notag
\end{align}
The second equation can be written,
\begin{align*}
\S(\sh)-\W(\chb)\circ\big(\chb\big)^{-1}\circ\sh =\chb\circ m+\sh\circ\overline{m}\circ\big(\chb\big)^{-1}\sh \ .
\end{align*}
Multiplying on the left \eqref{newnlsshort} with $\big(\chb\big)^{-1}$ we obtain
\begin{align*}
&\big(\chb\big)^{-1}\circ \S(\sh)-\big(\chb\big)^{-1}\circ\W\big(\chb\big)\circ\big(\chb\big)^{-1}\circ\sh =
\\
&m+\big(\chb\big)^{-1}\sh\circ\overline{m}\circ\big(\chb\big)^{-1}\circ\sh
\end{align*}
and using the fact that
\begin{align*}
\big(\chb\big)^{-1}\W\Big(\chb\Big)\big(\chb\big)^{-1}=-\W\Big(\big(\chb\big)^{-1}\Big)
\end{align*}
we have in view of the identities \eqref{S-id-1} the simple equation,
\begin{align}
\S(\z)=m+\z\circ\overline{m}\circ\z \ , \label{z-equation}
\end{align}
where $\z :=\big(\chb\big)^{-1}\circ\sh=\sh\circ\big(\ch\big)^{-1}$. Now from \eqref{z-equation} we can readily derive the equation
below,
\begin{align}
\W\big(\sh\circ\shb\big)=m\circ\shb\circ\chb -\sh\circ\ch\circ\overline{m} \label{pair-W-3}
\end{align}
which is the same as \eqref{pair-W-2} and it already appeared in \cite{GMM2}. To derive an equation for $\sht$ notice that we can write,
\begin{equation*}
\sht =(\chb)^{2}\z +\z(\ch)^{2}
\end{equation*}
and subsequently compute using \eqref{S-id-1},\eqref{z-equation} and  \eqref{pair-W-3} in a tedious but straightforward manner,
\begin{align*}
\S\big(\sht\big) &=\W\big((\chb)^{2}\big)\z+(\chb)^{2}\S(\z)+\S(\z)(\ch)^{2}-\z\overline{\W\big((\chb)^{2}\big)}
\\
&=\big[m\circ\shb\circ\chb -\sh\circ\ch\circ\overline{m}\big]\circ(\chb)^{-1}\circ\sh
\\
&+(\chb)^{2}\circ m+(\chb)^{2}\circ(\chb)^{-1}\circ\sh\circ \overline{m}\circ(\chb)^{-1}\sh
\\
&+m\circ(\ch)^{2}+\sh\circ(\ch)^{-1}\circ\overline{m}\circ\sh\circ (\ch)^{-1}\circ (\ch)^{2}
\\
&-\sh\circ(\ch)^{-1}\circ\big[\overline{m}\circ\sh\circ\ch -\ch\circ\shb\circ m\big]
\\
&=m\circ\cht+\chbt\circ m\ .
\end{align*}
This equation is \eqref{pair-S-2}.

\end{document}